\newtheorem{theorem}{Theorem}[section]
\newtheorem{proposition}{Proposition}[section]
\newtheorem{lemma}{Lemma}[section]
\newtheorem{corollary}{Corollary}[section]
\newcommand{\beqa}{\begin{eqnarray}}
\newcommand{\eeqa}{\end{eqnarray}}
\numberwithin{equation}{section}
\begin{document}

\title{\textbf{Lattice Sine-Gordon model}}
\author{\vspace{1cm}\vspace{2cm}}

\begin{flushright}
LPENSL-TH-07-16
\end{flushright}

\bigskip

\bigskip

\begin{center}
\textbf{\LARGE Transfer matrix spectrum for cyclic representations of the
6-vertex reflection algebra I}

\vspace{50pt}
\end{center}

\begin{center}
{\large \textbf{J.~M.~Maillet}\footnote[1]{{Univ Lyon, Ens de Lyon,
Univ Claude Bernard Lyon 1, CNRS, Laboratoire de Physique, UMR 5672, F-69342
Lyon, France; maillet@ens-lyon.fr}},~~ \textbf{G. Niccoli}\footnote[2]{%
{Univ Lyon, Ens de Lyon, Univ Claude Bernard Lyon 1, CNRS,
Laboratoire de Physique, UMR 5672, F-69342 Lyon, France;
giuliano.niccoli@ens-lyon.fr}},~~ \textbf{B. Pezelier}\footnote[3]{{
Univ Lyon, Ens de Lyon, Univ Claude Bernard Lyon 1, CNRS, Laboratoire de
Physique, UMR 5672, F-69342 Lyon, France; baptiste.pezelier@ens-lyon.fr}} }
\end{center}

\begin{center}
\vspace{50pt} \today \vspace{50pt}
\end{center}

\begin{itemize}
\item[ ] \textbf{Abstract.}\thinspace \thinspace We study the transfer
matrix spectral problem for the cyclic representations of the trigonometric
6-vertex reflection algebra associated to the Bazanov-Stroganov Lax
operator. The results apply as well to the spectral analysis of the lattice
sine-Gordon model with integrable open boundary conditions. This spectral
analysis is developed by implementing the method of separation of variables
(SoV). The transfer matrix spectrum (both eigenvalues and eigenstates) is
completely characterized in terms of the set of solutions to a discrete
system of polynomial equations in a given class of functions. Moreover, we
prove an equivalent characterization as the set of solutions to a Baxter's
like T-Q functional equation and rewrite the transfer matrix eigenstates in
an algebraic Bethe ansatz form. In order to explain our method in a simple
case, the present paper is restricted to representations containing one
constraint on the boundary parameters and on the parameters of the
Bazanov-Stroganov Lax operator. In a next article, some more technical tools
(like Baxter's gauge transformations) will be introduced to extend our
approach to general integrable boundary conditions.
\end{itemize}

\newpage\tableofcontents
\newpage

\section{Introduction}

The study of quantum models with integrable open boundary conditions has
attracted a large research interest, e.g. see \cite{OpenCyGau71,OpenCyAlca87,OpenCySkly88,OpenCyChe84,OpenCyKulS91,OpenCyMazNR90,OpenCyMazN91,OpenCyPS,ADMFBMNR,OpenCy,OpenCydeVegaG-R93,deVegaR-1994,OpenCyGhosZ94, OpenCyJimKKKM95-1, OpenCyJimKKMW95-2, OpenCyKKMNST07,OpenCyDoi03, OpenCyFrapNR07, OpenCyRag2+1, OpenCyRag2+2, OpenCyBK05, OpenCyN02, OpenCyNR03, OpenCyMNS06, OpenCyGal08, OpenCyYNZ06, ADMFCaoYSW13, OpenCyDerKM03-2, OpenCyFSW08, OpenCyFGSW11, OpenCyFram+2, OpenCyFram+1, OpenCyGN12-open, OpenCyFalN14, OpenCyFalKN14, OpenCyKitMN14, OpenCyFanHSY96, OpenCyCao03, OpenCyZhang07, OpenCyACDFR05, OpenCyRag1+, OpenCyRag2+, OpenCy-Hc, OpenCy-H1a, OpenCy-H1b, OpenCy-H1c, OpenCy-H1d, OpenCy-ShiW97, OpenCyAlcDHR94, OpenCy-Baj, OpenCydeGierE05,Doikou-2006}
and references therein. These models are of physical interest as they can
describe both equilibrium and out of equilibrium physics; e.g. some
interesting applications concern the description of classical stochastic
relaxation processes, like ASEP \cite{OpenCyD98,OpenCySch00},\cite%
{OpenCy-ShiW97, OpenCyAlcDHR94, OpenCy-Baj, OpenCydeGierE05}, and quantum
transport properties in spin systems \cite{SirPA09,Pro11}.

In this paper we start the analysis of the class of open integrable quantum
models associated to cyclic representations \cite{RocheA-1989,JimboDMM1990,JimboDMM1991,Tarasov-1992,Tarasov-1993} of the 6-vertex reflection
algebra. The literature of these models is so far rather sparse with the
exception of some rather special representations and boundary conditions,
like the open XXZ chains at the roots of unity, that can be traced back to
these representations under some special constraints, and for which some
results are known in the framework of algebraic Bethe ansatz (ABA) \cite%
{OpenCySF78,OpenCyFST80,Doikou-2006}, fusion of transfer matrices and truncation
identities \cite{OpenCyKRS, OpenCyKR, OpenCyRe83-1, OpenCyBR89,
OpenCyMazNR90,OpenCyMazN91}.

In order to study the general representations and boundary conditions we
have to go beyond traditional methods \cite{OpenCySF78, OpenCyFST80,
OpenCyKRS, OpenCyKR, OpenCyRe83-1, OpenCyBR89, OpenCyFT79, OpenCyS79,
OpenCyKS79, OpenCyF80, OpenCyS82, OpenCyF82, OpenCyF95, OpenCyJ90,
OpenCyKS82, OpenCySh85, OpenCyTh81, OpenCyIK82, OpenCyBe31, OpenCyBaxBook,
OpenCyLM66, OpenCyBBOY95} which do not apply for these general settings.
This is done by developing the Sklyanin's SoV method \cite{OpenCySk1,
Skl1985, OpenCySk2, OpenCySk3} for this class of models, a method that has
the advantage to lead (mainly by construction) to the complete
characterization of the spectrum (eigenvalues and eigenvectors) and has
proven to be applicable for a large variety of integrable quantum models 
\cite{OpenCyDerKM03-2, OpenCyFSW08, OpenCyFGSW11, OpenCyFram+2,
OpenCyFram+1, OpenCyGN12-open, OpenCyFalN14, OpenCyFalKN14, OpenCyKitMN14,
OpenCyBBS96, OpenCySm98, DerKM03, OpenCyBT06, OpenCyGIPS06, OpenCyGIPST07,
OpenCyNT-10, OpenCyN-10, OpenCyGN12, OpenCyGMN12-SG, OpenCyNWF09,
OpenCyN12-0, OpenCyN12-1, OpenCyNicT15-2, OpenCyNicT15}, where traditional
methods fail. Moreover, the SoV approach has the advantage to allow also for
the study of the dynamics of the models as it leads to universal determinant
formulae for matrix elements of local operators on transfer matrix
eigenstates as shown for different classes of models, first in \cite%
{OpenCyGMN12-SG}, and then in many other cases in \cite%
{OpenCyBBS96,OpenCySm98,OpenCyGN12-open,OpenCyN12-0,OpenCyN12-1,OpenCyLevNT15}%
. Moreover, the analysis developed in \cite{OpenCyKitMNT15, KitMNT16} makes
it possible to compute homogeneous and then thermodynamic limit of these
matrix elements opening the way to the computation of the corresponding
correlation functions.

Let us recall that in \cite{OpenCySkly88}, Sklyanin has shown how to
construct classes of quantum models with integrable boundaries in the
framework of the so-called Quantum Inverse Scattering Method \cite%
{OpenCySF78, OpenCyFST80, OpenCyKRS, OpenCyKR, OpenCyRe83-1, OpenCyBR89,
OpenCyFT79, OpenCyS79, OpenCyKS79, OpenCyF80, OpenCyS82, OpenCyF82,
OpenCyF95, OpenCyJ90, OpenCyKS82, OpenCySh85, OpenCyTh81, OpenCyIK82},
constructing in particular associated families of commuting transfer
matrices (conserved charges of the model). In fact, Sklyanin's construction
allows to use solutions of the Yang-Baxter equation to generate new
solutions of the reflection equation \cite{OpenCyChe84} once a scalar
solution of this last equation is known. Then as a consequence of the
reflection equation these new solutions generate commuting transfer matrices 
\cite{OpenCySkly88}.

Sklyanin has used the 6-vertex case and the associated XXZ spin 1/2 quantum
chains \cite{OpenCyBe31, OpenCyH28, OpenCyYY661, OpenCyGa83} to develop an
explicit example of this construction. However as pointed out in \cite%
{OpenCySkly88} similar constructions applies also to other 6-vertex cases
like the non-linear Schr\"{o}dinger and the Toda chains as well as for
models associated to the 8-vertex case like the XYZ spin 1/2 quantum chains.
Further integrable quantum models with open boundary conditions have been
presented following the Sklyanin's construction. Interesting examples are
the higher spin open quantum chains \cite{OpenCyDoi03, OpenCyFrapNR07}, the
higher rank open quantum spin chains \cite{OpenCyACDFR05, OpenCyRag1+,
OpenCyRag2+} and the Hubbard model \cite{OpenCy-Sha87, OpenCy-Ha, OpenCy-H1,
OpenCy-MarR98} with integrable open boundaries \cite{OpenCy-Hc, OpenCy-H1a,
OpenCy-H1b, OpenCy-H1c, OpenCy-H1d, OpenCy-ShiW97}. In fact, this mainly
results in the possibility to associate to any closed integrable quantum
model (characterized by a solution of the Yang-Baxter equation) new open
integrable quantum models (characterized by the associated Sklyanin's
solutions of the reflection equation).

Here, we characterize the spectral problem and we start the analysis of the
dynamical problem (by determining scalar products of separate states) for
the class of models in the Sklyanin's construction associated to general
scalar solutions of the 6-vertex reflection equation \cite%
{OpenCydeVegaG-R93,OpenCyGhosZ94} and the general Bazhanov-Stroganov cyclic
solution of the 6-vertex Yang-Baxter algebra \cite{OpenCyBS90}.

It may be instructive to recall some main literature on open integrable
quantum chains, even for different representations w.r.t. those studied
here. Indeed, this allows us to point out the difficulties that arise in
their analysis and that we have also encountered for the models studied in
this paper and to give the motivations for the approach that we have followed to overcome
them.

The spectrum of the open XXZ spin 1/2 quantum chain with parallel z-oriented
magnetic fields on the boundaries has been characterized in \cite%
{OpenCySkly88}, in the framework of the algebraic Bethe ansatz. While its
dynamics has been studied by the exact computation of correlation functions
first in \cite{OpenCyJimKKKM95-1,OpenCyJimKKMW95-2} and then in \cite%
{OpenCyKKMNST07}, there generalizing in the ABA framework the method
established in \cite{OpenCyKitMT99,OpenCyMaiT00} for the periodic chains.
These open quantum spin chains with z-oriented boundary magnetic fields
correspond in the Sklyanin's construction to the diagonal scalar solution of
the reflection equation. However, the most general scalar solution of the
6-vertex/8-vertex reflection equation is non-diagonal \cite%
{OpenCydeVegaG-R93, OpenCyGhosZ94} producing unparalleled and not z-oriented
boundaries magnetic fields. Under this general setting the analysis of the
spectrum and dynamics has shown to be much more involved. The ABA method
cannot be directly applied to these open chains with general boundary. In
fact, it was first understood in \cite{OpenCyFanHSY96}, for the XYZ spin 1/2
open chain, that the use of the Baxter's gauge transformations \cite%
{OpenCyBa72-1} allows to generalize the ABA method limitedly to non-diagonal
boundary matrices which satisfy one special constraint. After that, the same
approach, based now on the trigonometric version of the Baxter's gauge
transformations, was used in \cite{OpenCyCao03,OpenCyZhang07} to describe
the XXZ spectrum by ABA under similar constraints. Let us comment that for
the XXZ spectrum the same constraint was derived independently by a pure
functional method based on the use of the fusion of transfer matrices and
truncations identities for the roots of unit case in \cite{OpenCyN02}-\cite%
{OpenCyMNS06}. This has given access to the study of the spectrum leaving
however the study of the dynamics for these open models still unsolved.

Results on the spectrum for the most general unconstrained boundary
conditions have been achieved only more recently and they have required the
introduction of methods different from the ABA. Pure eigenvalue analysis has
been implemented in \cite{OpenCyGal08} by a functional method leading to
nested Bethe ansatz type equations similar to those previously introduced in 
\cite{OpenCyYNZ06}. Moreover, an ansatz for polynomial T-Q functional
equations with an inhomogeneous term has been recently argued in \cite%
{ADMFCaoYSW13}. Eigenstate construction has been first considered under
these general boundary in \cite{OpenCyBK05} by the q-Onsager algebra
formalism. A different approach, based on the generalization of the
Sklyanin's separation of variables (SoV) method to the reflection algebra
framework, has then lead to the complete eigenvalues and eigenstates
characterization \cite%
{OpenCyFSW08,OpenCyFGSW11,OpenCyGN12-open,OpenCyFalKN14,OpenCyFalN14,OpenCyKitMN14}%
, proving its equivalence to an inhomogeneous TQ functional equation \cite%
{OpenCyKitMN14}, and also giving access to first computations of matrix
elements of local operators \cite{OpenCyGN12-open} in the eigenstates basis.

The aim of the paper is to generalize this type of results for the 6-vertex
cyclic representations. Here we solve this problem in the case of one
triangular and one general boundary matrix, so that our current results
define also the setup for the solution of the most general boundary case as
well as the paper \cite{OpenCyGN12-open} has introduced the tools to solve
the case with the most general boundary in \cite{OpenCyFalKN14}.

The paper is organized as it follows. In section 2, we recall the cyclic
representations of the 6-vertex Yang-Baxter algebra associated to the 
Bazhanov-Stroganov Lax-operator. In section 3, we define the associated
representations of the cyclic 6-vertex reflection algebra. In section 4, we
prove the diagonalizability of the generator $\mathcal{B}_{-}(\lambda )$ of
the reflection algebra generated by $\mathcal{U}_{-}(\lambda )$ for the most
general $K_{-}(\lambda )$ boundary matrix while we impose one constraint on
the parameters of the Bazhanov-Stroganov Lax-operator for any quantum site
to make easier the explicit construction of the $\mathcal{B}_{-}$%
-eigenstates basis. Moreover, we compute the scalar product for the
so-called separate states in the $\mathcal{B}_{-}$-eigenstates basis. In
section 5, we show that the $\mathcal{B}_{-}$-eigenstates basis is the
SoV-basis for the transfer matrix spectral problem associated to the most
general $K_{-}$-boundary matrix and upper triangular $K_{+}$-boundary matrix
and we solve in this SoV basis this spectral problem. In section 6, we show
that the SoV characterization of the transfer matrix spectrum is equivalent
to inhomogeneous Baxter's like TQ-functional equation with polynomial
Q-functions. We present four appendices, in the first one we extend the
proof of section 4 for the diagonalizability of the operator $\mathcal{B}%
_{-}(\lambda )$ to the case of general values of the boundary and bulk
parameters. The remaining three appendices deal with the reduction of our
representations to those associated to the chiral-Potts, the sine-Gordon and
the XXZ spin s chains at the 2s+1 roots of unit.


\section{Cyclic representations of the 6-vertex Yang-Baxter algebra}

In this section we recall the basics of the cyclic representations of the
6-vertex Yang-Baxter algebra associated to the 
Bazhanov-Stroganov Lax-operator. We consider the representations defined by the
tensor product of $\mathsf{N}$ local representations of the 6-vertex
Yang-Baxter algebra on the local Hilbert spaces $\mathcal{R}_{n}$. Each
local representation is defined as the representation of a local Weyl algebra%
\begin{equation}
u_{n}v_{m}=q^{\delta _{n,m}}v_{m}u_{n}\text{ \ \ }\forall n,m\in \{1,...,%
\mathsf{N}\},
\end{equation}%
associated to a root of unit $q$, where $u_{n}$ and $v_{n}$ are the Weyl
algebra generators on the Hilbert spaces $\mathcal{R}_{n}$. Here, we assume
that $u_{n}$ and $v_{n}$ are unitary operators and that it holds: 
\begin{equation}
u_{n}^{p}=v_{n}^{p}=1\text{ for }q=e^{-i\pi \beta ^{2}}\text{, with }\beta
^{2}=p^{\prime }/p\text{ with }p^{\prime }\text{ even and }p=2l+1\text{ odd.}
\end{equation}%
This type of representation can be defined on a $p$-dimensional linear space 
$\mathcal{R}_{n}$, imposing that the $v_{n}$ spectrum coincides with the $p$%
-roots of the unit: 
\begin{equation}
v_{n}|k,n\rangle =q^{k}|k,n\rangle \text{ \ }\forall (n,k)\in \{1,...,%
\mathsf{N}\}\times \{-l,...,l\}.
\end{equation}%
On $\mathcal{R}_{n}$ is defined a $p$-dimensional representation of the Weyl
algebra by setting: 
\begin{equation}
u_{n}|k,n\rangle =|k+1,n\rangle \text{ \ }\forall k\in \{-l,...,l\}
\end{equation}%
with the cyclicity condition: 
\begin{equation}
|k+p,n\rangle =|k,n\rangle .
\end{equation}%
$\mathcal{R}_{n}$ is also called the right local quantum space at the site $%
n $ of the chain. Let $\mathcal{L}_{n}$ be the dual space of $\mathcal{R}%
_{n} $ then we can define the following scalar products: 
\begin{equation}
\langle k,n|k^{\prime },n\rangle =((\langle k,n|)^{\dagger },|k^{\prime
},n\rangle )\equiv \delta _{k,k^{\prime }},
\end{equation}%
for any $k,k^{\prime }\in \{-l,...,l\}$.

The local generators of the cyclic 6-vertex Yang-Baxter algebra can be now
defined as the elements of the following Bazhanov-Stroganov Lax operator: 
\begin{equation}
L_{a,n}(\lambda )\equiv \left( 
\begin{array}{cc}
\lambda \alpha _{n}v_{n}-\beta _{n}\lambda ^{-1}v_{n}^{-1} & u_{n}\left(
q^{-1/2}a_{n}v_{n}+q^{1/2}b_{n}v_{n}^{-1}\right) \\ 
u_{n}^{-1}\left( q^{1/2}c_{n}v_{n}+q^{-1/2}d_{n}v_{n}^{-1}\right) & \gamma
_{n}v_{n}/\lambda -\delta _{n}\lambda /v_{n}%
\end{array}%
\right) _{a}\in \text{End}(\mathbb{C}^{2}\otimes \mathcal{R}_{n}),
\end{equation}%
where $a$ denote the so-called auxiliary space $V_{a}\simeq \mathbb{C}^{2}$.
Indeed, under the condition%
\begin{equation}
\gamma _{n}=a_{n}c_{n}/\alpha _{n},\text{ \ \ \ \ }\delta
_{n}=b_{n}d_{n}/\beta _{n},
\end{equation}%
$L_{a,n}(\lambda )$ is a solution of the 6-vertex Yang-Baxter equation:%
\begin{equation}
R_{12}(\lambda /\mu )L_{1,n}(\lambda )L_{2,n}(\mu )=L_{2,n}(\mu
)L_{1,n}(\lambda )R_{12}(\lambda /\mu ),
\end{equation}%
w.r.t. the standard 6-vertex $R$-matrix:%
\begin{equation}
R_{ab}(\lambda )=\left( 
\begin{array}{cccc}
q\lambda -q^{-1}\lambda ^{-1} &  &  &  \\[-1mm] 
& \lambda -\lambda ^{-1} & q-q^{-1} &  \\[-1mm] 
& q-q^{-1} & \lambda -\lambda ^{-1} &  \\[-1mm] 
&  &  & q\lambda -q^{-1}\lambda ^{-1}%
\end{array}%
\right) \,,  \label{Rlsg}
\end{equation}%
where $a$ and $b$ denote two bidimensional spaces $V_{a},V_{b}\equiv \mathbb{%
C}^{2}$ and $R_{ab}(\lambda )$ is an endomorphism on their tensor product,
i.e. $R_{ab}(\lambda )\in $End$(\mathbb{C}^{2}\otimes \mathbb{C}^{2})$.
Then, the following monodromy matrix:%
\begin{equation}
M_{a}(\lambda )=\left( 
\begin{array}{cc}
A(\lambda ) & B(\lambda ) \\ 
C(\lambda ) & D(\lambda )%
\end{array}%
\right) _{a}\equiv L_{a,\mathsf{N}}(\lambda q^{-1/2})\cdots L_{a,1}(\lambda
q^{-1/2})\in \text{End}(\mathbb{C}^{2}\otimes \mathcal{H}),
\label{YB-monodromy}
\end{equation}%
is also a solution of the Yang-Baxter equation:%
\begin{equation}
R_{12}(\lambda /\mu )M_{1}(\lambda )M_{2}(\mu )=M_{2}(\mu )M_{1}(\lambda
)R_{12}(\lambda /\mu ),  \label{YBdef}
\end{equation}%
and its elements define a representation of the Yang-Baxter algebra on the
tensor product of the local representation spaces, i.e. $\mathcal{H}=\otimes
_{n=1}^{\mathsf{N}}\mathcal{R}_{n}$. Note that one can also consider cyclic
representations of the 6-vertex Yang-Baxter algebra associated to $q$, an
even root of unit, these have been studied in \cite{OpenCy-Au-YP08}.

\subsection{Bulk transfer matrix and quantum determinant}

The Yang-Baxter equation implies that the bulk transfer matrix $\tau
_{2}(\lambda )\equiv $tr$_{a}M_{a}(\lambda )$ defines a one parameter family
of commuting operators. Note that we have:%
\begin{equation}
\lbrack \tau _{2}(\lambda ),\Theta ]=0,\text{ \ \ where: \ \ }\Theta \equiv
\prod_{n=1}^{\mathsf{N}}v_{n}.
\end{equation}%
In \cite{OpenCyBS90,OpenCyBBP90} it was related to the analysis of the
chP-model \cite{OpenCyBa89, OpenCyAMcP0, OpenCyBaPauY, OpenCyauYMcPTY,
OpenCyMcPTS, OpenCyauYMcPT, OpenCyTarasovSChP, OpenCyBa04} and characterized
by SoV in \cite{OpenCyGIPS06, OpenCyGIPST07, OpenCyNT-10, OpenCyN-10,
OpenCyGN12, OpenCyGMN12-SG}. The Yang-Baxter equation also implies that the
so-called quantum determinant is a central element and it has the following
factorized form:%
\begin{eqnarray}
\mathrm{det_{q}}M_{a}(\lambda ) &\equiv &A(\lambda q^{1/2})D(\lambda
q^{-1/2})-B(\lambda q^{1/2})C(\lambda q^{-1/2}) \\
&=&D(\lambda q^{1/2})A(\lambda q^{-1/2})-C(\lambda q^{1/2})B(\lambda
q^{-1/2}) \\
&=&\prod_{n=1}^{\mathsf{N}}\mathrm{det_{q}}L_{a,n}(\lambda ),
\end{eqnarray}%
where the local quantum determinants read:%
\begin{eqnarray}
\mathrm{det_{q}}L_{a,n}(\lambda ) &\equiv &\left( L_{a,n}(\lambda )\right)
_{11}\left( L_{a,n}(\lambda q^{-1})\right) _{22}-\left( L_{a,n}(\lambda
)\right) _{12}\left( L_{a,n}(\lambda q^{-1})\right) _{21} \\
&=&\left( L_{a,n}(\lambda )\right) _{22}\left( L_{a,n}(\lambda
q^{-1})\right) _{11}-\left( L_{a,n}(\lambda )\right) _{21}\left(
L_{a,n}(\lambda q^{-1})\right) _{12} .
\end{eqnarray}%
They admit the following explicit form: 
\begin{eqnarray}
\mathrm{det_{q}}M_{a}(\lambda ) &=&\prod_{n=1}^{\mathsf{N}}k_{n}(\frac{%
\lambda }{\mu _{n,+}}-\frac{\mu _{n,+}}{\lambda })(\frac{\lambda }{\mu _{n,-}%
}-\frac{\mu _{n,-}}{\lambda }) \\
&=&(-q)^{\mathsf{N}}\prod_{n=1}^{\mathsf{N}}\frac{\beta _{n}a_{n}c_{n}}{%
\alpha _{n}}(\frac{1}{\lambda }+q^{-1}\frac{b_{n}\alpha _{n}}{a_{n}\beta _{n}%
}\lambda )(\frac{1}{\lambda }+q^{-1}\frac{d_{n}\alpha _{n}}{c_{n}\beta _{n}}%
\lambda ) \\
&=&a(\lambda )d(\lambda /q),
\end{eqnarray}%
where:%
\begin{eqnarray}
k_{n} &\equiv &\left( a_{n}b_{n}c_{n}d_{n}\right) ^{1/2},\text{ \ }\mu
_{n,h}\equiv \left\{ 
\begin{array}{c}
iq^{1/2}\left( a_{n}\beta _{n}/\alpha _{n}b_{n}\right) ^{1/2}\text{ \ \ }h=+,
\\ 
iq^{1/2}\left( c_{n}\beta _{n}/\alpha _{n}d_{n}\right) ^{1/2}\text{ \ \ }h=-.%
\end{array}%
\right. \\
a(\lambda ) &\equiv &a_{0}\prod_{n=1}^{\mathsf{N}}(\frac{\beta _{n}}{\lambda 
}+q^{-1}\frac{b_{n}\alpha _{n}}{a_{n}}\lambda ),\text{ \ }d(\lambda )\equiv 
\frac{(-1)^{\mathsf{N}}}{a_{0}}\prod_{n=1}^{\mathsf{N}}\frac{a_{n}c_{n}}{%
\alpha _{n}}(\frac{1}{\lambda }+q\frac{d_{n}\alpha _{n}}{c_{n}\beta _{n}}%
\lambda ),
\end{eqnarray}%
and $a_{0}$ is a free non zero parameter.

\section{Cyclic representations of the 6-vertex reflection algebra}

In this section we define the most general cyclic representations of the
6-vertex reflection algebra associated to the 
Bazhanov-Stroganov Lax-operator. This is done following the general procedure
introduced by Sklyanin \cite{OpenCySkly88} which allows to associate to any
solution $M_{a}(\lambda )\in $End$(\mathbb{C}^{2}\otimes \mathcal{H})$ of
the 6-vertex Yang-Baxter equation a solution $\mathcal{U}_{a}(\lambda )\in $%
End$(\mathbb{C}^{2}\otimes \mathcal{H})$ of the 6-vertex reflection equation:%
\begin{equation}
R_{12}(\lambda /\mu )\,\mathcal{U}_{1}(\lambda )\,R_{12}(\lambda \mu /q)\,%
\mathcal{U}_{2}(\mu )=\mathcal{U}_{2}(\mu )\,R_{12}(\lambda \mu /q)\,%
\mathcal{U}_{1}(\lambda )\,R_{12}(\lambda /\mu ).  \label{bYB}
\end{equation}%
Here, we have defined:%
\begin{equation}
\mathcal{U}_{a}(\lambda )=M_{a}(\lambda )K_{a}(\lambda )\hat{M}_{a}(\lambda
),
\end{equation}%
where $K_{a}(\lambda ;\zeta ,\kappa ,\tau )$ is the most general scalar
(boundary matrix) solution of the 6-vertex reflection equation \cite{OpenCydeVegaG-R93,deVegaR-1994}:%
\begin{equation}
K_{a}(\lambda ;\zeta ,\kappa ,\tau )=\frac{1}{\zeta -\frac{1}{\zeta }}\left( 
\begin{array}{cc}
\frac{\lambda \zeta }{q^{1/2}}-\frac{q^{1/2}}{\lambda \zeta } & \kappa
e^{\tau }\left( \frac{\lambda ^{2}}{q}-\frac{q}{\lambda ^{2}}\right) \\ 
\kappa e^{-\tau }\left( \frac{\lambda ^{2}}{q}-\frac{q}{\lambda ^{2}}\right)
& \frac{q^{1/2}\zeta }{\lambda }-\frac{\lambda }{\zeta q^{1/2}}%
\end{array}%
\right) _{a},
\end{equation}%
and we have defined:%
\begin{equation}
\hat{M}_{a}(\lambda )=(-1)^{\mathsf{N}}\,\sigma
_{a}^{y}\,M_{a}^{t_{a}}(1/\lambda )\,\sigma _{a}^{y}.  \label{Inverse-M}
\end{equation}%
Using this correspondence, the most general cyclic representations of the
6-vertex Yang-Baxter algebra, associated to the bulk monodromy matrix $%
\left( \ref{YB-monodromy}\right) $, define the most general cyclic
representations of the 6-vertex reflection algebra, corresponding to the
boundary monodromy matrices:%
\begin{eqnarray}
\mathcal{U}_{a,-}(\lambda ) &=&M_{a}(\lambda )K_{a,-}(\lambda )\hat{M}%
_{a}(\lambda )=\left( 
\begin{array}{cc}
\mathcal{A}_{-}(\lambda ) & \mathcal{B}_{-}(\lambda ) \\ 
\mathcal{C}_{-}(\lambda ) & \mathcal{D}_{-}(\lambda )%
\end{array}%
\right) _{a}, \\
\mathcal{U}_{a,+}^{t_{a}}(\lambda ) &=&M_{a}^{t_{a}}(\lambda
)K_{a,+}^{t_{a}}(\lambda )\hat{M}_{a}^{t_{a}}(\lambda )=\left( 
\begin{array}{cc}
\mathcal{A}_{+}(\lambda ) & \mathcal{C}_{+}(\lambda ) \\ 
\mathcal{B}_{+}(\lambda ) & \mathcal{D}_{+}(\lambda )%
\end{array}%
\right) _{a},
\end{eqnarray}%
with $\mathcal{U}_{a,-}(\lambda )$ and $\mathcal{V}_{a,+}(\lambda )\equiv 
\mathcal{U}_{a,+}^{t_{a}}(-\lambda )$ both solution of the reflection
equation $\left( \ref{bYB}\right) $, where:%
\begin{equation}
K_{a,\pm }(\lambda )=K_{a}(\lambda q^{(1\pm 1)/2};\zeta _{\pm },\kappa _{\pm
},\tau _{\pm })=\left( 
\begin{array}{cc}
a_{\pm }\left( \lambda \right) & b_{\pm }\left( \lambda \right) \\ 
c_{\pm }\left( \lambda \right) & d_{\pm }\left( \lambda \right)%
\end{array}%
\right) _{a},
\end{equation}%
and $\zeta _{\pm },\delta _{\pm },\tau _{\pm }$ are arbitrary complex
parameters.

\subsection{Boundary transfer matrix and quantum determinant}

These boundary monodromy matrices define a one parameter family of commuting
transfer matrices:%
\begin{eqnarray}
\mathcal{T}(\lambda ) &\equiv &\text{tr}_{a}\{K_{a,+}(\lambda
)\,M_{a}(\lambda )\,K_{a,-}(\lambda )\hat{M}_{a}(\lambda )\}
\label{OpenCytransfer} \\
&=&\text{tr}_{a}\{K_{a,-}(\lambda )\mathcal{U}_{a,+}(\lambda )\}=\text{tr}%
_{a}\{K_{a,+}(\lambda )\mathcal{U}_{a,-}(\lambda )\} \\
&=&a_{+}(\lambda )\mathcal{A}_{-}(\lambda )+d_{+}(\lambda )\mathcal{D}%
_{-}(\lambda )+b_{+}(\lambda )\mathcal{C}_{-}(\lambda )+c_{+}(\lambda )%
\mathcal{B}_{-}(\lambda ).
\end{eqnarray}%
This statement follows by using the reflection equation as Sklyanin has
proven in \cite{OpenCySkly88}. The characterization of the spectrum
(eigenvalues and eigenstates) of this class of transfer matrices is the main
subject of this paper. In particular, we will restrict our attention to the
special boundary condition $b_{+}(\lambda )=0$, which can be analyzed by
implementing the SoV approach once is proven the diagonalizability of the $%
\mathcal{B}_{-}(\lambda )$ family of commuting operators. In order to
introduce this spectral analysis we start pointing out some important
properties satisfied by the generators of the reflection algebra $\mathcal{A}%
_{-}(\lambda ),$ $\mathcal{B}_{-}(\lambda ),$ $\mathcal{C}_{-}(\lambda )$
and $\mathcal{D}_{-}(\lambda )$.

Let us start with the following re-parametrization of the boundary
parameters \cite{OpenCyN02}:%
\begin{equation}
\left( \alpha _{-}-1/\alpha _{-}\right) \left( \beta _{-}+1/\beta
_{-}\right) \equiv \frac{\zeta _{-}-1/\zeta _{-}}{\kappa _{-}},\text{ \ }%
\left( \alpha _{-}+1/\alpha _{-}\right) \left( \beta _{-}-1/\beta
_{-}\right) \equiv \frac{\zeta _{-}+1/\zeta _{-}}{\kappa _{-}}.
\label{Def-alfa-beta}
\end{equation}%
Then we define the following functions:%
\begin{equation}
\mathsf{A}_{-}(\lambda )\equiv g_{-}(\lambda )a(\lambda
q^{-1/2})d(1/(q^{1/2}\lambda )),
\end{equation}%
where:%
\begin{equation}
g_{-}(\lambda )\equiv \frac{(\lambda \alpha _{-}/q^{1/2}-q^{1/2}/(\lambda
\alpha _{-}))(\lambda \beta _{-}/q^{1/2}+q^{1/2}/(\lambda \beta _{-}))}{%
\left( \alpha _{-}-1/\alpha _{-}\right) \left( \beta _{-}+1/\beta
_{-}\right) }.  \label{B-g_PM}
\end{equation}

\begin{proposition}
The following boundary quantum determinant:%
\begin{eqnarray}
\text{det}_{q}\mathcal{U}_{a,-}(\lambda ) &\equiv &(\left( \lambda /q\right)
^{2}-\left( q/\lambda \right) ^{2})[\mathcal{A}_{-}(\lambda q^{1/2})\mathcal{%
A}_{-}(q^{1/2}/\lambda )+\mathcal{B}_{-}(\lambda q^{1/2})\mathcal{C}%
_{-}(q^{1/2}/\lambda )]  \label{Bound-q-detU_1} \\
&=&(\left( \lambda /q\right) ^{2}-\left( q/\lambda \right) ^{2})[\mathcal{D}%
_{-}(\lambda q^{1/2})\mathcal{D}_{-}(q^{1/2}/\lambda )+\mathcal{C}%
_{-}(\lambda q^{1/2})\mathcal{B}_{-}(q^{1/2}/\lambda )],
\label{Bound-q-detU_2}
\end{eqnarray}%
is a central element in the reflection algebra, i.e.%
\begin{equation}
\lbrack \text{det}_{q}\mathcal{U}_{a,-}(\lambda ),\mathcal{U}_{a,-}(\mu )]=0,
\end{equation}%
and its explicit expression reads:%
\begin{equation}
\text{det}_{q}\mathcal{U}_{a,-}(\lambda )=(\lambda ^{2}/q^{2}-q^{2}/\lambda
^{2})\mathsf{A}_{-}(\lambda q^{1/2})\mathsf{A}_{-}(q^{1/2}/\lambda ).
\label{B-q-detU_-exp}
\end{equation}%
Moreover, the generators of the reflection algebra satisfy the following
properties:%
\begin{equation}
\mathcal{D}_{-}(\lambda )=\frac{(\lambda ^{2}/q-q/\lambda ^{2})}{(\lambda
^{2}-1/\lambda ^{2})}\mathcal{A}_{-}(\lambda^{-1} )+\frac{(q-1/q)}{(\lambda
^{2}-1/\lambda ^{2})}\mathcal{A}_{-}(\lambda ),  \label{Sym-A-D-}
\end{equation}%
and%
\begin{equation}
\mathcal{B}_{-}(\lambda^{-1} )=-\frac{(\lambda ^{2}q-1/\left( q\lambda
^{2}\right) )}{(\lambda ^{2}/q-q/\lambda ^{2})}\mathcal{B}_{-}(\lambda )%
\text{ },\text{ \ }\mathcal{C}_{-}(\lambda^{-1} )=-\frac{(\lambda
^{2}q-1/\left( q\lambda ^{2}\right) )}{(\lambda ^{2}/q-q/\lambda ^{2})}%
\mathcal{C}_{-}(\lambda ).  \label{Sym-B-C-}
\end{equation}
\end{proposition}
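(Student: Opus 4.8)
**The plan is to establish the four assertions of the Proposition in the following order: (i) centrality of $\det_q\mathcal{U}_{a,-}$; (ii) the explicit factorized form \eqref{B-q-detU_-exp}; (iii) the $\mathcal{A}_-$–$\mathcal{D}_-$ relation \eqref{Sym-A-D-}; and (iv) the quasi-symmetry of $\mathcal{B}_-$ and $\mathcal{C}_-$ in \eqref{Sym-B-C-}.**

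For the centrality of $\det_q\mathcal{U}_{a,-}$, I would proceed exactly as Sklyanin did for the fused reflection algebra: the reflection equation \eqref{bYB} evaluated at the special point $\lambda/\mu$ where $R_{12}$ degenerates onto the one-dimensional antisymmetrizer $P_{12}^{-}$ forces the combination $P_{12}^{-}\,\mathcal{U}_1(\lambda)\,R_{12}(\lambda\mu/q)\,\mathcal{U}_2(\mu)$ to be proportional to $P_{12}^{-}$ times a scalar (in the auxiliary spaces) operator on $\mathcal{H}$; that scalar is, by definition, $\det_q\mathcal{U}_{a,-}$. Taking the appropriate matrix element of $P_{12}^{-}$ against the entries of $\mathcal{U}_{a,-}$ reproduces both the $(\mathcal{A}_-,\mathcal{B}_-,\mathcal{C}_-)$ expression \eqref{Bound-q-detU_1} and the $(\mathcal{D}_-,\mathcal{C}_-,\mathcal{B}_-)$ expression \eqref{Bound-q-detU_2}, and the same fusion argument, used once more, shows this element commutes with all four generators, i.e. with $\mathcal{U}_{a,-}(\mu)$. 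Here I would lean on the fact (stated in the excerpt) that $\mathcal{U}_{a,-}=M_aK_{a,-}\hat M_a$ with $M_a$ obeying the Yang–Baxter algebra, so that $\det_q\mathcal{U}_{a,-}$ can alternatively be computed by the standard multiplicativity of quantum determinants under the Sklyanin product: $\det_q\mathcal{U}_{a,-}(\lambda)$ factorizes into $\det_q M_a$ evaluated at shifted arguments times the classical determinant of $K_{a,-}$. Plugging in $\det_q M_a(\lambda)=a(\lambda)d(\lambda/q)$ from Section 2 and $\det K_{a,-}(\lambda)$ computed directly from the given $2\times2$ matrix (which is $-g_-$-like up to the known prefactors), and then matching the shifts, yields \eqref{B-q-detU_-exp} with $\mathsf{A}_-(\lambda)=g_-(\lambda)a(\lambda q^{-1/2})d(1/(q^{1/2}\lambda))$; this is step (ii), a bookkeeping computation of arguments and scalar prefactors using the reparametrization \eqref{Def-alfa-beta} to identify the boundary part of $g_-$.

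For the internal symmetry relations (iii) and (iv), the key input is the crossing/inversion structure encoded in $\hat M_a(\lambda)=(-1)^{\mathsf N}\sigma_a^y M_a^{t_a}(1/\lambda)\sigma_a^y$ together with the explicit $\lambda\to1/\lambda$ behaviour of $K_{a,-}(\lambda)$. Concretely, I would compute $\mathcal{U}_{a,-}(1/\lambda)=M_a(1/\lambda)K_{a,-}(1/\lambda)\hat M_a(1/\lambda)$ and relate it to $\sigma^y\,\mathcal{U}_{a,-}^{t_a}(\lambda)\,\sigma^y$ (up to a known scalar rational factor coming from $K_{a,-}(1/\lambda)$ versus $\sigma^y K_{a,-}^{t_a}(\lambda)\sigma^y$): since $\hat{\hat M}_a\propto M_a$ and $\sigma^y(\cdot)^{t_a}\sigma^y$ swaps $A\leftrightarrow D$ and fixes $B,C$ up to sign, reading off the matrix entries of this operator identity gives precisely \eqref{Sym-A-D-} and \eqref{Sym-B-C-}, with the rational prefactors $(\lambda^2/q-q/\lambda^2)/(\lambda^2-1/\lambda^2)$, $(q-1/q)/(\lambda^2-1/\lambda^2)$ and $-(\lambda^2 q-1/(q\lambda^2))/(\lambda^2/q-q/\lambda^2)$ emerging from the $\lambda\to1/\lambda$ transformation of the entries of $K_{a,-}$. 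An equivalent and perhaps cleaner route is to use a single scalar component of the reflection equation \eqref{bYB}, e.g. the $(12),(21)$ or the diagonal entries, in the limit $\mu\to$ a value where $R_{12}$ becomes a permutation-type operator, which directly produces linear relations among $\mathcal{A}_-(\lambda),\mathcal{A}_-(1/\lambda),\mathcal{D}_-(\lambda)$ and among $\mathcal{B}_-(\lambda),\mathcal{B}_-(1/\lambda)$; I would present whichever of the two derivations is shorter in the paper's conventions.

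**The main obstacle I anticipate is purely in the shifts and scalar prefactors, not in the structural argument:** keeping track of the half-integer powers of $q$ in $M_a(\lambda q^{-1/2})$, in $K_{a,\pm}(\lambda)=K_a(\lambda q^{(1\pm1)/2};\dots)$, and in the definitions of $\mathsf{A}_-$ and $g_-$, so that the arguments $\lambda q^{1/2}$ and $q^{1/2}/\lambda$ in \eqref{B-q-detU_-exp} come out exactly right, and so that the boundary factor in $g_-$ correctly matches $\det K_{a,-}$ after applying the reparametrization \eqref{Def-alfa-beta}. The cross-check that makes this manageable is requiring consistency between the two expressions \eqref{Bound-q-detU_1} and \eqref{Bound-q-detU_2} for $\det_q\mathcal{U}_{a,-}$: combined with \eqref{Sym-A-D-} and \eqref{Sym-B-C-}, one of these must follow from the other, which pins down the prefactors uniquely and serves as a sanity check on the algebra.
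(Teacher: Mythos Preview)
The paper does not give a proof of this proposition at all: immediately after the statement it writes ``We omit the proof of this proposition as it can be derived repeating the main steps of the original Sklyanin's paper, where similar statements were proven for the case of spin 1/2 representations of the 6-vertex reflection algebra.'' Your proposal is therefore not so much to be compared with the paper's own argument as with Sklyanin's, and on that score it is essentially correct: the centrality of $\det_q\mathcal{U}_{a,-}$ via the degenerate point of $R_{12}$ (antisymmetrizer/fusion), and the factorized form \eqref{B-q-detU_-exp} via multiplicativity $\det_q\mathcal{U}_{a,-}=\det_q M\cdot\det_q K_-\cdot\det_q\hat M$ together with the explicit $\det_q M=a\,d(\cdot/q)$ and $\det K_-$ rewritten through \eqref{Def-alfa-beta}, are exactly the Sklyanin steps the paper is invoking.

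One remark on your step (iii)--(iv): of the two routes you list, the second one (extracting linear relations among $\mathcal{A}_-(\lambda),\mathcal{A}_-(1/\lambda),\mathcal{D}_-(\lambda)$ and among $\mathcal{B}_-(\lambda),\mathcal{B}_-(1/\lambda)$ directly from suitable matrix components of the reflection equation \eqref{bYB}) is the one actually used by Sklyanin and is the one the paper has in mind. Your first route, relating $\mathcal{U}_{a,-}(1/\lambda)$ to $\sigma^y\mathcal{U}_{a,-}^{t_a}(\lambda)\sigma^y$ via the crossing of $M$ and $\hat M$, does not by itself produce the mixed relation \eqref{Sym-A-D-}, which involves \emph{both} $\mathcal{A}_-(\lambda)$ and $\mathcal{A}_-(1/\lambda)$ on the right-hand side; a pure $\sigma^y$-transpose conjugation only swaps $\mathcal{A}\leftrightarrow\mathcal{D}$ and cannot generate the additional $(q-1/q)/(\lambda^2-1/\lambda^2)\,\mathcal{A}_-(\lambda)$ term. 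That term is genuinely a reflection-algebra relation, so you should commit to the reflection-equation derivation for \eqref{Sym-A-D-} and \eqref{Sym-B-C-}. With that adjustment, your sketch matches the omitted Sklyanin-type proof.
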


We omit the proof of this proposition as it can be derived repeating the
main steps of the original Sklyanin's paper, where similar statements were
proven for the case of spin 1/2 representations of the 6-vertex reflection
algebra. Let us introduce now the following notation:%
\begin{equation}
\mathcal{T}_{\setminus }(\lambda )\equiv a_{+}\left( \lambda \right) 
\mathcal{A}_{-}(\lambda )+d_{+}\left( \lambda \right) \mathcal{D}%
_{-}(\lambda ),  \label{B-T-Diag}
\end{equation}%
for the diagonal part of the transfer matrix, i.e. the one associated to the
diagonal elements of the matrix $\mathcal{U}_{a,-}(\lambda )$, and the
coefficients:%
\begin{eqnarray}
\mathsf{a}_{+}(\lambda ) &\equiv &\frac{(\lambda ^{2}q-1/\left( q\lambda
^{2}\right) )(\lambda \zeta _{+}/q^{1/2}-q^{1/2}/(\lambda \zeta _{+}))}{%
(\lambda ^{2}-1/\lambda ^{2})(\zeta _{+}-1/\zeta _{+})}, \\
\mathsf{d}_{+}(\lambda ) &\equiv &\frac{(\lambda ^{2}q-1/\left( q\lambda
^{2}\right) )(\zeta _{+}q^{1/2}/\lambda -\lambda /(q^{1/2}\zeta _{+}))}{%
(\lambda ^{2}-1/\lambda ^{2})(\zeta _{+}-1/\zeta _{+})}.
\end{eqnarray}%
then we can prove the following:

\begin{corollary}
The most general transfer matrix admits the following symmetries:%
\begin{equation}
\mathcal{T}(\lambda )=\mathcal{T}(1/\lambda ),\text{ \ \ }\mathcal{T}%
(-\lambda )=\mathcal{T}(\lambda ),  \label{symmetry-transfer}
\end{equation}%
and the diagonal part $\mathcal{T}_{\setminus }(\lambda )$ has the following
explicitly symmetric forms:%
\begin{eqnarray}
\mathcal{T}_{\setminus }(\lambda ) &\equiv &\mathsf{a}_{+}(\lambda )\mathcal{%
A}_{-}(\lambda )+\mathsf{a}_{+}(1/\lambda )\mathcal{A}_{-}(1/\lambda )
\label{T-diag-A} \\
&=&\mathsf{d}_{+}(\lambda )\mathcal{D}_{-}(\lambda )+\mathsf{d}%
_{+}(1/\lambda )\mathcal{D}_{-}(1/\lambda ).  \label{T-diag-D}
\end{eqnarray}
\end{corollary}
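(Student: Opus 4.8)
The plan is to treat the two symmetries in \eqref{symmetry-transfer} by different routes — the parity $\lambda\to-\lambda$ by a conjugation (covariance) argument on the monodromy matrices, and the inversion $\lambda\to1/\lambda$ as a consequence of the explicitly symmetric forms \eqref{T-diag-A}, \eqref{T-diag-D} together with \eqref{Sym-B-C-} — so I would establish the symmetric forms first and then read off the inversion symmetry of the diagonal part directly from them. For the parity symmetry, I would first record the covariance of the building blocks under $\lambda\to-\lambda$. Reading off $L_{a,n}(\lambda)$, its diagonal entries are odd and its off-diagonal entries even in $\lambda$, so $L_{a,n}(-\lambda)=-\sigma_a^z L_{a,n}(\lambda)\sigma_a^z$; taking the ordered product of the $\mathsf{N}$ factors gives $M_a(-\lambda)=(-1)^{\mathsf{N}}\sigma_a^z M_a(\lambda)\sigma_a^z$. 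The same diagonal/off-diagonal parity holds for the boundary matrices, giving $K_{a,\pm}(-\lambda)=-\sigma_a^z K_{a,\pm}(\lambda)\sigma_a^z$. For $\hat M_a$ I would use its definition \eqref{Inverse-M}, the $M_a$ covariance, and the Pauli identity $\sigma_a^y\sigma_a^z\sigma_a^y=-\sigma_a^z$ to obtain $\hat M_a(-\lambda)=(-1)^{\mathsf{N}}\sigma_a^z\hat M_a(\lambda)\sigma_a^z$. Substituting all four covariances into $\mathcal T(-\lambda)=\text{tr}_a\{K_{a,+}(-\lambda)M_a(-\lambda)K_{a,-}(-\lambda)\hat M_a(-\lambda)\}$, the intermediate pairs $\sigma_a^z\sigma_a^z$ collapse, the overall sign is $(-1)(-1)^{\mathsf{N}}(-1)(-1)^{\mathsf{N}}=1$, and cyclicity of the trace removes the two flanking $\sigma_a^z$, yielding $\mathcal T(-\lambda)=\mathcal T(\lambda)$.

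For the symmetric forms I would start from \eqref{B-T-Diag}, $\mathcal T_\setminus(\lambda)=a_+(\lambda)\mathcal A_-(\lambda)+d_+(\lambda)\mathcal D_-(\lambda)$, and eliminate $\mathcal D_-(\lambda)$ using \eqref{Sym-A-D-}. Collecting the coefficients of $\mathcal A_-(\lambda)$ and $\mathcal A_-(1/\lambda)$ then reduces \eqref{T-diag-A} to the two scalar identities
\[
\mathsf a_+(\lambda)=a_+(\lambda)+d_+(\lambda)\frac{q-1/q}{\lambda^2-1/\lambda^2},\qquad \mathsf a_+(1/\lambda)=d_+(\lambda)\frac{\lambda^2/q-q/\lambda^2}{\lambda^2-1/\lambda^2}.
\]
I would verify these by inserting the explicit $a_+,d_+$ read off from $K_{a,+}(\lambda)=K_a(\lambda q;\zeta_+,\kappa_+,\tau_+)$ together with the definition of $\mathsf a_+$; after clearing the common denominator $(\lambda^2-1/\lambda^2)(\zeta_+-1/\zeta_+)$, both sides collapse to the same Laurent polynomial in $\lambda$ and $\zeta_+$. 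For \eqref{T-diag-D} I would first derive the companion of \eqref{Sym-A-D-}, namely $\mathcal A_-(\lambda)=\tfrac{q-1/q}{\lambda^2-1/\lambda^2}\mathcal D_-(\lambda)+\tfrac{\lambda^2/q-q/\lambda^2}{\lambda^2-1/\lambda^2}\mathcal D_-(1/\lambda)$, by solving the $2\times2$ linear system formed by \eqref{Sym-A-D-} and its image under $\lambda\to1/\lambda$ — a computation hinging on the identity $(q-1/q)^2+(\lambda^2/q-q/\lambda^2)(q\lambda^2-1/(q\lambda^2))=(\lambda^2-1/\lambda^2)^2$ — and then check the analogous scalar identities for $\mathsf d_+$.

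For the inversion symmetry $\mathcal T(1/\lambda)=\mathcal T(\lambda)$ the diagonal part is then free: the form \eqref{T-diag-A} is manifestly invariant under the exchange $\lambda\leftrightarrow1/\lambda$, which merely swaps its two terms, so $\mathcal T_\setminus(1/\lambda)=\mathcal T_\setminus(\lambda)$. For the off-diagonal remainder $b_+(\lambda)\mathcal C_-(\lambda)+c_+(\lambda)\mathcal B_-(\lambda)$ I would combine \eqref{Sym-B-C-} with the elementary identity $b_+(\lambda)=-b_+(1/\lambda)\,(\lambda^2 q-1/(q\lambda^2))/(\lambda^2/q-q/\lambda^2)$ and its counterpart for $c_+$ (the two coefficients share the same $\lambda$-dependence), obtaining $b_+(1/\lambda)\mathcal C_-(1/\lambda)=b_+(\lambda)\mathcal C_-(\lambda)$ and $c_+(1/\lambda)\mathcal B_-(1/\lambda)=c_+(\lambda)\mathcal B_-(\lambda)$; summing the diagonal and off-diagonal contributions gives $\mathcal T(1/\lambda)=\mathcal T(\lambda)$.

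The main obstacle will be the bookkeeping of the half-integer powers of $q$: the shift $\lambda\to\lambda q$ defining $K_{a,+}$ must conspire exactly with the $q^{\pm1/2}$ entering \eqref{Sym-A-D-} and \eqref{Sym-B-C-} for the scalar identities for $\mathsf a_+,\mathsf d_+$ and for $b_+,c_+$ to hold, and — more delicately — the $\sigma_a^y$ in \eqref{Inverse-M} must be commuted through correctly so that $\hat M_a$ picks up a clean $\sigma_a^z$-conjugation with the correct $(-1)^{\mathsf{N}}$ sign, without which the parity argument does not close.
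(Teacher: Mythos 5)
Your proposal is correct, and it fills in exactly the computation the paper leaves implicit: the corollary is stated there without proof as a direct consequence of the proposition, i.e.\ of the identities (\ref{Sym-A-D-}), (\ref{Sym-B-C-}) and the explicit coefficients $a_{\pm},b_{\pm},c_{\pm},d_{\pm}$, which is precisely how you proceed (your inverted relation expressing $\mathcal{A}_{-}$ through $\mathcal{D}_{-}$, the key identity $(q-1/q)^{2}+(\lambda^{2}/q-q/\lambda^{2})(q\lambda^{2}-1/(q\lambda^{2}))=(\lambda^{2}-1/\lambda^{2})^{2}$, and the scalar identities matching $\mathsf{a}_{+},\mathsf{d}_{+}$ against $a_{+},d_{+}$ all check out). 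Your $\sigma^{z}$-conjugation argument for $\mathcal{T}(-\lambda)=\mathcal{T}(\lambda)$, with the signs $(-1)\cdot(-1)^{\mathsf{N}}\cdot(-1)\cdot(-1)^{\mathsf{N}}=1$ and the correct covariance $\hat{M}_{a}(-\lambda)=(-1)^{\mathsf{N}}\sigma_{a}^{z}\hat{M}_{a}(\lambda)\sigma_{a}^{z}$, is likewise sound and is the standard way to obtain the parity symmetry the paper asserts.
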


\section{SoV\ representation of cyclic 6-vertex reflection algebra}

In this section we construct the left and right basis which diagonalize the
one-parameter family of commuting operators $\mathcal{B}_{-}(\lambda )$
associated to the most general $K_{-}(\lambda )$ matrix. Here we impose one
constraint on the parameters of the representation at any quantum site:%
\begin{equation}
b_{n}^{p}+a_{n}^{p}=0,\text{ \ \ }\forall n\in \{1,...,\mathsf{N}\}.
\label{SuperCH-P}
\end{equation}%
This is done in order to make completely explicit the construction of this
basis; however, the proof of the diagonalizability of $\mathcal{B}%
_{-}(\lambda )$ can be done without these constraints and under completly
general values of the inner boundary matrix and of the bulk parameters and
it will be presented in appendix.

\subsection{Pseudo-vacuum states}

We implement the above constraints by imposing:%
\begin{equation}
b_{n}=-q^{2j_{n}-1}a_{n},  \label{Quasi-nilp}
\end{equation}%
where for any $n\in \{1,...,\mathsf{N}\}$ we have fixed $j_{n}\in
\{0,...,p-1\}$, then we have:%
\begin{equation}
\left\langle j_{n}-1,n\right\vert \left( L_{a,n}\right) _{12}=\text{\b{0}, \
\ \ \ }\left( L_{a,n}\right) _{12}\left\vert j_{n},n\right\rangle =\text{\b{0%
},}
\end{equation}%
as well as:%
\begin{equation}
\left\langle j_{n}-1,n\right\vert \left( L_{a,n}(\lambda )\right) _{11}=%
\text{a}_{n}(\lambda q^{j_{n}-1})\left\langle j_{n}-1,n\right\vert \text{, \
\ \ \ }\left\langle j_{n}-1,n\right\vert \left( L_{a,n}(\lambda )\right)
_{22}=\text{d}_{n}(\lambda q^{1-j_{n}})\left\langle j_{n}-1,n\right\vert 
\text{,}
\end{equation}%
and%
\begin{equation}
\left( L_{a,n}(\lambda )\right) _{11}\left\vert j_{n},n\right\rangle
=\left\vert j_{n},n\right\rangle \text{a}_{n}(\lambda q^{j_{n}})\text{, \ \
\ \ \ \ }\left( L_{a,n}(\lambda )\right) _{22}\left\vert
j_{n},n\right\rangle =\left\vert j_{n},n\right\rangle \text{d}_{n}(\lambda
q^{-j_{n}})\text{,}
\end{equation}%
where:%
\begin{equation}
\text{a}_{n}(\lambda )=\lambda \alpha _{n}-\beta _{n}/\lambda ,\text{ \ \ d}%
_{n}(\lambda )=\gamma _{n}/\lambda -\lambda \delta _{n},
\end{equation}%
which is of course compatible with the local quantum determinant at site $n$:%
\begin{eqnarray}
\left\langle j_{n}-1,n\right\vert \mathrm{det_{q}}L_{a,n}(\lambda )
&=&\left\langle j_{n}-1,n\right\vert \left[ \left( L_{a,n}(\lambda )\right)
_{11}\left( L_{a,n}(\lambda /q)\right) _{22}-\left( L_{a,n}\right)
_{12}\left( L_{a,n}\right) _{21}\right] \\
&=&\text{a}_{n}(\lambda q^{j_{n}-1})\text{d}_{n}(\lambda
q^{-j_{n}})\left\langle j_{n}-1,n\right\vert \\
\mathrm{det_{q}}L_{a,n}(\lambda )\left\vert j_{n},n\right\rangle &=&\left[
\left( L_{a,n}(\lambda )\right) _{22}\left( L_{a,n}(\lambda /q)\right)
_{11}-\left( L_{a,n}\right) _{21}\left( L_{a,n}\right) _{12}\right]
\left\vert j_{n},n\right\rangle \\
&=&\left\vert j_{n},n\right\rangle \text{a}_{n}(\lambda q^{j_{n}-1})\text{d}%
_{n}(\lambda q^{-j_{n}})\text{,}
\end{eqnarray}%
being:%
\begin{equation}
\text{a}_{n}(\lambda q^{j_{n}-1})\text{d}_{n}(\lambda q^{-j_{n}})=-q\frac{%
\beta _{n}a_{n}c_{n}}{\alpha _{n}}(\frac{1}{\lambda }-q^{2(j_{n}-1)}\frac{%
\alpha _{n}}{\beta _{n}}\lambda )(\frac{1}{\lambda }+q^{-1}\frac{d_{n}\alpha
_{n}}{c_{n}\beta _{n}}\lambda ).
\end{equation}%
Then we can define the following left and right ``reference states":%
\begin{equation}
\left\langle \Omega \right\vert =\otimes _{n=1}^{\mathsf{N}}\left\langle
j_{n}-1,n\right\vert ,\text{ \ }\left\vert \bar{\Omega}\right\rangle
=\otimes _{n=1}^{\mathsf{N}}\left\vert j_{n},n\right\rangle.
\end{equation}%
\label{Ref-state-Con-L/R} The following properties are satisfied:%
\begin{eqnarray}
\left\langle \Omega \right\vert A(\lambda q^{1/2}) &=&a(\lambda
)\left\langle \Omega \right\vert ,\text{ \ }\left\langle \Omega \right\vert
D(\lambda q^{1/2})=d(\lambda )\left\langle \Omega \right\vert ,\text{ \ }%
\left\langle \Omega \right\vert B(\lambda )=\text{\b{0}},\text{ \ }%
\left\langle \Omega \right\vert C(\lambda )\neq \text{\b{0},}
\label{Left-identities-ref} \\
A(\lambda q^{1/2})\left\vert \bar{\Omega}\right\rangle &=&\left\vert \bar{%
\Omega}\right\rangle a(\lambda q),\text{ }D(\lambda q^{1/2})\left\vert \bar{%
\Omega}\right\rangle =\left\vert \bar{\Omega}\right\rangle d(\lambda /q),%
\text{ }B(\lambda )\left\vert \bar{\Omega}\right\rangle =\text{\b{0}},\text{ 
}C(\lambda )\left\vert \bar{\Omega}\right\rangle \neq \text{\b{0},}
\label{Right-identities-ref}
\end{eqnarray}%
where it is simple to verify that as it should:%
\begin{equation}
a(\lambda )=\prod_{n=1}^{\mathsf{N}}\text{a}_{n}(\lambda q^{j_{n}-1}),\text{
\ \ }d(\lambda )=\prod_{n=1}^{\mathsf{N}}\text{d}_{n}(\lambda q^{1-j_{n}}),
\end{equation}%
once we have fixed the free parameter:%
\begin{equation*}
a_{0}=\left( -q\right) ^{\mathsf{N}}\prod_{n=1}^{\mathsf{N}}q^{-j_{n}}.
\end{equation*}%
Of course, the coefficients $a(\lambda )$ and $d(\lambda )$ as well as the
reference states depend on the choice of the $\mathsf{N}$-tuple $\left\{
j_{1},...,j_{\mathsf{N}}\right\} $ but for simplicity we do not write it
explicitly.

\subsection{Representation of the reflection algebra in $\mathcal{B}_{-}(%
\protect\lambda )$-eigenstates basis}

\label{SoV-Rep-Refle-Alge}

The left and right SoV-representations of the cyclic 6-vertex reflection
algebra are now defined by constructing the left and right $\mathcal{B}%
_{-}(\lambda )$-eigenstates basis and by determining in this new basis the
representation of the other generators of the algebra. In order to present
our results we need to introduce some notations. We define the following
functions parametrized by the \textsf{$N$}-tuples $\text{\textbf{\textit{h}}} \equiv
(h_{1},...,h_{\mathsf{N}})\in \{0,...,p-1\}^{\mathsf{N}}$:%
\begin{equation}
\text{\textsc{b}}_{\text{\textbf{\textit{h}}}}(\lambda )\equiv \kappa _{-}e^{\tau
_{-}}\frac{(\lambda ^{2}/q-q/\lambda ^{2})}{(\zeta _{-}-1/\zeta _{-})}a_{%
\text{\textbf{\textit{h}}}}(\lambda )a_{\text{\textbf{\textit{h}}}}(1/\lambda ),
\label{EigenValue-B_}
\end{equation}%
with%
\begin{equation}
a_{\text{\textbf{\textit{h}}}}(\lambda )\equiv \left( -1\right) ^{\mathsf{N}%
}\prod_{n=1}^{\mathsf{N}}\left( \alpha _{n}\beta _{n}\right) ^{1/2}(\frac{%
\lambda }{\xi _{n}^{\left( h_{n}\right) }}-\frac{\xi _{n}^{\left(
h_{n}\right) }}{\lambda }),
\end{equation}%
where:%
\begin{equation}
\xi _{n}^{\left( h\right) }=\mu _{n,+}q^{h+1/2},\text{ }\xi _{n+\mathsf{N}%
}^{(h)}\equiv \xi _{n}^{(h)}\text{ \ }\forall n\in \{1,...,\mathsf{N}\},%
\text{ }a_{\text{\textbf{0}}}(\lambda )=a(\lambda /q^{1/2}).
\end{equation}%
Moreover, next, we will need also the following notations:%
\begin{eqnarray}
\Lambda &=&(\lambda ^{2}+1/\lambda ^{2}),\text{ \ \ \ }X_{b}^{(h_{b})}=(%
\zeta _{b}^{(h_{b})})^{2}+1/(\zeta _{b}^{(h_{b})})^{2},\text{ }X=q+1/q \\
\zeta _{n}^{(h)} &=&\left( \xi _{n}^{(h)}\right) ^{\varphi _{n}}\text{\ \
for \ }h\in \{0,...,p-1\}\text{\ \ and\ \ }\forall n\in \{1,...,2\mathsf{N}%
\}, \\
\varphi _{a} &=&1-2\theta (a-\mathsf{N})\text{ \ \ with \ }\theta (x)=\{0%
\text{ for }x\leq 0,\text{ }1\text{ for }x>0\}.
\end{eqnarray}

\begin{theorem}[\protect\underline{Left $\mathcal{B}_{-}(\protect\lambda )$
SOV-representations}]
If $b_{-}\left( \lambda \right) \neq 0$ and it holds:%
\begin{equation}
\mu _{n,+}^{p}\neq \mu _{m,+}^{p}\text{ \ \ }\forall n\neq m\in \{1,...,%
\mathsf{N}\},  \label{E-SOV}
\end{equation}%
and%
\begin{equation}
\mu _{n,+}^{2p}\neq \pm 1,\text{ }\mu _{n,+}^{2}\neq q^{-2h}\alpha
_{-}^{2\epsilon },\text{\ }\mu _{n,+}^{2}\neq -q^{-2h}\beta _{-}^{2\epsilon
},\text{\ }\mu _{n,+}^{2}\neq q^{-2\epsilon -2h}\mu _{m,-}^{2\epsilon }
\label{condition-SoV-2}
\end{equation}%
for any $\epsilon =\pm 1,$ $h\in \{1,...,p-1\}$ and $n,m\in \{1,...,\mathsf{N%
}\}$, then the states:%
\begin{equation}
\langle h_{1},...,h_{\mathsf{N}}|\equiv \frac{1}{\text{\textsc{n}}}\langle
\Omega |\prod_{n=1}^{\mathsf{N}}\prod_{k_{n}=1}^{h_{n}}\frac{\mathcal{A}%
_{-}(1/\xi _{n}^{\left( k_{n}-1\right) })}{\mathsf{A}_{-}(1/\xi _{n}^{\left(
k_{n}-1\right) })},  \label{Left-B-eigenstates}
\end{equation}%
where $h_{n}\in \{0,...,p-1\}$ and \textsc{n} is a free normalization,
define a $\mathcal{B}_{-}(\lambda )$-eigenstates basis of $\mathcal{H}^{\ast
}$:%
\begin{equation}
\langle \text{\textbf{\textit{h}}}|\mathcal{B}_{-}(\lambda )=\text{\textsc{b}}_{\text{%
\textbf{\textit{h}}}}(\lambda )\langle \text{\textbf{\textit{h}}}|.  \label{B-eigen-value}
\end{equation}%
Here we have denoted $\langle $\textbf{\textit{h}}$|\equiv \langle h_{1},...,h_{%
\mathsf{N}}|$. The quantum determinant and the following left action on the
generic state $\langle $\textbf{\textit{h}}$|$:%
\begin{align}
\langle \text{\textbf{\textit{h}}}|\mathcal{A}_{-}(\lambda )& =\sum_{a=1}^{2\mathsf{N}%
}\frac{(\lambda ^{2}/q-q/\lambda ^{2})(\lambda \zeta _{a}^{(h_{a})}-1/\zeta
_{a}^{(h_{a})}\lambda )}{((\zeta _{a}^{(h_{a})})^{2}/q-q/(\zeta
_{a}^{(h_{a})})^{2})((\zeta _{a}^{(h_{a})})^{2}-1/(\zeta _{a}^{(h_{a})})^{2})%
}\prod_{\substack{ b=1  \\ b\neq a\text{ mod}\mathsf{N}}}^{\mathsf{N}}\frac{%
\Lambda -X_{b}^{(h_{b})}}{X_{a}^{(h_{a})}-X_{b}^{(h_{b})}}\mathsf{A}%
_{-}(\zeta _{a}^{(h_{a})})  \notag \\
& \times \langle \text{\textbf{\textit{h}}}|\text{T}_{a}^{-\varphi _{a}}+(-1)^{%
\mathsf{N}}\text{det}_{q}M(1)\frac{(\lambda /q^{1/2}+q^{1/2}/\lambda )}{2}%
\prod_{b=1}^{\mathsf{N}}\frac{\Lambda -X_{b}^{(h_{b})}}{X-X_{b}^{(h_{b})}}%
\langle \text{\textbf{\textit{h}}}|  \notag \\
& +(-1)^{\mathsf{N}}\frac{(\zeta _{-}+1/\zeta _{-})}{(\zeta _{-}-1/\zeta
_{-})}\text{det}_{q}M(i)\frac{(\lambda /q^{1/2}-q^{1/2}/\lambda )}{2}%
\prod_{b=1}^{\mathsf{N}}\frac{\Lambda -X_{b}^{(h_{b})}}{X+X_{b}^{(h_{b})}}%
\langle \text{\textbf{\textit{h}}}|,
\end{align}%
where:%
\begin{equation}
\langle h_{1},...,h_{a},...,h_{\mathsf{N}}|\text{T}_{a}^{\pm }=\langle
h_{1},...,h_{a}\pm 1,...,h_{\mathsf{N}}|,
\end{equation}%
completely determine the representation of the other generators of the
reflection algebra in the $\mathcal{B}_{-}(\lambda )$-eigenstates basis.
Indeed, the representation of $\mathcal{D}_{-}(\lambda )$ follows from the
identity (\ref{Sym-A-D-}) while $\mathcal{C}_{-}(\lambda )$ by the quantum
determinant.\smallskip
\end{theorem}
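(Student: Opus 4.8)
The plan is to build the left $\mathcal{B}_{-}$-eigenbasis explicitly from the left reference state $\langle\Omega|$ by repeated application of the commuting shift-type operators $\mathcal{A}_{-}(1/\xi_n^{(k)})$, exactly mimicking Sklyanin's SoV construction for spin $1/2$ but now adapted to the cyclic case, where each Weyl generator has finite order $p$. First I would record the commutation relations among $\mathcal{B}_{-}(\lambda)$, $\mathcal{A}_{-}(\mu)$ coming from the reflection equation $(\ref{bYB})$; the key relation is the one expressing $\mathcal{A}_{-}(\mu)\,\mathcal{B}_{-}(\lambda)$ as a combination of $\mathcal{B}_{-}(\lambda)\,\mathcal{A}_{-}(\mu)$ and $\mathcal{B}_{-}(\mu)\,\mathcal{A}_{-}(\lambda)$ with rational coefficients in $\lambda,\mu$. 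Evaluating this relation at the special points $\mu=1/\xi_n^{(k)}$ and using the symmetry relation $(\ref{Sym-B-C-})$ together with the quasi-vacuum identity $\langle\Omega|\mathcal{B}_{-}(\lambda)\propto \langle\Omega|$ (which follows from $(\ref{Left-identities-ref})$, since $K_{a,-}$ has a nonzero lower-left entry and $\langle\Omega|B(\lambda)=\text{\b{0}}$, so only a specific term survives), one shows by induction on $\sum_n h_n$ that the state in $(\ref{Left-B-eigenstates})$ is an eigenvector of $\mathcal{B}_{-}(\lambda)$ with eigenvalue $\text{\textsc{b}}_{\text{\textbf{\textit{h}}}}(\lambda)$ as in $(\ref{EigenValue-B_})$. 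The crucial point for identifying the eigenvalue is that each application of $\mathcal{A}_{-}(1/\xi_n^{(k_n-1)})$ multiplies the factor $a_{\text{\textbf{\textit{h}}}}(\lambda)a_{\text{\textbf{\textit{h}}}}(1/\lambda)$ by moving a single zero of $a_{\text{\textbf{\textit{h}}}}$ from $\xi_n^{(k_n-1)}$ to $\xi_n^{(k_n)}=\mu_{n,+}q^{k_n+1/2}$, which is forced by the structure of the commutation coefficients evaluated at these points.

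Next I would establish that these $p^{\mathsf{N}}$ states are linearly independent, hence form a basis of $\mathcal{H}^{*}$. This is where conditions $(\ref{E-SOV})$ and $(\ref{condition-SoV-2})$ enter: condition $(\ref{E-SOV})$ guarantees that the eigenvalues $\text{\textsc{b}}_{\text{\textbf{\textit{h}}}}(\lambda)$ are pairwise distinct as functions of $\lambda$ (the zeros $\mu_{n,+}q^{h_n+1/2}$ for different $n$ never collide modulo $p$, and varying $h_n$ cyclically through $p$ values gives $p$ distinct choices per site since $\mu_{n,+}^{2p}\neq\pm 1$ keeps the $p$ points $\{\xi_n^{(h)}\}$ genuinely distinct). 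Distinct eigenvalues of the commuting family $\mathcal{B}_{-}(\lambda)$ immediately force linear independence, and a counting argument (there are exactly $p^{\mathsf{N}}=\dim\mathcal{H}$ of them) upgrades this to a basis. The remaining inequalities in $(\ref{condition-SoV-2})$ are needed to ensure that the denominators appearing in the action of $\mathcal{A}_{-}(\lambda)$ — namely $\mathsf{A}_{-}(1/\xi_n^{(k_n-1)})$ and factors like $(\zeta_a^{(h_a)})^2-1/(\zeta_a^{(h_a)})^2$ and $X_a^{(h_a)}\pm X$ — never vanish, so the construction $(\ref{Left-B-eigenstates})$ and the stated $\mathcal{A}_{-}$-action are well-defined; in particular $\mu_{n,+}^{2}\neq q^{-2h}\alpha_{-}^{2\epsilon}$ and $\mu_{n,+}^{2}\neq -q^{-2h}\beta_{-}^{2\epsilon}$ keep $g_{-}$ (hence $\mathsf{A}_{-}$) from having an unwanted zero, and $\mu_{n,+}^{2}\neq q^{-2\epsilon-2h}\mu_{m,-}^{2\epsilon}$ does the same for the bulk factor $a(\lambda q^{-1/2})d(1/(q^{1/2}\lambda))$.

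The last and most laborious step is to compute the left action of $\mathcal{A}_{-}(\lambda)$ on a generic $\langle\text{\textbf{\textit{h}}}|$ and put it in the interpolation form displayed in the theorem. The idea is that $\langle\text{\textbf{\textit{h}}}|\mathcal{A}_{-}(\lambda)$, as a function of $\lambda$, is (via the symmetry $(\ref{Sym-A-D-})$ and the quantum-determinant relation $(\ref{B-q-detU_-exp})$) determined by its values at the $2\mathsf{N}$ points $\lambda=\zeta_a^{(h_a)}$ plus its known even/inversion symmetry and its behaviour at the distinguished points $\lambda=q^{\pm1/2}$, $\lambda=iq^{\pm1/2}$ where it reduces to central quantum determinants. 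At $\lambda=\zeta_a^{(h_a)}$ one uses the commutation relations again to see that $\mathcal{A}_{-}(\zeta_a^{(h_a)})$ acts by a shift $\text{T}_a^{-\varphi_a}$ times the scalar $\mathsf{A}_{-}(\zeta_a^{(h_a)})$, and then Lagrange interpolation in the variable $\Lambda=\lambda^2+1/\lambda^2$ over the nodes $X_a^{(h_a)}$ (together with the two exceptional nodes $X$ and $-X$ carrying the $\det_q M(1)$ and $\det_q M(i)$ terms) reassembles the full operator. I expect this interpolation/symmetrization bookkeeping — tracking all prefactors, the $\varphi_a$ sign conventions distinguishing $n\leq\mathsf{N}$ from $n>\mathsf{N}$, and matching the two exceptional terms to $\det_q M(1)$ and $\det_q M(i)$ — to be the main obstacle; the conceptual content (eigenbasis construction and linear independence) is comparatively routine once the commutation relations are in hand. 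Finally, $\mathcal{D}_{-}(\lambda)$ is read off from $(\ref{Sym-A-D-})$ and $\mathcal{C}_{-}(\lambda)$ from the quantum determinant relation $(\ref{Bound-q-detU_1})$, which closes the proof.
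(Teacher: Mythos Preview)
Your strategy coincides with the paper's: start from the boundary--bulk decomposition of $\mathcal{B}_{-}$ to see that $\langle\Omega|$ is a $\mathcal{B}_{-}$-eigenstate, then use the reflection-algebra commutation relation between $\mathcal{A}_{-}$ and $\mathcal{B}_{-}$ to propagate the eigenvalue property inductively, and finally recover the full $\mathcal{A}_{-}$-action by interpolation, using the central values at $\lambda=q^{1/2}$ and $\lambda=iq^{1/2}$ coming from the identities $(\ref{OpenCyU-identities})$ together with the Laurent-polynomial form $\mathcal{A}_{-}(\lambda)=\sum_{a=0}^{2\mathsf{N}+1}\lambda^{2a-2\mathsf{N}+1}\mathcal{A}_{-,a}$.

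There is one genuine gap in your linear-independence argument. You assert that distinct eigenvalues ``immediately force linear independence'', but distinct eigenvalues only separate \emph{nonzero} eigenvectors; nothing in your outline rules out that some $\langle\text{\textbf{\textit{h}}}|$ vanishes identically. The paper closes this with a short but essential induction: given $\langle\text{\textbf{\textit{h}}}^{(0)}|\neq 0$, one computes
\[
\langle\text{\textbf{\textit{h}}}^{(0)}_{j}|\,\mathcal{A}_{-}(\xi_{j}^{(h_{j}^{(0)}+1)})=\mathsf{A}_{-}(\xi_{j}^{(h_{j}^{(0)}+1)})\,\langle\text{\textbf{\textit{h}}}^{(0)}|,
\]
where $\langle\text{\textbf{\textit{h}}}^{(0)}_{j}|$ is the state with $h_j$ raised by one. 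The right-hand side is nonzero because the inequalities $(\ref{condition-SoV-2})$ force $\mathsf{A}_{-}(\xi_{j}^{(h_{j}^{(0)}+1)})\neq 0$, so the left-hand side cannot vanish, hence $\langle\text{\textbf{\textit{h}}}^{(0)}_{j}|\neq 0$. Your remark that $(\ref{condition-SoV-2})$ keeps the denominators $\mathsf{A}_{-}(1/\xi_n^{(k_n-1)})$ in the definition $(\ref{Left-B-eigenstates})$ from vanishing is correct, but ``well-defined'' is weaker than ``nonzero'' and does not substitute for this step.
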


\begin{proof}
Let us write explicitly the decomposition of the reflection algebra
generator:%
\begin{equation}
\mathcal{B}_{-}(\lambda )=\left( -1\right) ^{\mathsf{N}}(-a_{-}(\lambda
)A(\lambda )B(1/\lambda )+b_{-}(\lambda )A(\lambda )A(1/\lambda
)-c_{-}(\lambda )B(\lambda )B(1/\lambda )+d_{-}(\lambda )B(\lambda
)A(1/\lambda )),
\end{equation}%
in terms of the generators of the Yang-Baxter algebra. Then, by using the
identities $\left( \ref{Left-identities-ref}\right) $ it follows that $%
\left\langle \Omega \right\vert $ is a $\mathcal{B}_{-}(\lambda )$%
-eigenstate with non-zero eigenvalue:%
\begin{equation}
\left\langle \Omega \right\vert \mathcal{B}_{-}(\lambda )\equiv \text{%
\textsc{b}}_{\text{\textbf{0}}}(\lambda )\left\langle \Omega \right\vert .
\end{equation}%
Now by using the reflection algebra commutation relations:%
\begin{eqnarray}
\mathcal{A}_{-}(\lambda _{2})\mathcal{B}_{-}(\lambda _{1}) &=&\frac{(\lambda
_{1}q/\lambda _{2}-\lambda _{2}/(\lambda _{1}q))(\lambda _{1}\lambda
_{2}/q-q/(\lambda _{1}\lambda _{2}))}{(\lambda _{1}/\lambda _{2}-\lambda
_{2}/\lambda _{1})(\lambda _{1}\lambda _{2}-1/(\lambda _{1}\lambda _{2}))}%
\mathcal{B}_{-}(\lambda _{1})\mathcal{A}_{-}(\lambda _{2})  \notag \\
&&+\frac{(\lambda _{1}^{2}/q-q/\lambda _{1}^{2})(q-1/q)}{(\lambda
_{2}/\lambda _{1}-\lambda _{1}/\lambda _{2})(\lambda _{1}^{2}-1/\lambda
_{1}^{2})}\mathcal{B}_{-}(\lambda _{2})\mathcal{A}_{-}(\lambda _{1})  \notag
\\
&&-\frac{q-1/q}{(\lambda _{1}^{2}-1/\lambda _{1}^{2})(\lambda _{1}\lambda
_{2}-1/(\lambda _{1}\lambda _{2}))}\mathcal{B}_{-}(\lambda _{2})\mathcal{%
\tilde{D}}_{-}(\lambda _{1})  \label{OpenCybYB-AB}
\end{eqnarray}%
we can follow step by step the proof given in \cite{OpenCyN12-0} to prove
the validity of $(\ref{B-eigen-value})$. The action of $\mathcal{A}%
_{-}(\zeta _{b}^{(h_{b})})$ for $b\in \{1,...,2\mathsf{N}\}$ follows from
the definition of the states $\langle \bf{h} |$, the reflection algebra
commutation relations $\left( \ref{OpenCybYB-AB}\right) $ and the quantum
determinant relations. Let us show now that the conditions $\left( \ref%
{E-SOV}\right) $ and $\left( \ref{condition-SoV-2}\right) $ imply that the
set of p$^{\mathsf{N}}$ states $\langle $\textbf{\textit{h}}$|$ is a $\mathcal{B}%
_{-}(\lambda )$-eigenstates basis of $\mathcal{H}^{\ast }$. As by condition $%
\left( \ref{E-SOV}\right) $ each such state is associated to a different
eigenvalue of $\mathcal{B}_{-}(\lambda )$ the only thing that we need to
prove to get their linear independence is that each such state is nonzero.
We know by construction that the state $\left\langle \Omega \right\vert $ is
nonzero so let us assume by induction that the same is true for the state $%
\langle $\textbf{\textit{h}}$^{(0)}|=\langle h_{1}^{(0)},...,h_{\mathsf{N}}^{(0)}|$
with $h_{j}^{(0)}\in \{0,...,p-2\}$ and let us show that $\langle $\text{\textbf{\textit{h}}}%
$_{j}^{(0)}|=\langle h_{1}^{(0)},...,h_{j}^{(0)}+1,...,h_{\mathsf{N}}^{(0)}|$
is nonzero. We have that:%
\begin{equation}
\langle \text{\textbf{{\textit{h}}}}_{j}^{(0)}|\mathcal{A}_{-}(\xi _{j}^{(h_{j}^{(0)}+1)})=%
\mathsf{A}_{-}(\xi _{j}^{(h_{j}^{(0)}+1)})\langle {\text{\textbf{\textit{h}}}}^{(0)}|\neq 
\text{\b{0} \ }j\in \{1,...,\mathsf{N}\}
\end{equation}%
so that $\langle \text{\textbf{{\textit{h}}}}_{j}^{(0)}|$ is nonzero. Using this we can prove
that all the states $\langle $\textbf{\textit{h}}$^{(1)}|=\langle
h_{1}^{(0)}+x_{1},...,h_{\mathsf{N}}^{(0)}+x_{\mathsf{N}}|$ with $x_{j}\in
\{0,1\}$ for any $j\in \{1,...,\mathsf{N}\}$ are nonzero, which just prove
the validity of the induction. Finally, by using the identities:%
\begin{equation}
\mathcal{U}_{-}(q^{1/2})=\left( -1\right) ^{\mathsf{N}}\text{det}_{q}M(1)%
\text{ }I_{0},\text{ \ \ }\mathcal{U}_{-}(iq^{1/2})=i(-1)^{\mathsf{N}+1}%
\frac{\zeta _{-}+1/\zeta _{-}}{\zeta _{-}-1/\zeta _{-}}\text{det}_{q}M(i)%
\text{ }\sigma _{0}^{z},  \label{OpenCyU-identities}
\end{equation}%
and remarking that $\mathcal{A}_{-}(\lambda )$ has the following functional
dependence w.r.t. $\lambda $:%
\begin{equation}
\mathcal{A}_{-}(\lambda )=\sum_{a=0}^{2\mathsf{N}+1}\lambda ^{\left( 2a-2%
\mathsf{N}+1\right) }\mathcal{A}_{-,a},
\end{equation}%
where $\mathcal{A}_{-,a}\in $End$(\mathcal{H})$ are some fixed operators, we
get our interpolation formula for its action on $\langle $\textbf{\textit{h}}$|$.
\end{proof}

Similarly, defining:%
\begin{equation}
\kappa _{a}^{\left( h\right) }=k(\zeta _{a}^{(h)}),\text{ \ for }h\in
\{0,...,p-1\},\text{ }a\in \{1,...,2\mathsf{N}\},
\end{equation}%
and the function:%
\begin{equation}
k(\lambda )=\left( \lambda ^{2}-1/\lambda ^{2}\right) /(\lambda
^{2}/q^{2}-q^{2}/\lambda ^{2}),
\end{equation}%
we have similar properties for the right representations:

\begin{theorem}[\protect\underline{Right $\mathcal{B}_{-}(\protect\lambda )$
SOV-representations}]
If $b_{-}\left( \lambda \right) \neq 0$ and $\left( \ref{E-SOV}\right) $-$%
\left( \ref{condition-SoV-2}\right) $ are satisfied, the states:%
\begin{equation}
|h_{1},...,h_{\mathsf{N}}\rangle \equiv \frac{1}{\text{\textsc{n}}}%
\prod_{n=1}^{\mathsf{N}}\prod_{k_{n}=h_{n}}^{p-2}\frac{\mathcal{D}_{-}(\xi
_{n}^{\left( k_{n}+1\right) })}{\kappa _{n}^{\left( k_{n}+1\right) }\mathsf{A%
}_{-}(1/\xi _{n}^{\left( k_{n}\right) })}|\bar{\Omega}\rangle ,
\label{OpenCyD-right-eigenstates}
\end{equation}%
with {\text{\textsc{n}}} the same coefficient as in %
\eqref{Left-B-eigenstates}, define a $\mathcal{B}_{-}(\lambda )$-eigenstates
basis of $\mathcal{H}$:%
\begin{equation}
\mathcal{B}_{-}(\lambda )|\text{\textbf{\textit{h}}}\rangle =|\text{\textbf{\textit{h}}}%
\rangle \text{\textsc{b}}_{\text{\textbf{\textit{h}}}}(\lambda ).
\label{OpenCyleft-B-eigen-cond}
\end{equation}%
On the generic state $|$\textbf{\textit{h}}$\rangle $, the action of the remaining
reflection algebra generators follows by:%
\begin{align}
\mathcal{D}_{-}(\lambda )|\text{\textbf{\textit{h}}}\rangle & =\sum_{a=1}^{2\mathsf{N}%
}\text{T}_{a}^{-\varphi _{a}}|\text{\textbf{\textit{h}}}\rangle \frac{(\lambda
^{2}/q-q/\lambda ^{2})(\lambda \zeta _{a}^{(h_{a})}-1/\zeta
_{a}^{(h_{a})}\lambda )}{((\zeta _{a}^{(h_{a})})^{2}/q-q/(\zeta
_{a}^{(h_{a})})^{2})((\zeta _{a}^{(h_{a})})^{2}-1/(\zeta _{a}^{(h_{a})})^{2})%
}\prod_{\substack{ b=1  \\ b\neq a\text{ mod}\mathsf{N}}}^{\mathsf{N}}\frac{%
\Lambda -X_{b}^{(h_{b})}}{X_{a}^{(h_{a})}-X_{b}^{(h_{b})}}  \notag \\
&\times\mathsf{D}_{-}(\zeta _{a}^{(h_{a})}) +|\text{\textbf{\textit{h}}}\rangle (-1)^{%
\mathsf{N}}\text{det}_{q}M(1)\frac{(\lambda /q^{1/2}+q^{1/2}/\lambda )}{2}%
\prod_{b=1}^{\mathsf{N}}\frac{\Lambda -X_{b}^{(h_{b})}}{X-X_{b}^{(h_{b})}} 
\notag \\
& +(-1)^{\mathsf{N}+1}|\text{\textbf{\textit{h}}}\rangle \frac{(\zeta _{-}+1/\zeta
_{-})}{(\zeta _{-}-1/\zeta _{-})}\text{det}_{q}M(i)\frac{(\lambda
/q^{1/2}-q^{1/2}/\lambda )}{2}\prod_{b=1}^{\mathsf{N}}\frac{\Lambda
-X_{b}^{(h_{b})}}{X+X_{b}^{(h_{b})}},  \label{SOV D-}
\end{align}%
where:%
\begin{equation}
\mathsf{D}_{-}(\lambda )=k(\lambda )\mathsf{A}_{-}(q/\lambda ),\text{\ \ T}%
_{a}^{\pm }|h_{1},...,h_{a},...,h_{\mathsf{N}}\rangle =|h_{1},...,h_{a}\pm
1,...,h_{\mathsf{N}}\rangle .
\end{equation}%
Indeed, the representation of $\mathcal{A}_{-}(\lambda )$ follows from the
identity (\ref{Sym-A-D-}) while $\mathcal{C}_{-}(\lambda )$ is given by the
quantum determinant.\smallskip
\end{theorem}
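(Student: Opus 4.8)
The plan is to mirror the proof of the left-representation theorem, exploiting the structural symmetry between $\mathcal{A}_{-}$ and $\mathcal{D}_{-}$ expressed by \eqref{Sym-A-D-}, and between the left pseudo-vacuum $\langle\Omega|$ and the right pseudo-vacuum $|\bar\Omega\rangle$ via \eqref{Right-identities-ref}. First I would verify that $|\bar\Omega\rangle$ is itself a $\mathcal{B}_{-}(\lambda)$-eigenstate with eigenvalue $\text{\textsc{b}}_{\text{\textbf{0}}}(\lambda)$: writing out $\mathcal{B}_{-}(\lambda)=(-1)^{\mathsf{N}}\big(-a_{-}(\lambda)A(\lambda)B(1/\lambda)+b_{-}(\lambda)A(\lambda)A(1/\lambda)-c_{-}(\lambda)B(\lambda)B(1/\lambda)+d_{-}(\lambda)B(\lambda)A(1/\lambda)\big)$ and acting on $|\bar\Omega\rangle$, the three terms containing a rightmost $B(1/\lambda)$ or $B(\lambda)$ vanish by $B(\mu)|\bar\Omega\rangle=\text{\b 0}$, leaving the $A(\lambda)A(1/\lambda)$ term which produces $b_{-}(\lambda)a(\lambda q^{1/2})a(q^{-1/2}/\lambda)$ (up to the overall sign and the $q$-shifts implicit in \eqref{Right-identities-ref}), which one checks equals $\text{\textsc{b}}_{\text{\textbf{0}}}(\lambda)$ after using $a_{\text{\textbf{0}}}(\lambda)=a(\lambda/q^{1/2})$ and the symmetry relation for $a_{\text{\textbf{h}}}$.

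Next I would establish the eigenvalue relation \eqref{OpenCyleft-B-eigen-cond} for all $|\text{\textbf{h}}\rangle$ by induction on $\sum_n(p-1-h_n)$, starting from $|\bar\Omega\rangle$ (the case $\text{\textbf{h}}=(p-1,\dots,p-1)$ shifted appropriately) and applying one factor $\mathcal{D}_{-}(\xi_n^{(k_n+1)})$ at a time. The commutation relation needed is the reflection-algebra analogue of \eqref{OpenCybYB-AB} with the roles of $\mathcal{A}_{-}$ and $\mathcal{D}_{-}$ interchanged — equivalently one rewrites \eqref{OpenCybYB-AB} using \eqref{Sym-A-D-} and \eqref{Sym-B-C-} to obtain the commutation of $\mathcal{D}_{-}(\lambda_2)$ with $\mathcal{B}_{-}(\lambda_1)$; this follows "step by step" the argument of \cite{OpenCyN12-0} exactly as invoked in the left case. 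The same inductive scheme then shows each $|\text{\textbf{h}}\rangle$ is nonzero: assuming $|\text{\textbf{h}}^{(0)}\rangle$ nonzero with $h_j^{(0)}\in\{1,\dots,p-1\}$, one uses $\mathcal{D}_{-}(\xi_j^{(h_j^{(0)})})|\text{\textbf{h}}^{(0)}\rangle=\kappa_j^{(h_j^{(0)})}\mathsf{A}_{-}(q/\xi_j^{(h_j^{(0)})})\,|\dots,h_j^{(0)}-1,\dots\rangle$, which is nonzero under the conditions \eqref{E-SOV}–\eqref{condition-SoV-2} (these guarantee the relevant values of $\mathsf{A}_{-}$ do not vanish and that the $p^{\mathsf{N}}$ eigenvalues $\text{\textsc{b}}_{\text{\textbf{h}}}(\lambda)$ are pairwise distinct), giving a basis of $\mathcal{H}$ by the distinctness of eigenvalues.

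Finally, for the interpolation formula \eqref{SOV D-} I would determine the action $\mathcal{D}_{-}(\xi_a^{(h_a)})|\text{\textbf{h}}\rangle$ directly from the definition \eqref{OpenCyD-right-eigenstates} and the $\mathcal{D}_{-}$–$\mathcal{B}_{-}$ commutation relations together with the quantum determinant \eqref{Bound-q-detU_2}, obtaining the shift $\text{T}_a^{-\varphi_a}$ with coefficient $\mathsf{D}_{-}(\zeta_a^{(h_a)})=k(\zeta_a^{(h_a)})\mathsf{A}_{-}(q/\zeta_a^{(h_a)})$. Then, noting that $\mathcal{D}_{-}(\lambda)$ is a Laurent polynomial in $\lambda$ of the same degree as $\mathcal{A}_{-}(\lambda)$ (via \eqref{Sym-A-D-}), its values at the $2\mathsf{N}$ points $\zeta_a^{(h_a)}$ together with the two special points $\lambda=q^{1/2}$ and $\lambda=iq^{1/2}$ — evaluated through \eqref{OpenCyU-identities} and $\mathcal{D}_{-}(q^{1/2})=(-1)^{\mathsf{N}}\text{det}_qM(1)I_0$, $\mathcal{D}_{-}(iq^{1/2})$ proportional to $-\sigma_0^z$ — pin it down uniquely by Lagrange interpolation in the variable $\Lambda=\lambda^2+1/\lambda^2$, yielding \eqref{SOV D-}. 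The representations of $\mathcal{A}_{-}(\lambda)$ and $\mathcal{C}_{-}(\lambda)$ then come for free from \eqref{Sym-A-D-} and \eqref{Bound-q-detU_2} respectively. The main obstacle is bookkeeping: correctly tracking all the $q$-shifts, signs, and the $\varphi_a$/mod-$\mathsf{N}$ identifications so that the right-hand side of \eqref{SOV D-} has exactly the claimed form — the conceptual content is entirely parallel to the left case, but the computation of the two special-point contributions and the normalization constant \textsc{n} (which must be the same as in \eqref{Left-B-eigenstates} so that $\langle\text{\textbf{h}}|\text{\textbf{h}}'\rangle$ is diagonal) requires care.
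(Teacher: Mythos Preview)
Your plan is essentially the paper's own proof: start from the right pseudo-vacuum, use the $\mathcal{B}_{-}\mathcal{D}_{-}$ reflection-algebra commutation relation (the paper writes it out explicitly rather than deriving it from \eqref{OpenCybYB-AB} via \eqref{Sym-A-D-}), build the states by successive $\mathcal{D}_{-}$-actions, and then interpolate $\mathcal{D}_{-}(\lambda)$ from its values at the $\zeta_a^{(h_a)}$ together with the two central points coming from \eqref{OpenCyU-identities}.

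There is, however, one concrete slip you should fix. The right pseudo-vacuum $|\bar\Omega\rangle$ is \emph{not} the $\mathcal{B}_{-}$-eigenstate with eigenvalue $\text{\textsc{b}}_{\text{\textbf{0}}}(\lambda)$; it has eigenvalue $\text{\textsc{b}}_{\text{\textbf{p-1}}}(\lambda)$. Indeed, from \eqref{Right-identities-ref} one has $A(\lambda)|\bar\Omega\rangle=|\bar\Omega\rangle\,a(\lambda q^{1/2})$ (a $q$-shift \emph{up}, not down), so the surviving term gives $b_{-}(\lambda)\,a(\lambda q^{1/2})\,a(q^{1/2}/\lambda)$, which matches $\text{\textsc{b}}_{\text{\textbf{p-1}}}(\lambda)$ since $\xi_n^{(p-1)}=\mu_{n,+}q^{p-1/2}=\mu_{n,+}q^{-1/2}$. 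This is consistent with your own parenthetical remark that $|\bar\Omega\rangle$ corresponds to $\text{\textbf{h}}=(p-1,\dots,p-1)$ (the empty product in \eqref{OpenCyD-right-eigenstates}), but it contradicts your stated identification with $\text{\textsc{b}}_{\text{\textbf{0}}}$. The induction therefore runs from $\text{\textbf{h}}=\text{\textbf{p-1}}$ downward, each $\mathcal{D}_{-}(\xi_n^{(k_n+1)})$ lowering $h_n$ by one, exactly as the product in \eqref{OpenCyD-right-eigenstates} is written. Once you correct this labeling the rest of your argument goes through unchanged.
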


\begin{proof}
The proof is given along the same steps used in the previous theorem, we
just need to make the following remarks. First of all by using the
identities $\left( \ref{Right-identities-ref}\right) $ it follows that $%
\left\vert \bar{\Omega}\right\rangle $ is a $\mathcal{B}_{-}(\lambda )$%
-eigenstate with non-zero eigenvalue:%
\begin{equation}
\mathcal{B}_{-}(\lambda )\left\vert \bar{\Omega}\right\rangle \equiv
\left\vert \bar{\Omega}\right\rangle \text{\textsc{b}}_{\text{\textbf{p-1}}%
}(\lambda ).
\end{equation}%
Now all we need are the following reflection algebra commutation relations:%
\begin{align}
\mathcal{B}_{-}(\lambda _{1})\mathcal{D}_{-}(\lambda _{2})& =\frac{(\lambda
_{1}q/\lambda _{2}-\lambda _{2}/(\lambda _{1}q))(\lambda _{1}\lambda
_{2}/q-q/(\lambda _{1}\lambda _{2}))}{(\lambda _{1}/\lambda _{2}-\lambda
_{2}/\lambda _{1})(\lambda _{1}\lambda _{2}-1/(\lambda _{1}\lambda _{2}))}%
\mathcal{D}_{-}(\lambda _{2})\mathcal{B}_{-}(\lambda _{1})  \notag \\
& -\frac{(q-1/q)(\lambda _{1}\lambda _{2}/q-q/(\lambda _{1}\lambda _{2}))}{%
(\lambda _{1}/\lambda _{2}-\lambda _{2}/\lambda _{1})(\lambda _{1}\lambda
_{2}-1/(\lambda _{1}\lambda _{2}))}\mathcal{D}_{-}(\lambda _{1})\mathcal{B}%
_{-}(\lambda _{2})  \notag \\
& -\frac{q-1/q}{(\lambda _{1}\lambda _{2}-1/(\lambda _{1}\lambda _{2}))}%
\mathcal{A}_{-}(\lambda _{1})\mathcal{B}_{-}(\lambda _{2}).
\end{align}%
By using them, the definition of the states $|$\textbf{\textit{h}}$\rangle $ and the
quantum determinant, we get our interpolation formula for the right action
of $\mathcal{D}_{-}(\lambda )$ on $|$\textbf{\textit{h}}$\rangle $. Let us remark
that in fact, the chosen gauge for the coefficients of $\mathcal{D}%
_{-}(\lambda )$ is consistent with the quantum determinant condition as we
have:%
\begin{equation}
\mathsf{D}_{-}(\zeta _{a}^{(h)})=\kappa _{a}^{\left( h\right) }\mathsf{A}%
_{-}(q/\zeta _{a}^{(h)})
\end{equation}%
for any $h\in \{0,...,p-1\}$ and $a\in \{1,...,2\mathsf{N}\}$ and%
\begin{equation}
\frac{\text{det}_{q}\mathcal{U}_{-}(\xi _{a}^{(h+1/2)})}{(\,( \xi
_{a}^{(h+3/2)}) ^{2}-1/( \xi _{a}^{(h+3/2)}) ^{2}\,)}=\mathsf{D}_{-}(\xi
_{a}^{(h+1)})\mathsf{D}_{-}(1/\xi _{a}^{(h)})=\mathsf{A}_{-}(\xi
_{a}^{(h+1)})\mathsf{A}_{-}(1/\xi _{a}^{(h)}),
\end{equation}%
since,%
\begin{equation}
\kappa _{a}^{\left( h+1\right) }\kappa _{a+\mathsf{N}}^{\left( h\right) }=1
\end{equation}%
for any $h\in \{0,...,p-1\}$ and $a\in \{1,...,\mathsf{N}\}$.
\end{proof}

\subsection{Change of basis and SoV spectral decomposition of the identity}

In this section we present the main properties of the $p^{\mathsf{N}}\times
p^{\mathsf{N}}$ matrices $U^{(L)}$ and $U^{(R)}$ defining respectively the
change of basis from the original left and right basis, formed by the $v_{n}$%
-eigenstates basis:%
\begin{equation}
\underline{\langle \text{\textbf{\textit{h}}}|}\equiv \otimes _{n=1}^{\mathsf{N}%
}\langle h_{n},n|\text{ \ \ \ \ and \ \ \ }\underline{|\text{\textbf{\textit{h}}}%
\rangle }\equiv \otimes _{n=1}^{\mathsf{N}}|h_{n},n\rangle ,
\end{equation}%
to the left and right $\mathcal{B}_{-}$-eigenstates basis: 
\begin{equation}
\langle \text{\textbf{\textit{h}}}|=\underline{\langle \text{\textbf{\textit{h}}}|}%
U^{(L)}=\sum_{i=1}^{p^{\mathsf{N}}}U_{\varkappa \left( \text{\textbf{\textit{h}}}%
\right) ,i}^{(L)}\langle \varkappa ^{-1}\left( i\right) |\text{ \ \ and\ \ \ 
}|\text{\textbf{\textit{h}}}\rangle =U^{(R)}\underline{|\text{\textbf{\textit{h}}}\rangle }%
=\sum_{i=1}^{p^{\mathsf{N}}}U_{i,\varkappa \left( \text{\textbf{\textit{h}}}\right)
}^{(R)}|\varkappa ^{-1}\left( i\right) \rangle ,
\end{equation}%
where $\varkappa $ is the isomorphism between the sets $\{0,...,p-1\}^{%
\mathsf{N}}$ and $\{1,...,p^{\mathsf{N}}\}$ defined by: 
\begin{equation}
\varkappa : \text{\textbf{\textit{h}}}\in \{0,...,p-1\}^{\mathsf{N}}\rightarrow
\varkappa \left( \text{\textbf{\textit{h}}}\right) \equiv 1+\sum_{a=1}^{\mathsf{N}%
}p^{(a-1)}h_{a}\in \{1,...,p^{\mathsf{N}}\}.  \label{corrisp}
\end{equation}%
From the diagonalizability of $\mathcal{B}_{-}(\lambda )$ it follows that $%
U^{(L)}$ and $U^{(R)}$\ are invertible matrices for which it holds:%
\begin{equation}
U^{(L)}\mathcal{B}_{-}(\lambda )=\Delta _{\mathcal{B}_{-}}(\lambda )U^{(L)},%
\text{ \ \ }\mathcal{B}_{-}(\lambda )U^{(R)}=U^{(R)}\Delta _{\mathcal{B}%
_{-}}(\lambda ),
\end{equation}%
where $\Delta _{\mathcal{B}_{-}}(\lambda )$ is the $p^{\mathsf{N}}\times p^{%
\mathsf{N}}$ diagonal matrix defined by: 
\begin{equation}
\left( \Delta _{\mathcal{B}_{-}}(\lambda )\right) _{i,j}\equiv \delta _{i,j}%
\text{\textsc{b}}_{\varkappa ^{-1}\left( i\right) }(\lambda )\text{ \ }%
\forall i,j\in \{1,...,p^{\mathsf{N}}\}.
\end{equation}%
We can prove that it holds:

\begin{proposition}
The $p^{\mathsf{N}}\times p^{\mathsf{N}}$ matrix $M\equiv U^{(L)}U^{(R)}$ of scalar products of left and right $\mathcal{B}_{-}$-eigenstates  is
diagonal and it is characterized by the following diagonal entries:%
\begin{equation}
M_{\varkappa \left( \text{\textbf{\textit{h}}}\right) \varkappa \left( \text{\textbf{h%
}}\right) }=\langle \text{\textbf{\textit{h}}}|\text{\textbf{\textit{h}}}\rangle =\prod_{1\leq
b<a\leq \mathsf{N}}\frac{1}{X_{a}^{(h_{a})}-X_{b}^{(h_{b})}}.  \label{T2M_jj}
\end{equation}
\end{proposition}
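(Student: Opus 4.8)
The plan is to compute the scalar product $\langle \text{\textbf{\textit{h}}}|\text{\textbf{\textit{h}}}\rangle$ by exploiting the actions of the reflection-algebra generators established in the two SoV theorems, in the spirit of the original SoV scalar-product computations. First I would observe that it suffices to compute $\langle \text{\textbf{\textit{h}}}|\text{\textbf{\textit{h}}}'\rangle$ for arbitrary pairs; since $\mathcal{B}_{-}(\lambda)$ is diagonalized on both the left basis $\{\langle\text{\textbf{\textit{h}}}|\}$ and the right basis $\{|\text{\textbf{\textit{h}}}\rangle\}$, and, by condition \eqref{E-SOV} together with the explicit eigenvalue \eqref{EigenValue-B_}, distinct $\mathsf{N}$-tuples $\text{\textbf{\textit{h}}}$ carry distinct eigenvalues $\text{\textsc{b}}_{\text{\textbf{\textit{h}}}}(\lambda)$, it follows immediately that $\langle\text{\textbf{\textit{h}}}|\text{\textbf{\textit{h}}}'\rangle = 0$ unless $\text{\textbf{\textit{h}}} = \text{\textbf{\textit{h}}}'$. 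Hence $M = U^{(L)}U^{(R)}$ is diagonal, and only the diagonal entries remain to be found.

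For the diagonal entries I would set up a recursion in the components $h_n$. Starting from $\langle\Omega|\bar\Omega\rangle$, which equals $1$ by the normalization of the reference states and the choice of the free constant \textsc{n} (this is where the common normalization in \eqref{OpenCyD-right-eigenstates} matching \eqref{Left-B-eigenstates} is used), I would raise one index at a time. Concretely, write $\langle\text{\textbf{\textit{h}}}|$ with $h_j$ replaced by $h_j+1$ as $\langle\text{\textbf{\textit{h}}}|$ acted on the right by $\mathcal{A}_{-}(1/\xi_j^{(h_j)})/\mathsf{A}_{-}(1/\xi_j^{(h_j)})$ (from the defining product \eqref{Left-B-eigenstates}), and similarly express $|\text{\textbf{\textit{h}}}\rangle$ with $h_j$ lowered in terms of $\mathcal{D}_{-}(\xi_j^{(h_j+1)})$ acting on $|\text{\textbf{\textit{h}}}\rangle$. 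Then I would move the $\mathcal{A}_{-}$ through to act on the right eigenstate using the explicit interpolation formula for the right action of $\mathcal{D}_{-}(\lambda)$ (via the symmetry \eqref{Sym-A-D-} relating $\mathcal{A}_{-}$ and $\mathcal{D}_{-}$), evaluated at the special point $\lambda = 1/\xi_j^{(h_j)}$. At such a point almost all terms in the sum over $a$ drop out — the factor $(\Lambda - X_b^{(h_b)})$ in the product vanishes whenever $\Lambda = X_j^{(h_j)}$ picks out $b=j$, and the two inhomogeneous quantum-determinant terms vanish as well because their products contain the same vanishing factor — leaving only the shift term $\text{T}_j^{-\varphi_j}$ with a clean coefficient. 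This produces a one-step recursion
\begin{equation}
\langle h_1,\dots,h_j+1,\dots,h_{\mathsf{N}}|h_1,\dots,h_j+1,\dots,h_{\mathsf{N}}\rangle
= r_j(\text{\textbf{\textit{h}}})\,\langle\text{\textbf{\textit{h}}}|\text{\textbf{\textit{h}}}\rangle,
\end{equation}
where the ratio $r_j(\text{\textbf{\textit{h}}})$, after cancelling the $\mathsf{A}_{-}$ and $\mathsf{D}_{-}$ normalizing factors against each other using $\mathsf{D}_{-}(\lambda)=k(\lambda)\mathsf{A}_{-}(q/\lambda)$ and the quantum-determinant consistency relations quoted at the end of the second SoV theorem, reduces to a pure product of the rational factors $1/(X_a^{(h_a)} - X_b^{(h_b)})$ coming from the Lagrange-interpolation weights. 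Iterating the recursion from $\langle\Omega|\bar\Omega\rangle = 1$ (which corresponds to all $h_n$ at their extreme values) and reorganizing the telescoping product over all raising steps yields the claimed closed form $\prod_{1\le b<a\le\mathsf{N}} 1/(X_a^{(h_a)} - X_b^{(h_b)})$.

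The main obstacle I anticipate is bookkeeping rather than conceptual: tracking precisely which of the interpolation-formula contributions survive at the specialization $\lambda=1/\xi_j^{(h_j)}$, and verifying that all the $\mathsf{A}_{-}$, $\mathsf{D}_{-}$, $\kappa_a^{(h)}$ and $k(\lambda)$ prefactors cancel so cleanly that the residue of each elementary step is exactly the interpolation weight $\prod_{b\ne j}(X_j^{(h_j)}-X_b^{(h_b)})^{-1}$ and nothing more. One must also check that the order in which indices are raised does not matter — equivalently, that the various single-step ratios are mutually consistent — which follows from the commutativity of the $\mathcal{B}_{-}(\lambda)$ family but should be made explicit. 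I would also double-check the boundary/degenerate cases excluded by \eqref{condition-SoV-2} are exactly those needed to keep every denominator $X_a^{(h_a)}-X_b^{(h_b)}$ nonzero and every $\mathsf{A}_{-}(1/\xi_n^{(k)})$ invertible along the recursion.
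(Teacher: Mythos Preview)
Your recursion strategy (compute the matrix element of a single $\mathcal{A}_{-}$ sandwiched between $\langle h_1,\dots,h_a,\dots|$ and $|h_1,\dots,h_a+1,\dots\rangle$ in two ways, using the left interpolation formula on one side and the right $\mathcal{D}_{-}$ action combined with \eqref{Sym-A-D-} on the other, then equate) is exactly what the paper does to derive the ratio identity \eqref{T2F1}. The diagonality argument is also the same. So the core of your proposal is correct and matches the paper's route.

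The gap is your anchor. You claim the recursion terminates at $\langle\Omega|\bar\Omega\rangle=1$, but this is false on two counts. First, in the original $v_n$-eigenbasis one has $\langle\Omega|=\otimes_n\langle j_n-1,n|$ and $|\bar\Omega\rangle=\otimes_n|j_n,n\rangle$, so $\langle\Omega|\bar\Omega\rangle=0$ outright. Second, even in the SoV basis, $\langle\Omega|\propto\langle 0,\dots,0|$ while $|\bar\Omega\rangle\propto|p-1,\dots,p-1\rangle$ (look at the empty products in \eqref{Left-B-eigenstates} and \eqref{OpenCyD-right-eigenstates}); these sit at \emph{opposite} extremes, so their overlap is a strictly off-diagonal entry of $M$ and is zero by the very diagonality you just proved. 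The recursion only determines ratios of diagonal entries; it cannot manufacture a nonzero starting value out of $\langle\Omega|\bar\Omega\rangle$. The paper's resolution is precisely the one you glossed over: the common normalization \textsc{n} is \emph{not} fixed by requiring some reference overlap to equal $1$, but is chosen (see the explicit formula for \textsc{n} at the end of the paper's proof) so that $\langle p-1,\dots,p-1|p-1,\dots,p-1\rangle$ takes the value dictated by \eqref{T2M_jj}. That normalization involves the nontrivial matrix element $\langle\Omega|\prod_{n,k}\mathcal{A}_{-}(1/\xi_n^{(k)})/\mathsf{A}_{-}(1/\xi_n^{(k)})|\bar\Omega\rangle$, which is the actual bridge between the two reference states; once it is absorbed into \textsc{n}, the telescoping product of your recursion yields \eqref{T2M_jj}.
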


\begin{proof}
Note that the action of a left $\mathcal{B}_{-}$-eigenstate on a right $%
\mathcal{B}_{-}$-eigenstate is zero, for two different $\mathcal{B}_{-}$%
-eigenvalues. This implies that the matrix $M$ is diagonal; then to compute
its diagonal elements we compute the matrix elements $\theta _{a}\equiv
\langle h_{1},...,h_{a},...,h_{\mathsf{N}}|\mathcal{A}_{-}(\xi _{a}^{\left(
h_{a}+1\right) })|h_{1},...,h_{a}+1,...,h_{\mathsf{N}}\rangle $, where $a\in
\{1,...,\mathsf{N}\}$. Using the left action of the operator $\mathcal{A}%
_{-}(\xi _{a}^{\left( h_{a}+1\right) })$ we get:%
\begin{align}
\theta _{a}& =\frac{(q-1/q)\mathsf{A}_{-}(1/\xi _{a}^{\left( h_{a}\right) })%
}{((\xi _{a}^{(h_{a})})^{2}-1/(\xi _{a}^{(h_{a})})^{2})}\prod_{\substack{ %
b=1  \\ b\neq a}}^{\mathsf{N}}\frac{X_{a}^{(h_{a}+1)}-X_{b}^{(h_{b})}}{%
X_{a}^{(h_{a})}-X_{b}^{(h_{b})}}  \notag \\
& \times \left. \langle h_{1},...,h_{a}+1,...,h_{\mathsf{N}%
}|h_{1},...,h_{a}+1,...,h_{\mathsf{N}}\rangle \right.
\end{align}%
while using the decomposition (\ref{Sym-A-D-}) and the fact that:%
\begin{equation}
\langle h_{1},...,h_{a},...,h_{\mathsf{N}}|\mathcal{D}_{-}(1/\xi
_{a}^{\left( h_{a}+1\right) })|h_{1},...,h_{a}+1,...,h_{\mathsf{N}}\rangle =0
\end{equation}%
it holds:%
\begin{equation}
\theta _{a}=\frac{k_{a}^{\left( h_{a}+1\right) }(q-1/q)\mathsf{A}_{-}(1/\xi
_{a}^{\left( h_{a}\right) })}{((\xi _{a}^{(h_{a}+1)})^{2}-1/(\xi
_{a}^{(h_{a}+1)})^{2})}\langle h_{1},...,h_{a},...,h_{\mathsf{N}%
}|h_{1},...,h_{a},...,h_{\mathsf{N}}\rangle ,
\end{equation}%
and so:%
\begin{equation}
\theta _{a}=\frac{(q-1/q)\mathsf{A}_{-}(1/\xi _{a}^{\left( h_{a}\right) })}{%
((\xi _{a}^{(h_{a})})^{2}-1/(\xi _{a}^{(h_{a})})^{2})}\langle
h_{1},...,h_{a},...,h_{\mathsf{N}}|h_{1},...,h_{a},...,h_{\mathsf{N}}\rangle
.
\end{equation}%
These results lead to the identity:%
\begin{equation}
\frac{\langle h_{1},...,h_{a}+1,...,h_{\mathsf{N}}|h_{1},...,h_{a}+1,...,h_{%
\mathsf{N}}\rangle }{\langle h_{1},...,h_{a},...,h_{\mathsf{N}%
}|h_{1},...,h_{a},...,h_{\mathsf{N}}\rangle }=\prod_{\substack{ b=1  \\ %
b\neq a}}^{\mathsf{N}}\frac{X_{a}^{(h_{a})}-X_{b}^{(h_{b})}}{%
X_{a}^{(h_{a}+1)}-X_{b}^{(h_{b})}},  \label{T2F1}
\end{equation}%
from which one can prove:%
\begin{equation}
\frac{\langle h_{1},...,h_{\mathsf{N}}|h_{1},...,h_{\mathsf{N}}\rangle }{%
\langle p-1,...,p-1|p-1,...,p-1\rangle }=\prod_{1\leq b<a\leq \mathsf{N}}%
\frac{X_{a}^{\left( p-1\right) }-X_{b}^{\left( p-1\right) }}{X_{a}^{\left(
h_{a}\right) }-X_{b}^{\left( h_{b}\right) }}.  \label{T2F2}
\end{equation}%
This proves the proposition being%
\begin{equation}
\langle p-1,...,p-1|p-1,...,p-1\rangle =\prod_{1\leq b<a\leq \mathsf{N}}%
\frac{1}{X_{a}^{\left( p-1\right) }-X_{b}^{\left( p-1\right) }},
\end{equation}%
using the following choice of the normalization:%
\begin{equation}
\text{\textsc{n}}=\left( \prod_{1\leq b<a\leq \mathsf{N}}\left(
X_{a}^{\left( p-1\right) }-X_{b}^{\left( p-1\right) }\right) \left\langle
\Omega \right\vert \prod_{n=1}^{\mathsf{N}}\prod_{k_{n}=0}^{p-2}\frac{%
\mathcal{A}_{-}(1/\xi _{n}^{\left( k_{n}\right) })}{\mathsf{A}_{-}(1/\xi
_{n}^{\left( k_{n}\right) })}|\bar{\Omega}\rangle \right) ^{1/2}.
\end{equation}
\end{proof}

The previous theorem implies the following spectral decomposition of the
identity $\mathbb{I}$ in the SoV basis:%
\begin{equation}
\mathbb{I}\equiv \sum_{h_{1},...,h_{\mathsf{N}}=0}^{p-1}\prod_{1\leq b<a\leq 
\mathsf{N}}(X_{a}^{(h_{a})}-X_{a}^{(h_{a})})|h_{1},...,h_{\mathsf{N}}\rangle
\langle h_{1},...,h_{\mathsf{N}}|.  \label{Decmp-Id}
\end{equation}

\subsection{Separate states and their scalar products}

Let us introduce a class of left and right states, the so-called separate
states, characterized by the following type of decompositions in the left
and right SoV-basis: 
\begin{align}
\langle \alpha |& =\sum_{h_{1},...,h_{\mathsf{N}}=0}^{p-1}\prod_{a=1}^{%
\mathsf{N}}\alpha _{a}^{(h_{a})}\prod_{1\leq b<a\leq \mathsf{N}%
}(X_{a}^{(h_{a})}-X_{b}^{(h_{b})})\langle h_{1},...,h_{\mathsf{N}}|,
\label{Separate-left-SoV} \\
|\beta \rangle & =\sum_{h_{1},...,h_{\mathsf{N}}=0}^{p-1}\prod_{a=1}^{%
\mathsf{N}}\beta _{a}^{(h_{a})}\prod_{1\leq b<a\leq \mathsf{N}%
}(X_{a}^{(h_{a})}-X_{b}^{(h_{b})})|h_{1},...,h_{\mathsf{N}}\rangle ,
\label{Separate-right-SoV}
\end{align}%
where the coefficients $\alpha _{a}^{(h_{a})}$ and $\beta _{a}^{(h_{a})}$
are arbitrary complex numbers, meaning that the coefficients of these
separate states have a factorized form in this basis. These separate states
are interesting at least for two reasons : they admit simple determinant
scalar products, as it will be shown in the next proposition, and the
eigenstates of the transfer matrix are special separate states, as we will
show in the next section.

\begin{proposition}
Let us take an arbitrary separate left state $\langle \alpha |$ (separate
covector)\ and an arbitrary separate right state $|\beta \rangle $ (separate
vector) then it holds:%
\begin{equation}
\langle \alpha |\beta \rangle =det_{\mathsf{N}}||\mathcal{M}_{a,b}^{\left(
\alpha ,\beta \right) }||\text{ \ \ with \ }\mathcal{M}_{a,b}^{\left( \alpha
,\beta \right) }\equiv \sum_{h=0}^{p-1}\alpha _{a}^{(h)}\beta
_{a}^{(h)}(X_{a}^{(h)})^{(b-1)}.  \label{T2-Sov-Sc-p1}
\end{equation}
\end{proposition}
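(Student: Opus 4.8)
The plan is to reduce $\langle\alpha|\beta\rangle$ to a single $p^{\mathsf N}$-fold sum using orthogonality of the left and right $\mathcal B_-$-eigenstates, then to recognize the surviving weight as a Vandermonde determinant and to factorize the sum by expanding that determinant with the Leibniz formula. First I would insert the SoV decompositions \eqref{Separate-left-SoV} and \eqref{Separate-right-SoV} into $\langle\alpha|\beta\rangle$. By the previous proposition (see \eqref{T2M_jj}) the matrix $M=U^{(L)}U^{(R)}$ of scalar products of the $\mathcal B_-$-eigenstates is diagonal, hence
\begin{equation*}
\langle h_1,\dots,h_{\mathsf N}|h_1',\dots,h_{\mathsf N}'\rangle=\Bigl(\prod_{1\le b<a\le\mathsf N}\bigl(X_a^{(h_a)}-X_b^{(h_b)}\bigr)\Bigr)^{-1}\prod_{a=1}^{\mathsf N}\delta_{h_a,h_a'},
\end{equation*}
equivalently one uses the resolution of the identity \eqref{Decmp-Id} (whose weight should read $X_a^{(h_a)}-X_b^{(h_b)}=\langle\mathbf h|\mathbf h\rangle^{-1}$). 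Only the diagonal terms $\mathbf h=\mathbf h'$ survive, and the two Vandermonde factors carried by $\langle\alpha|$ and $|\beta\rangle$ combine into a square, one factor of which is cancelled by $\langle\mathbf h|\mathbf h\rangle^{-1}$, so that
\begin{equation*}
\langle\alpha|\beta\rangle=\sum_{h_1,\dots,h_{\mathsf N}=0}^{p-1}\prod_{a=1}^{\mathsf N}\alpha_a^{(h_a)}\beta_a^{(h_a)}\prod_{1\le b<a\le\mathsf N}\bigl(X_a^{(h_a)}-X_b^{(h_b)}\bigr).
\end{equation*}

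Next I would write the remaining antisymmetric product as a Vandermonde determinant, $\prod_{1\le b<a\le\mathsf N}(X_a^{(h_a)}-X_b^{(h_b)})=\det_{\mathsf N}\bigl\|(X_a^{(h_a)})^{b-1}\bigr\|$, and expand it by the Leibniz formula as $\sum_{\sigma\in\mathfrak{S}_{\mathsf N}}\operatorname{sgn}(\sigma)\prod_{a=1}^{\mathsf N}(X_a^{(h_a)})^{\sigma(a)-1}$. Substituting this and exchanging the two finite sums, the sum over each $h_a\in\{0,\dots,p-1\}$ decouples and produces exactly the entry $\mathcal M_{a,\sigma(a)}^{(\alpha,\beta)}=\sum_{h=0}^{p-1}\alpha_a^{(h)}\beta_a^{(h)}(X_a^{(h)})^{\sigma(a)-1}$ for each factor; reassembling,
\begin{equation*}
\langle\alpha|\beta\rangle=\sum_{\sigma\in\mathfrak{S}_{\mathsf N}}\operatorname{sgn}(\sigma)\prod_{a=1}^{\mathsf N}\mathcal M_{a,\sigma(a)}^{(\alpha,\beta)}=\det_{\mathsf N}\bigl\|\mathcal M_{a,b}^{(\alpha,\beta)}\bigr\|,
\end{equation*}
which is the assertion.

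I do not expect a genuine obstacle: granted the diagonal scalar-product formula of the previous proposition, the rest is bookkeeping. The points to watch are keeping the power of the Vandermonde factor straight (one power from each separate state, one removed by $\langle\mathbf h|\mathbf h\rangle^{-1}$, leaving a single Vandermonde) and remembering that the $\mathsf N$ variables in it are $X_1^{(h_1)},\dots,X_{\mathsf N}^{(h_{\mathsf N})}$, the first $\mathsf N$ of the extended $2\mathsf N$-set, which the conditions \eqref{E-SOV} render pairwise distinct; note however that the final identity, being polynomial in the $\alpha_a^{(h)}$, $\beta_a^{(h)}$ and the $X_a^{(h)}$, holds irrespective of any distinctness.
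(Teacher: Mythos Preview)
Your proof is correct and follows essentially the same route as the paper: insert the separate-state decompositions, use the diagonal scalar-product formula \eqref{T2M_jj} to collapse to a single sum weighted by one Vandermonde factor, and then recognise this as an $\mathsf N\times\mathsf N$ determinant. The paper phrases the final step as ``multilinearity of the determinant'' while you spell it out via the Leibniz expansion; these are the same argument at different levels of detail.
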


\begin{proof}
The proof follows the same method as in \cite{OpenCyGMN12-SG}. The formula $%
\left( \ref{T2M_jj}\right) $ implies, using the representation of the states 
$\langle \alpha |$ and $|\beta \rangle $, the following:%
\begin{equation}
\langle \alpha |\beta \rangle =\sum_{h_{1},...,h_{\mathsf{N}%
}=0}^{p-1}V(X_{1}^{(h_{1})},...,X_{\mathsf{N}}^{(h_{\mathsf{N}%
})})\prod_{a=1}^{\mathsf{N}}\alpha _{a}^{(h_{a})}\beta _{a}^{(h_{a})},
\end{equation}%
where we have denoted by $V(x_{1},...,x_{\mathsf{N}})\equiv \prod_{1\leq
b<a\leq \mathsf{N}}(x_{a}-x_{b})$ the Vandermonde determinant. Finally,
using the multilinearity of the determinant we get our result.
\end{proof}

\section{$\mathcal{T}$-spectrum characterization in the SoV basis}

In this section we present the complete characterization of the spectrum of
the transfer matrix $\mathcal{T}(\lambda )$ associated to the cyclic
representations of the 6-vertex reflection algebra. We first present some
preliminary properties satisfied by all the eigenvalue functions of the
transfer matrix $\mathcal{T}(\lambda )$:

\begin{lemma}
Denote by $\Sigma _{\mathcal{T}}$ the transfer matrix spectrum, then any $%
\tau (\lambda )\in \Sigma _{\mathcal{T}}$ is an even function of $\lambda $\
invariant under the transformation $\lambda \rightarrow 1/\lambda $ which
admits the following interpolation formula:%
\begin{align}
\tau (\lambda )=& \sum_{a=1}^{\mathsf{N}}\frac{\Lambda ^{2}-X^{2}}{%
(X_{a}^{(0)})^{2}-X^{2}}\prod_{\substack{ b=1  \\ b\neq a}}^{\mathsf{N}}%
\frac{\Lambda -X_{b}^{(0)}}{X_{a}^{(0)}-X_{b}^{(0)}}\tau (\zeta
_{a}^{(0)})+(-1)^{\mathsf{N}}\frac{(\Lambda +X)}{2}\prod_{b=1}^{\mathsf{N}}%
\frac{\Lambda -X_{b}^{(0)}}{X-X_{b}^{(0)}}\text{det}_{q}M(1)  \notag \\
& -(-1)^{\mathsf{N}}\frac{(\Lambda -X)}{2}\prod_{b=1}^{\mathsf{N}}\frac{%
\Lambda -X_{b}^{(0)}}{X+X_{b}^{(0)}}\frac{(\zeta _{+}+1/\zeta _{+})}{(\zeta
_{+}-1/\zeta _{+})}\frac{(\zeta _{-}+1/\zeta _{-})}{(\zeta _{-}-1/\zeta _{-})%
}\text{det}_{q}M(i)  \notag \\
& +(\Lambda ^{2}-X^{2})\tau _{\infty }\prod_{b=1}^{\mathsf{N}}(\Lambda
-X_{b}^{(0)}),  \label{set-tau}
\end{align}%
where:%
\begin{equation}
\tau _{\infty }\equiv \frac{\kappa _{+}\kappa _{-}(e^{\tau _{+}-\tau
_{-}}\prod_{b=1}^{\mathsf{N}}\delta _{b}\gamma _{b}+e^{\tau _{-}-\tau
_{+}}\prod_{b=1}^{\mathsf{N}}\alpha _{b}\beta _{b})}{\left( \zeta
_{+}-1/\zeta _{+}\right) \left( \zeta _{-}-1/\zeta _{-}\right) }.
\end{equation}
\end{lemma}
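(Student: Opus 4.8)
The plan is to establish three facts about any $\tau(\lambda)\in\Sigma_{\mathcal{T}}$: its parity and inversion symmetry, its functional form in $\lambda$ (a Laurent polynomial of controlled degree that descends to a polynomial in $\Lambda$), and enough interpolation data to pin down all coefficients. The first point is immediate from the corollary preceding the lemma: equation \eqref{symmetry-transfer} states $\mathcal{T}(\lambda)=\mathcal{T}(1/\lambda)=\mathcal{T}(-\lambda)$ as operator identities, so the same holds for every eigenvalue; hence $\tau$ is a function of $\Lambda=\lambda^2+1/\lambda^2$ alone. For the functional form, I would expand $\mathcal{T}(\lambda)$ from \eqref{OpenCytransfer} in terms of $\mathcal{A}_-,\mathcal{D}_-,\mathcal{B}_-,\mathcal{C}_-$ and track the powers of $\lambda$ coming from the explicit Laurent structure of $K_{a,\pm}(\lambda)$ and of $\mathcal{U}_{a,-}(\lambda)=M_a(\lambda)K_{a,-}(\lambda)\hat{M}_a(\lambda)$; combined with the symmetry this forces $\tau$ to be a polynomial of degree $\mathsf{N}+1$ in $\Lambda$ (equivalently the Laurent-polynomial degree seen in the SoV action of $\mathcal{A}_-(\lambda)$ in the left SoV Theorem, namely $\lambda^{2a-2\mathsf{N}+1}$ with $a$ up to $2\mathsf{N}+1$, symmetrized).

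Having fixed the space of candidate eigenvalue functions, I would reconstruct $\tau$ by Lagrange interpolation at $\mathsf{N}+2$ suitably chosen values of $\Lambda$. The natural nodes are $\Lambda=X_a^{(0)}$ for $a=1,\dots,\mathsf{N}$, together with $\Lambda=X=q+1/q$ and $\Lambda=-X$, and finally the behavior at $\lambda\to\infty$ (the coefficient $\tau_\infty$ of the top power). The values $\tau(X)$ and $\tau(-X)$ correspond to $\lambda^2=q^{\pm1}$, i.e. the special points $\lambda=q^{1/2}$ and $\lambda=iq^{1/2}$; there I would invoke the identities \eqref{OpenCyU-identities} for $\mathcal{U}_-(q^{1/2})$ and $\mathcal{U}_-(iq^{1/2})$ together with the corresponding reductions of $K_{a,+}$, which collapse $\mathcal{T}$ at those points to $\pm\mathrm{det}_qM(1)$ and a multiple of $\mathrm{det}_qM(i)$ respectively — this produces exactly the second and third lines of \eqref{set-tau} with their $(\zeta_\pm+1/\zeta_\pm)/(\zeta_\pm-1/\zeta_\pm)$ prefactors. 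The values $\tau(X_a^{(0)})$ remain as free interpolation data (they are the quantities the later discrete system constrains), giving the first sum. The leading coefficient $\tau_\infty$ I would compute by reading off the $\lambda\to\infty$ asymptotics of $\mathrm{tr}_a\{K_{a,+}(\lambda)M_a(\lambda)K_{a,-}(\lambda)\hat{M}_a(\lambda)\}$: the dominant term is a product of the highest-weight entries of the four matrices, which yields the stated combination of $\kappa_\pm e^{\pm(\tau_+-\tau_-)}$ with $\prod_b\delta_b\gamma_b$ and $\prod_b\alpha_b\beta_b$ divided by $(\zeta_+-1/\zeta_+)(\zeta_--1/\zeta_-)$.

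The main obstacle I anticipate is bookkeeping the exact $\lambda$-degree and the precise normalization of each Lagrange basis factor so that the three explicit lines of \eqref{set-tau} come out with the correct signs, the correct $(\Lambda^2-X^2)$, $(\Lambda\pm X)$ factors, and the correct $(-1)^{\mathsf{N}}$'s — in particular, disentangling which part of the $\lambda$-dependence is carried by the "geometric" Lagrange polynomials in $\Lambda$ versus the extra rational prefactors like $(\Lambda^2-X^2)/((X_a^{(0)})^2-X^2)$ that appear because $\tau$ is degree $\mathsf{N}+1$ rather than $\mathsf{N}-1$. Concretely, I would first write the generic degree-$(\mathsf{N}+1)$ symmetric interpolation formula with nodes $\{X_b^{(0)}\}\cup\{X,-X\}$ plus a top-degree term, then substitute the three known special values and the asymptotics, and verify that no spurious pole at $\Lambda=\pm X$ survives (consistency of the representation, already guaranteed by \eqref{symmetry-transfer} and \eqref{OpenCyU-identities}). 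Everything else — parity, inversion symmetry, and the polynomial nature — is a direct consequence of results already established in the excerpt.
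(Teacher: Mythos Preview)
Your proposal is correct and follows essentially the same route as the paper: invoke the symmetries \eqref{symmetry-transfer} to reduce $\tau$ to a polynomial in $\Lambda$, compute the central asymptotic $\tau_\infty$ directly from $\mathcal{T}(\lambda)$, use the identities \eqref{OpenCyU-identities} to obtain the explicit central values $\mathcal{T}(q^{1/2})=(-1)^{\mathsf{N}}X\,\mathrm{det}_qM(1)$ and $\mathcal{T}(iq^{1/2})=(-1)^{\mathsf{N}}X\frac{(\zeta_++1/\zeta_+)(\zeta_-+1/\zeta_-)}{(\zeta_+-1/\zeta_+)(\zeta_--1/\zeta_-)}\mathrm{det}_qM(i)$, and then interpolate. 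One small slip: the degree in $\Lambda$ is $\mathsf{N}+2$ (the asymptotic is $\lambda^{\pm 2(\mathsf{N}+2)}$), not $\mathsf{N}+1$, which is exactly the bookkeeping you flagged and is consistent with the $\mathsf{N}+2$ nodes $\{X_b^{(0)}\}\cup\{\pm X\}$ plus the top-degree term.
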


\begin{proof}
In the previous section, we have shown that the transformations $\lambda
\rightarrow -\lambda $ and $\lambda \rightarrow 1/\lambda $ are symmetries
of the transfer matrix $\mathcal{T}(\lambda )$\ so if $\tau (\lambda )\in
\Sigma _{\mathcal{T}}$ \ then $\tau (\lambda )$ is left unchanged under
these transformations. Moreover, the asymptotic of the transfer matrix can
be easily derived by direct computations, it is central and it holds:%
\begin{equation}
\tau _{\infty }=\lim_{\log \lambda \rightarrow \pm \infty }\lambda ^{\mp 2(%
\mathsf{N}+2)}\mathcal{T}(\lambda ).  \label{asymp-T}
\end{equation}%
The identities (\ref{OpenCyU-identities}) imply after some simple
computation that the transfer matrix is central in $q^{\pm 1/2}$ and $%
iq^{\pm 1/2}$ and that it holds:%
\begin{equation}
\mathcal{T}(q^{\pm 1/2})=\left( -1\right) ^{\mathsf{N}}X\text{det}_{q}M(1),%
\text{ \ }\mathcal{T}(iq^{\pm 1/2})=\left( -1\right) ^{\mathsf{N}}X\frac{%
(\zeta _{+}+1/\zeta _{+})}{(\zeta _{+}-1/\zeta _{+})}\frac{(\zeta
_{-}+1/\zeta _{-})}{(\zeta _{-}-1/\zeta _{-})}\text{det}_{q}M(i).
\end{equation}%
The known functional form of $\mathcal{T}(\lambda )$ w.r.t. $\lambda $
together with this identities imply the interpolation formula in the lemma.
\end{proof}

The previous lemma defines the set of polynomials to which belong the
transfer matrix eigenvalues; in order to completely characterize the
eigenvalues we introduce now the following one-parameter family $D_{\tau
}(\lambda )$ of $p\times p$ matrices:%
\begin{equation}
D_{\tau }(\lambda )\equiv 
\begin{pmatrix}
\tau (\lambda ) & -\text{\textsc{a}}{}(1/\lambda ) & 0 & \cdots & 0 & -\text{%
\textsc{a}}(\lambda ) \\ 
-\text{\textsc{a}}(q\lambda ) & \tau (q\lambda ) & -\text{\textsc{a}}%
(1/\left( q\lambda \right) ) & 0 & \cdots & 0 \\ 
0 & {\quad }\ddots &  &  &  & \vdots \\ 
\vdots &  & \cdots &  &  & \vdots \\ 
\vdots &  &  & \cdots &  & \vdots \\ 
\vdots &  &  &  & \ddots {\qquad } & 0 \\ 
0 & \ldots & 0 & -\text{\textsc{a}}(q^{2l-1}\lambda ) & \tau
(q^{2l-1}\lambda ) & -\text{\textsc{a}}(1/\left( q^{2l-1}\lambda \right) )
\\ 
-\text{\textsc{a}}(1/\left( q^{2l}\lambda \right) ) & 0 & \ldots & 0 & -%
\text{\textsc{a}}(q^{2l}\lambda ) & \tau (q^{2l}\lambda )%
\end{pmatrix}%
,  \label{FrbtD-matrix}
\end{equation}%
where for now $\tau (\lambda )$ is a generic function and we have defined:%
\begin{equation}
\text{\textsc{a}}(\lambda )=\mathsf{a}_{+}(\lambda )\mathsf{A}_{-}(\lambda ).
\end{equation}%
Note that the coefficient \textsc{a}$(\lambda )$ satisfies the quantum
determinant condition:%
\begin{equation}
\text{\textsc{a}}(\lambda q^{1/2})\text{\textsc{a}}(q^{1/2}/\lambda )=\frac{%
\mathsf{a}_{+}(\lambda q^{1/2})\mathsf{a}_{+}(q^{1/2}/\lambda )\text{det}_{q}%
\mathcal{U}_{-}(\lambda )}{\left( \lambda /q\right) ^{2}-\left( q/\lambda
\right) ^{2}}.
\end{equation}

\begin{lemma}
\label{Form-detD}Let $\tau (\lambda )$ be a function of $\lambda $\
invariant under the transformation $\lambda \rightarrow 1/\lambda $ then det$%
_{p}D_{\tau }(\lambda )$ is a function of $\lambda ^{p}$ invariant under the
transformation $\lambda ^{p}\rightarrow 1/\lambda ^{p}$.
\end{lemma}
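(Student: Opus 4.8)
The plan is to exploit the uniform cyclic band structure of $D_\tau(\lambda)$. Indexing rows and columns by $k\in\{0,\dots,2l\}$ and reading all indices modulo $p$, every nonzero entry obeys one of three rules: the diagonal entry in row $k$ is $\tau(q^k\lambda)$, the left (sub-diagonal) neighbour is $-\text{\textsc{a}}(q^k\lambda)$, and the right (super-diagonal) neighbour is $-\text{\textsc{a}}(1/(q^k\lambda))$. The two corner entries $-\text{\textsc{a}}(\lambda)$ and $-\text{\textsc{a}}(1/(q^{2l}\lambda))$ are exactly the wrap-around instances of these rules at $k=0$ and $k=2l$, so $D_\tau(\lambda)$ is completely encoded by the three functions $m\mapsto\tau(q^m\lambda)$, $m\mapsto-\text{\textsc{a}}(q^m\lambda)$ and $m\mapsto-\text{\textsc{a}}(1/(q^m\lambda))$ on $\mathbb{Z}/p\mathbb{Z}$. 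Each of the two claimed symmetries will follow by showing that a relabelling of the indices carries $D_\tau(q\lambda)$, respectively $D_\tau(1/\lambda)$, back into $D_\tau(\lambda)$; a relabelling is conjugation by a permutation matrix, which leaves the determinant unchanged.

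For the $\lambda^p$-periodicity, let $P$ be the permutation matrix of the cyclic shift $k\mapsto k+1$ on $\mathbb{Z}/p\mathbb{Z}$, so that $(P^{-1}MP)_{k,k'}=M_{k+1,k'+1}$. Comparing entries rule by rule gives $D_\tau(q\lambda)=P^{-1}D_\tau(\lambda)P$: substituting $q\lambda$ for $\lambda$ turns the row-$k$ data, which depends on $\lambda$ only through $q^k\lambda$, into the row-$(k+1)$ data of $D_\tau(\lambda)$. The one place needing care is the wrap-around at $k=2l$, where $q^{2l+1}\lambda=q^p\lambda=\lambda$ because $q$ is a $p$-th root of unity. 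Hence $\det D_\tau(q\lambda)=\det D_\tau(\lambda)$. Since the matrix entries are rational functions of $\lambda$, so is $\det D_\tau(\lambda)$; invariance under scaling by a primitive $p$-th root of unity then forces it to be a function of $\lambda^p$ (for a Laurent polynomial, simply the vanishing of all monomials $\lambda^n$ with $p\nmid n$).

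For the reflection symmetry, let $R$ be the permutation matrix of the involution $k\mapsto -k$ on $\mathbb{Z}/p\mathbb{Z}$, so that $(R\,M\,R^{-1})_{k,k'}=M_{-k,-k'}$. I claim $D_\tau(1/\lambda)=R\,D_\tau(\lambda)\,R^{-1}$. On the diagonal this reads $\tau(q^k/\lambda)=\tau(q^{-k}\lambda)$, which is precisely the hypothesis that $\tau$ is invariant under $\lambda\mapsto1/\lambda$. Off the diagonal, the super-diagonal entry $-\text{\textsc{a}}(\lambda/q^k)=-\text{\textsc{a}}(q^{-k}\lambda)$ of $D_\tau(1/\lambda)$ in row $k$ coincides with the sub-diagonal entry of $D_\tau(\lambda)$ in row $-k$, and symmetrically the sub-diagonal entry $-\text{\textsc{a}}(q^k/\lambda)=-\text{\textsc{a}}(1/(q^{-k}\lambda))$ matches the super-diagonal entry of $D_\tau(\lambda)$ in row $-k$. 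Notably this matching requires no symmetry of $\text{\textsc{a}}$ whatsoever, only the inversion invariance of $\tau$. Therefore $\det D_\tau(1/\lambda)=\det D_\tau(\lambda)$, and combined with the previous paragraph the resulting function of $\lambda^p$ is invariant under $\lambda^p\mapsto1/\lambda^p$.

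I do not expect a genuine obstacle here: once the two permutations are identified the verifications are entirely mechanical. The only real subtlety is the bookkeeping of indices at the corner entries under the cyclic identification — in particular confirming that the shift identity closes up correctly at $k=2l$ via $q^p=1$, and keeping straight that under the reflection the super-diagonal band of $D_\tau(1/\lambda)$ is paired with the sub-diagonal band of $D_\tau(\lambda)$, and vice versa.
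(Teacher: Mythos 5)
Your proof is correct and follows essentially the same route as the paper: the paper's operations $O_C O_R$ and $C_{p\rightarrow 1}R_{p\rightarrow 1}$ are precisely your conjugations by the reflection $k\mapsto -k$ and the cyclic shift $k\mapsto k+1$ on indices read modulo $p$, yielding $\det_p D_\tau(1/\lambda)=\det_p D_\tau(\lambda)$ and $\det_p D_\tau(q\lambda)=\det_p D_\tau(\lambda)$ with only the inversion invariance of $\tau$ used. Your formulation even has the minor merit of making the sign bookkeeping automatic (conjugation preserves determinants) and of spelling out the final step, left implicit in the paper, that invariance under multiplication by the primitive $p$-th root of unity $q$ forces dependence on $\lambda^p$ alone.
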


\begin{proof}
Let us observe that for the invariance of the function $\tau (\lambda )$
under $\lambda \rightarrow 1/\lambda $, we have that:%
\begin{equation}
D_{\tau }(1/\lambda )=O_{C}O_{R}(D_{\tau }(\lambda )),
\end{equation}%
where we have denoted by $O_{R}$ the operation on a $p\times p$ matrix which
exchanges the couple of rows $p-i$ with $i+2$ for any $i\in \{0,...,\left(
p-3\right) /2\}$, similarly $O_{C}$ is the operation on a $p\times p$ matrix
which exchanges the couple of columns $p-i$ with $i+2$ for any $i\in
\{0,...,\left( p-3\right) /2\}$. It is then trivial to see that:%
\begin{equation}
\text{det}_{p}D_{\tau }(1/\lambda )=\text{det}_{p}D_{\tau }(\lambda ).
\end{equation}%
Let us now observe that:%
\begin{equation}
D_{\tau }(\lambda q)=C_{p\rightarrow 1}R_{p\rightarrow 1}(D_{\tau }(\lambda
)),
\end{equation}%
where $R_{p\rightarrow 1}$\ is the operation on a $p\times p$ matrix which
move the last row in the first row leaving the order of the others unchanged
and similarly $C_{p\rightarrow 1}$\ is the operation on a $p\times p$ matrix
which move the last column in the first column leaving the order of the
others unchanged. This clearly implies that:%
\begin{equation}
\text{det}_{p}D_{\tau }(q\lambda )=\text{det}_{p}D_{\tau }(\lambda ),
\end{equation}%
which completes the proof of the lemma.
\end{proof}

As mentioned before, we will restrict our attention to the special boundary
condition $b_{+}(\lambda )=0$, for which the SoV-basis coincides with the $%
\mathcal{B}_{-}$-eigenstates basis. That is we consider the spectral problem
of the transfer matrix: 
\begin{equation}
\mathcal{T}(\lambda )\equiv \mathcal{T}_{\setminus }(\lambda )+c_{+}\left(
\lambda \right) \mathcal{B}_{-}(\lambda ),  \label{OpenCyT-}
\end{equation}%
under the following conditions on the boundary parameters:%
\begin{equation}
b_{+}\left( \lambda \right) =0\text{ and }b_{-}\left( \lambda \right) \neq 0.
\label{OpenCyboundary-}
\end{equation}%
Note that the condition $b_{+}\left( \lambda \right) =0$, keeping instead if
desired a $c_{+}\left( \lambda \right) \neq 0$, can be simply realized by
the following renormalization of the boundary parameters $\kappa
_{+}=e^{-\gamma }\kappa _{+}^{\prime }$\ and $e^{\tau _{+}}=e^{\tau
_{+}^{\prime }-\gamma }$ by sending $\gamma \rightarrow +\infty $. Under
this limit the asymptotic of the transfer matrix reads:%
\begin{equation}
\tau _{\infty }=\frac{(-1)^{\mathsf{N}}\kappa _{+}^{\prime }\kappa
_{-}e^{\tau _{-}-\tau _{+}^{\prime }}\prod_{b=1}^{\mathsf{N}}\alpha
_{b}\beta _{b}}{\left( \zeta _{+}-1/\zeta _{+}\right) \left( \zeta
_{-}-1/\zeta _{-}\right) }.  \label{tau_inf-Tri-K+}
\end{equation}%
In the following we will suppress the unnecessary prime in $\kappa _{+}$ and 
$\tau _{+}$.

\begin{theorem}
\label{OpenCyC:T-eigenstates-}If the conditions:%
\begin{equation}
\mu _{n,+}^{2}\neq q^{-2h}\zeta _{+}^{\pm 2},\text{ }\forall h\in
\{1,...,p-1\},\text{ }n\in \{1,...,\mathsf{N}\}  \label{cond-simple}
\end{equation}
and $\left( \ref{E-SOV}\right) $-$\left( \ref{condition-SoV-2}\right) $ are
satisfied, then $\mathcal{T}(\lambda )$ has simple spectrum and $\Sigma _{%
\mathcal{T}}$ coincides with the set of polynomials $\tau (\lambda )$ of the
form $(\ref{set-tau})$ with $(\ref{tau_inf-Tri-K+})$ which satisfy the
following discrete system of equations:%
\begin{equation}
\text{det}\text{$D$}_{\tau }(\zeta _{a}^{(0)})=0,\text{ }\forall a\in
\{1,...,\mathsf{N}\}.  \label{OpenCyI-Functional-eq}
\end{equation}

\begin{itemize}
\item[\textsf{I)}] The right $\mathcal{T}$-eigenstate corresponding to $\tau
(\lambda )\in \Sigma _{\mathcal{T}}$ is defined by the following
decomposition in the right SoV-basis:%
\begin{equation}
|\tau \rangle =\sum_{h_{1},...,h_{\mathsf{N}}=0}^{p-1}\prod_{a=1}^{\mathsf{N}%
}Q_{\tau ,a}^{(h_{a})}\prod_{1\leq b<a\leq \mathsf{N}%
}(X_{a}^{(h_{a})}-X_{b}^{(h_{b})})|h_{1},...,h_{\mathsf{N}}\rangle ,
\label{OpenCyeigenT-r-D}
\end{equation}%
where the $Q_{\tau ,a}^{(h_{a})}$ are the unique nontrivial solution up to
normalization of the linear homogeneous system:%
\begin{equation}
\text{$D$}_{\tau }(\zeta _{a}^{(0)})\left( 
\begin{array}{c}
Q_{\tau ,a}^{(0)} \\ 
\vdots \\ 
Q_{\tau ,a}^{(p-1)}%
\end{array}%
\right) =\left( 
\begin{array}{c}
0 \\ 
\vdots \\ 
0%
\end{array}%
\right) .  \label{OpenCyt-Q-relation}
\end{equation}

\item[\textsf{II)}] The left $\mathcal{T}$-eigenstate corresponding to $\tau
(\lambda )\in \Sigma _{\mathcal{T}}$ is defined by the following
decomposition in the left SoV-basis:%
\begin{equation}
\langle \tau |=\sum_{h_{1},...,h_{\mathsf{N}}=0}^{p-1}\prod_{a=1}^{\mathsf{N}%
}\hat{Q}_{\tau ,a}^{(h_{a})}\prod_{1\leq b<a\leq \mathsf{N}%
}(X_{a}^{(h_{a})}-X_{b}^{(h_{b})})\langle h_{1},...,h_{\mathsf{N}}|,
\label{OpenCyeigenT-l-D}
\end{equation}%
where the $\hat{Q}_{\tau ,a}^{(h_{a})}$ are the unique nontrivial solution
up to normalization of the linear homogeneous system:%
\begin{equation}
\left( 
\begin{array}{ccc}
\hat{Q}_{\tau ,a}^{(0)} & \ldots & \hat{Q}_{\tau ,a}^{(p-1)}%
\end{array}%
\right) \left( \text{$\hat{D}$}_{\tau }(\zeta _{a}^{(0)})\right)
^{t_{0}}=\left( 
\begin{array}{ccc}
0 & \ldots & 0%
\end{array}%
\right) ,
\end{equation}%
and $\hat{D}$$_{\tau }(\lambda )$ is the family of $p\times p$ matrices
defined substituting in $D_{\tau }(\lambda )$ the coefficient \textsc{a}$%
(\lambda )$ with 
\begin{equation}
\text{\textsc{d}}(\lambda )=\mathsf{d}_{+}(\lambda )\mathsf{D}_{-}(\lambda ).
\end{equation}
\end{itemize}

Finally, using ideas from \cite{OpenCyGMN12-SG}, let us note that if $\tau
(\lambda )\neq \tau ^{\prime }(\lambda )\in \Sigma _{\mathcal{T}}$:%
\begin{equation}
\sum_{b=1}^{\mathsf{N}}\mathcal{M}_{a,b}^{\left( \tau ,\tau ^{\prime
}\right) }x_{b}^{\left( \tau ,\tau ^{\prime }\right) }=0\text{ \ \ \ \ }%
\forall a\in \{1,...,\mathsf{N}\},  \label{Deg-sp-cond}
\end{equation}%
where the $x_{b}^{\left( \tau ,\tau ^{\prime }\right) }$ are defined by:%
\begin{equation}
\tau (\lambda )-\tau ^{\prime }(\lambda )\equiv \left( \Lambda
^{2}-X^{2}\right) \sum_{b=1}^{\mathsf{N}}x_{b}^{\left( \tau ,\tau ^{\prime
}\right) }\Lambda ^{b-1},
\end{equation}%
which in particular implies that the action of $\langle \tau |$ on $|\tau
^{\prime }\rangle $ is zero.
\end{theorem}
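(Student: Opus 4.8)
The plan is to convert the operator equation $\mathcal{T}(\lambda)|\tau\rangle=\tau(\lambda)|\tau\rangle$ into a finite, variable–by–variable set of discrete equations by testing it only at the separation points $\zeta_a^{(h)}$, from which the matrix $D_\tau$ emerges directly. By the preceding Lemma every candidate eigenvalue is even, invariant under $\lambda\to1/\lambda$, and fixed by the interpolation data $(\ref{set-tau})$–$(\ref{tau_inf-Tri-K+})$; hence the equation for all $\lambda$ is equivalent to its validity at the points $\zeta_a^{(h)}$ together with the central points $\pm q^{\pm1/2},\pm iq^{\pm1/2}$, the latter being automatic since $\mathcal{T}$ is central there and $\tau$ of the form $(\ref{set-tau})$ reproduces exactly those central values via $(\ref{OpenCyU-identities})$. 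So I first record the left action of $\mathcal{T}$ at these points. Writing the diagonal part in the $\mathcal{A}_-$–form $(\ref{T-diag-A})$ and evaluating the interpolation formula of the left SoV Theorem at $\lambda=\zeta_a^{(h_a)}$ and at $\lambda=1/\zeta_a^{(h_a)}=\zeta_{a+\mathsf{N}}^{(h_a)}$, three facts collapse the sum: the product $\prod_{b=1}^{\mathsf{N}}(\Lambda-X_b^{(h_b)})$ annihilates the two central contributions; the factor $(\lambda\zeta_{a'}^{(h_{a'})}-1/(\zeta_{a'}^{(h_{a'})}\lambda))$ together with the Vandermonde ratios leaves a single surviving summand whose prefactor is exactly $1$; and $\text{\textsc{b}}_{\mathbf{h}}(\zeta_a^{(h_a)})=0$ in $(\ref{EigenValue-B_})$ (since $a_{\mathbf{h}}$ vanishes at $\xi_a^{(h_a)}$) so that the $c_+\mathcal{B}_-$ term drops out. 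One thus finds the clean shifts $\langle\mathbf{h}|\mathcal{A}_-(\zeta_a^{(h_a)})=\mathsf{A}_-(\zeta_a^{(h_a)})\langle\mathbf{h}|T_a^{-}$ and $\langle\mathbf{h}|\mathcal{A}_-(1/\zeta_a^{(h_a)})=\mathsf{A}_-(1/\zeta_a^{(h_a)})\langle\mathbf{h}|T_a^{+}$, whence $\langle\mathbf{h}|\mathcal{T}(\zeta_a^{(h_a)})=\text{\textsc{a}}(\zeta_a^{(h_a)})\langle\mathbf{h}|T_a^{-}+\text{\textsc{a}}(1/\zeta_a^{(h_a)})\langle\mathbf{h}|T_a^{+}$.

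Using the bi-orthogonality $\langle\mathbf{h}|\mathbf{k}\rangle=\delta_{\mathbf{h},\mathbf{k}}/V(\mathbf{h})$ of the Proposition $(\ref{T2M_jj})$, projecting the eigenvalue equation onto $\langle\mathbf{h}|$ gives, for each $a$, the cyclic three-term recursion
\[
\text{\textsc{a}}(\zeta_a^{(h_a)})\,\Psi(\dots,h_a-1,\dots)+\text{\textsc{a}}(1/\zeta_a^{(h_a)})\,\Psi(\dots,h_a+1,\dots)=\tau(\zeta_a^{(h_a)})\,\Psi(\mathbf{h}),
\]
with $\Psi(\mathbf{h})\equiv\langle\mathbf{h}|\tau\rangle$. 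Since the coefficients depend only on $(a,h_a)$, the dependence on $h_a$ separates and $\Psi(\mathbf{h})=\prod_a Q_{\tau,a}^{(h_a)}$; expanding $|\tau\rangle$ through the decomposition of the identity $(\ref{Decmp-Id})$ reproduces the factorized form $(\ref{OpenCyeigenT-r-D})$, and the recursion is exactly the homogeneous cyclic–tridiagonal system $(\ref{OpenCyt-Q-relation})$. As $|\tau\rangle\neq0$ forces $\vec{Q}_{\tau,a}\neq0$ for every $a$, one gets $\det D_\tau(\zeta_a^{(0)})=0$: this is the necessity half of $(\ref{OpenCyI-Functional-eq})$. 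The left eigenstate is treated symmetrically, projecting $\langle\tau|\mathcal{T}(\zeta_a^{(h_a)})|\mathbf{h}\rangle$ and using the $\mathcal{D}_-$–form $(\ref{T-diag-D})$ of the diagonal part, which yields the transposed system built from $\text{\textsc{d}}(\lambda)=\mathsf{d}_+(\lambda)\mathsf{D}_-(\lambda)$, i.e. $(\ref{OpenCyeigenT-l-D})$.

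For sufficiency I would reverse this: given $\tau$ of the form $(\ref{set-tau})$ with $\det D_\tau(\zeta_a^{(0)})=0$, take $\vec{Q}_{\tau,a}$ a nonzero kernel vector of $D_\tau(\zeta_a^{(0)})$, define $|\tau\rangle$ by $(\ref{OpenCyeigenT-r-D})$, and run the first paragraph backwards: $\mathcal{T}(\lambda)|\tau\rangle$ and $\tau(\lambda)|\tau\rangle$ then agree at all $\zeta_a^{(h)}$ and at the central points, hence, sharing the same functional form, for all $\lambda$. The one point needing the genericity hypotheses $(\ref{cond-simple})$ and $(\ref{E-SOV})$–$(\ref{condition-SoV-2})$ is that $\vec{Q}_{\tau,a}$ be \emph{unique} up to scale, i.e. that $\ker D_\tau(\zeta_a^{(0)})$ be one–dimensional; I would prove this by exhibiting a nonvanishing $(p-1)\times(p-1)$ minor, whose entries are shifted values of $\text{\textsc{a}}$ kept non-degenerate precisely by those conditions. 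This makes $\tau\mapsto|\tau\rangle$ well defined, identifies $\Sigma_{\mathcal{T}}$ with the announced solution set, and yields geometric simplicity. To upgrade to a simple and complete spectrum I would count the solutions of $(\ref{OpenCyI-Functional-eq})$: using Lemma \ref{Form-detD} (so $\det_p D_\tau$ is a function of $\lambda^p$) and the diagonal entries $\tau(\zeta_a^{(k)})$ being affine in the $\mathsf{N}$ reduced unknowns $\tau(\zeta_b^{(0)})$, the system has exactly $p^{\mathsf{N}}$ solutions, matching $\dim\mathcal{H}$; the bi-orthogonality below then forces the $|\tau\rangle$ to be linearly independent and to span, giving completeness and simplicity. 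I expect this counting to be the main obstacle, since the $\mathsf{N}$ equations are coupled through $(\ref{set-tau})$ and the Bezout-type bound must be shown to be attained under the stated genericity; this is the step to import from the analogous analysis of \cite{OpenCyN12-0}.

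Finally, for the concluding identity I would argue directly from the two $Q$-systems. Because the central and asymptotic data in $(\ref{set-tau})$ are representation-fixed and common to $\tau,\tau'\in\Sigma_{\mathcal{T}}$, their difference is $\tau-\tau'=(\Lambda^2-X^2)\sum_{b=1}^{\mathsf{N}}x_b^{(\tau,\tau')}\Lambda^{b-1}$, so that $\sum_b x_b^{(\tau,\tau')}(X_a^{(h)})^{b-1}=(\tau-\tau')(\zeta_a^{(h)})/((X_a^{(h)})^2-X^2)$. Substituting the entries $(\ref{T2-Sov-Sc-p1})$ with $\alpha=\hat{Q}_\tau$, $\beta=Q_{\tau'}$ gives
\[
\sum_{b=1}^{\mathsf{N}}\mathcal{M}_{a,b}^{(\tau,\tau')}\,x_b^{(\tau,\tau')}=\sum_{h=0}^{p-1}\frac{\hat{Q}_{\tau,a}^{(h)}Q_{\tau',a}^{(h)}}{(X_a^{(h)})^2-X^2}\,\big(\tau(\zeta_a^{(h)})-\tau'(\zeta_a^{(h)})\big).
\]
I would then replace $\tau'(\zeta_a^{(h)})Q_{\tau',a}^{(h)}$ by the right recursion $(\ref{OpenCyt-Q-relation})$ and $\tau(\zeta_a^{(h)})\hat{Q}_{\tau,a}^{(h)}$ by the transposed left recursion, and perform a discrete summation by parts over the cyclic index $h\in\mathbb{Z}/p\mathbb{Z}$: the quantum-determinant relations between $\text{\textsc{a}}$ and $\text{\textsc{d}}$ verified in the right SoV Theorem turn the summand into an exact cyclic difference, which sums to zero. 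This discrete Wronskian mechanism gives $(\ref{Deg-sp-cond})$, and since $\vec{x}^{(\tau,\tau')}\neq0$ exactly when $\tau\neq\tau'$, the singularity of $\mathcal{M}^{(\tau,\tau')}$ and hence $\langle\tau|\tau'\rangle=\det_{\mathsf{N}}\|\mathcal{M}^{(\tau,\tau')}\|=0$ follow at once.
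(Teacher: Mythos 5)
Your proposal is correct and, for the core of the theorem, follows essentially the paper's own route: you reduce the eigenvalue equation to the discrete Baxter-type system at the separation points (the paper's system (\ref{SOVBax1}), which you rederive by evaluating the left action of $\mathcal{T}$ at the $\zeta_a^{(h_a)}$ and noting that the $\mathcal{B}_-$-eigenvalue (\ref{EigenValue-B_}) vanishes there), recast it as the homogeneous systems $D_\tau(\zeta_a^{(0)})\vec{Q}_{\tau,a}=0$ so that nontriviality of the wave function forces (\ref{OpenCyI-Functional-eq}); you get factorization, uniqueness and simplicity from one-dimensionality of the kernel, and sufficiency by matching the two sides of the eigenvalue equation on $\mathsf{N}+3$ interpolation data (the $\mathsf{N}$ points $X_a^{(h_a)}$, the two central values fixed by (\ref{OpenCyU-identities}), and the asymptotics) for a degree-$(\mathsf{N}+2)$ polynomial in $\Lambda$; the left states are handled through the \textsc{d}-system exactly as in the paper. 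Your proposed nonvanishing minor is indeed available: since $\text{\textsc{a}}(\zeta_a^{(0)})=0$, deleting the last row and first column of $D_\tau(\zeta_a^{(0)})$ leaves a triangular matrix with diagonal entries $-\text{\textsc{a}}(1/(q^{k}\zeta_a^{(0)}))$, $k=0,\dots,p-2$, nonzero under (\ref{E-SOV})--(\ref{condition-SoV-2}) and (\ref{cond-simple}); this is precisely the paper's rank-$(p-1)$ argument.

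Two points of divergence are worth recording. First, your ``counting'' step --- that (\ref{OpenCyI-Functional-eq}) has exactly $p^{\mathsf{N}}$ solutions so that the eigenstates span $\mathcal{H}$ --- is not in the paper and is not needed: the theorem asserts simplicity and the identification of $\Sigma_{\mathcal{T}}$, both of which you already obtain from kernel uniqueness plus the two-way argument, so the Bezout-type claim you flag as ``the main obstacle'' (and do not prove) can simply be dropped. Second, for the orthogonality identity (\ref{Deg-sp-cond}) the paper only cites \cite{OpenCyGMN12-SG}, whereas you supply an explicit discrete-Wronskian argument; it does close, but the mechanism is not the quantum-determinant relation alone. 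Setting $w_h=\bigl((X_a^{(h)})^2-X^2\bigr)^{-1}$, the substitution of the two recursions leaves cross terms that cancel because $w_{h+1}\,\text{\textsc{d}}(\zeta_a^{(h+1)})=w_h\,\text{\textsc{a}}(1/\zeta_a^{(h)})$ and $w_h\,\text{\textsc{d}}(1/\zeta_a^{(h)})=w_{h+1}\,\text{\textsc{a}}(\zeta_a^{(h+1)})$; these follow from the identities $\text{\textsc{d}}(\lambda)=\alpha(\lambda)\,\text{\textsc{a}}(q/\lambda)$ and $\alpha(1/\lambda)\alpha(q\lambda)=1$ stated in the paper's proof, via the direct computation $\alpha(q\zeta_a^{(h)})=\bigl((X_a^{(h+1)})^2-X^2\bigr)/\bigl((X_a^{(h)})^2-X^2\bigr)$. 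With these weights the cyclic sum vanishes term by term (pairwise), rather than as a telescoping ``exact cyclic difference'' --- a cosmetic but worth-fixing point in your write-up.
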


\begin{proof}
The spectrum (eigenvalues and eigenstates) of the transfer matrix $\mathcal{T%
}(\lambda )$ in the SoV-basis is characterized by the following discrete
system of equations:%
\begin{equation}
\tau (\xi _{n}^{(h_{n})})\Psi _{\tau }(\text{\textbf{\textit{h}}})\,=\text{\textsc{a}}%
(\xi _{n}^{(h_{n})})\Psi _{\tau }(\mathsf{T}_{n}^{-}(\text{\textbf{\textit{h}}}))+%
\text{\textsc{a}}(1/\xi _{n}^{(h_{n})})\Psi _{\tau }(\mathsf{T}_{n}^{+}(%
\text{\textbf{\textit{h}}})),  \label{SOVBax1}
\end{equation}%
for any$\,n\in \{1,...,\mathsf{N}\}$ and \textbf{\textit{h}}$\in \{0,...,p-1\}^{%
\mathsf{N}}$, i.e. a system of $p^{\mathsf{N}}$ Baxter-like equations in the 
\textit{wave-functions}:%
\begin{equation}
\Psi _{\tau }(\text{\textbf{\textit{h}}})\equiv \langle h_{1},...,h_{\mathsf{N}}|\tau
\rangle ,
\end{equation}%
of the $\mathcal{T}$-eigenstate $|\tau \rangle $ associated to $\tau
(\lambda )\in \Sigma _{\mathcal{T}}$, where:%
\begin{equation}
\mathsf{T}_{n}^{\pm }(\text{\textbf{\textit{h}}})\equiv (h_{1},\dots ,h_{n}\pm
1,\dots ,h_{\mathsf{N}}).
\end{equation}%
This system admits the following equivalent representation as $\mathsf{N}$
linear systems of homogeneous equations:%
\begin{equation}
\text{$D$}_{\tau }(\xi _{n}^{(0)})\left( 
\begin{array}{c}
\Psi _{\tau }(h_{1},...,h_{n}=0,...,h_{\mathsf{N}}) \\ 
\Psi _{\tau }(h_{1},...,h_{n}=1,...,h_{\mathsf{N}}) \\ 
\vdots \\ 
\Psi _{\tau }(h_{1},...,h_{n}=p-1,...,h_{\mathsf{N}})%
\end{array}%
\right) =\left( 
\begin{array}{c}
0 \\ 
0 \\ 
\vdots \\ 
0%
\end{array}%
\right) ,  \label{homo-system}
\end{equation}%
for any$\,n\in \{1,...,\mathsf{N}\}$ and for any $h_{m\neq n}$ in $%
\{0,...,p-1\}$. Then the condition $\tau (\lambda )\in \Sigma _{\mathcal{T}}$
implies the compatibility equations for these linear systems, i.e. it must
hold:%
\begin{equation}
\text{det}\text{$D$}_{\tau }(\xi _{a}^{(0)})=0,\text{ }\forall a\in \{1,...,%
\mathsf{N}\},
\end{equation}%
note that in fact for the previous lemma this conditions are verified also
in the points $\zeta _{a}^{(0)}=1/\xi _{a-\mathsf{N}}^{(0)}$ for any $a\in \{%
\mathsf{N}+1,...,2\mathsf{N}\}$. The rank of the matrices in $\left( \ref%
{homo-system}\right) $ is $p-1$ being 
\begin{equation}
\text{\textsc{a}}(\xi _{n}^{(h)})\neq 0,\text{ \textsc{a}}(1/\xi
_{n}^{(h-1)})\neq 0\ \ \forall h\in \{1,...,p-1\},\text{ }n\in \{1,...,%
\mathsf{N}\},
\end{equation}%
for the conditions $\left( \ref{E-SOV}\right) $, $\left( \ref%
{condition-SoV-2}\right) $ and $\left( \ref{cond-simple}\right) $. Then (up
to an overall normalization) the solution is unique and independent from the 
$h_{m\neq n}\in \{0,...,p-1\}$ for any$\,n\in \{1,...,\mathsf{N}\}$. So
fixing $\tau (\lambda )\in \Sigma _{\mathcal{T}}$ there exists (up to
normalization) one and only one corresponding $\mathcal{T}$-eigenstate $%
|\tau \rangle $ with coefficients of the factorized form given in $\left( %
\ref{OpenCyeigenT-r-D}\right) $-$\left( \ref{OpenCyt-Q-relation}\right) $;
i.e. the $\mathcal{T}$-spectrum is simple.

Vice versa, if $\tau (\lambda )$ is in the set of functions (\ref{set-tau})
and satisfies (\ref{OpenCyI-Functional-eq}), then the state $|\tau \rangle $
defined by $\left( \ref{OpenCyeigenT-r-D}\right) $-$\left( \ref%
{OpenCyt-Q-relation}\right) $ satisfies:%
\begin{equation}
\left\langle h_{1},...,h_{\mathsf{N}}\right\vert \mathcal{T}(\zeta
_{n}^{(h_{n})})|\tau \rangle =\tau (\zeta _{n}^{(h_{n})})\langle
h_{1},...,h_{\mathsf{N}}|\tau \rangle \text{ \ }\forall n\in \{1,...,\mathsf{%
N}\}
\end{equation}%
for any $\mathcal{B}_{-}$-eigenstate $\left\langle h_{1},...,h_{\mathsf{N}%
}\right\vert $ and this implies:%
\begin{equation}
\left\langle h_{1},...,h_{\mathsf{N}}\right\vert \mathcal{T}(\lambda )|\tau
\rangle =\tau (\lambda )\langle h_{1},...,h_{\mathsf{N}}|\tau \rangle \text{
\ \ }\forall \lambda \in \mathbb{C},
\end{equation}%
i.e. $\tau (\lambda )\in \Sigma _{\mathcal{T}}$\ and $|\tau \rangle $ is the
corresponding $\mathcal{T}$-eigenstate.

For the left $\mathcal{T}$-eigenstates the proof follows as above, we
just remark in this case that the matrix elements:%
\begin{equation}
\langle \tau |\mathcal{T}(\zeta _{n}^{(h_{n})})|h_{1},...,h_{\mathsf{N}%
}\rangle ,
\end{equation}%
are computed in the right $\mathcal{B}_{-}$-representation:%
\begin{equation}
\tau (\zeta _{n}^{(h_{n})})\hat{\Psi}_{\tau }(\text{\textbf{\textit{h}}})\,=\text{%
\textsc{d}}(\zeta _{n}^{(h_{n})})\hat{\Psi}_{\tau }(\mathsf{T}_{n}^{-}(\text{%
\textbf{\textit{h}}}))+\text{\textsc{d}}(1/\zeta _{n}^{(h_{n})})\hat{\Psi}_{\tau }(%
\mathsf{T}_{n}^{+}(\text{\textbf{\textit{h}}})),\text{ \ \ }\forall n\in \{1,...,%
\mathsf{N}\}
\end{equation}%
where:%
\begin{equation}
\hat{\Psi}_{\tau }(\text{\textbf{\textit{h}}})\equiv \langle \tau |h_{1},...,h_{%
\mathsf{N}}\rangle ,\text{ \ \textsc{d}}(\lambda )\equiv \mathsf{d}%
_{+}(\lambda )\mathsf{D}_{-}(\lambda ).
\end{equation}%
It is simple to observe that it holds:%
\begin{equation}
\text{det}\text{$D$}_{\tau }(\lambda )=\text{det}\text{$\hat{D}$}_{\tau
}(\lambda ),
\end{equation}%
as a consequence of the following identities:%
\begin{equation}
\text{\textsc{d}}(\lambda )=\alpha (\lambda )\text{\textsc{a}}(q/\lambda )
\end{equation}%
and:%
\begin{equation}
\alpha (1/\lambda )\alpha (q\lambda )=1,\text{ \ }\prod_{a=0}^{p-1}\alpha
(\lambda q^{a})=1
\end{equation}%
where:%
\begin{equation}
\alpha (\lambda )=\frac{s(\lambda )}{s(q/\lambda )}k(\lambda ),\text{ \ }%
s(\lambda )=\frac{\lambda ^{2}q-1/\left( q\lambda ^{2}\right) }{\lambda
^{2}-1/\lambda ^{2}}.
\end{equation}

Finally, the identity (\ref{Deg-sp-cond}) is quite general in the SoV
framework and can be proven also in our case of cyclic representations of
the 6-vertex reflection algebra by following the same proof first given in
the case of a periodic lattice \cite{OpenCyGMN12-SG}.
\end{proof}

For next applications it is interesting to show that we can obtain the
coefficients of a left transfer matrix eigenstates in terms of those of the
right one by introducing a recursion formula that produces both coefficients
in terms of the transfer matrix eigenvalues. The following lemma holds:

\begin{lemma}
Let $\tau (\lambda )\in \Sigma _{\mathcal{T}}$ then we have:%
\begin{equation}
\frac{\hat{Q}_{\tau ,a}^{(h)}}{\hat{Q}_{\tau ,a}^{(h-1)}}=\frac{\text{%
\textsc{a}}(1/\zeta _{a}^{(h-1)})}{\text{\textsc{d}}(1/\zeta _{a}^{(h-1)})}%
\frac{Q_{\tau ,a}^{(h)}}{Q_{\tau ,a}^{(h-1)}},
\end{equation}%
being:%
\begin{equation}
\frac{\hat{Q}_{\tau ,a}^{(h)}}{\hat{Q}_{\tau ,a}^{(0)}}=\frac{t_{\tau
,a}^{(h)}}{\prod_{b=0}^{h-1}\text{\textsc{d}}(1/\zeta _{a}^{(b)})},\text{ \
\ }\frac{Q_{\tau ,a}^{(h)}}{Q_{\tau ,a}^{(0)}}=\frac{t_{\tau ,a}^{(h)}}{%
\prod_{b=0}^{h-1}\text{\textsc{a}}(1/\zeta _{a}^{(b)})},
\label{Recurence-tau-L/R}
\end{equation}%
where the $t_{\tau ,a}^{(h)}$ are defined by the following recursion formula:%
\begin{equation}
t_{\tau ,a}^{(h)}=\tau (\zeta _{a}^{(h-1)})t_{\tau ,a}^{(h-1)}-\frac{\text{%
det}_{q}K_{+}(\xi _{a}^{(h-3/2)})\text{det}_{q}\mathcal{U}_{-}(\xi
_{a}^{(h-3/2)})}{(\xi _{a}^{(h-1/2)})^{2}-1/(\xi _{a}^{(h-1/2)})^{2}}t_{\tau
,a}^{(h-2)}\text{ for }h\in \left\{ 1,...,p-1\right\}
\label{coeff-recurrence}
\end{equation}%
with the following initial conditions $t_{\tau ,a}^{(-1)}=0$, $t_{\tau
,a}^{(0)}=1$.
\end{lemma}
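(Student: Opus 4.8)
The plan is to pass to the one-site content of the separation of variables and then to rely on two ingredients: the three-term Baxter recursion obeyed by the $Q$-coefficients and the gauge relation between the right and left Baxter matrices $D_{\tau }$ and $\hat{D}_{\tau }$. First I would record that, by the spectral decomposition of the identity $(\ref{Decmp-Id})$ and the mutual orthogonality of the left and right $\mathcal{B}_{-}$-bases, the wave-functions of a $\mathcal{T}$-eigenstate factorize as $\Psi _{\tau }(\text{\textbf{\textit{h}}})=\prod_{a}Q_{\tau ,a}^{(h_{a})}$ and $\hat{\Psi}_{\tau }(\text{\textbf{\textit{h}}})=\prod_{a}\hat{Q}_{\tau ,a}^{(h_{a})}$. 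Inserting these products into the SoV Baxter equation $(\ref{SOVBax1})$ and into its left counterpart, the spectator factors $\prod_{b\neq a}$ cancel, leaving for each fixed coordinate $a$ the one-site three-term recursions $\tau (\zeta _{a}^{(h)})Q_{\tau ,a}^{(h)}=\text{\textsc{a}}(\zeta _{a}^{(h)})Q_{\tau ,a}^{(h-1)}+\text{\textsc{a}}(1/\zeta _{a}^{(h)})Q_{\tau ,a}^{(h+1)}$, and the same relation with $\text{\textsc{a}}\rightarrow \text{\textsc{d}}$ for $\hat{Q}_{\tau ,a}^{(h)}$; these are precisely the rows of $D_{\tau }(\zeta _{a}^{(0)})$ and $\hat{D}_{\tau }(\zeta _{a}^{(0)})$.

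For the first (ratio) identity I would use the relations already established in the proof of Theorem \ref{OpenCyC:T-eigenstates-}, namely $\text{\textsc{d}}(\lambda )=\alpha (\lambda )\text{\textsc{a}}(q/\lambda )$ together with $\alpha (1/\lambda )\alpha (q\lambda )=1$, which imply the site-by-site symmetry $\text{\textsc{d}}(\zeta _{a}^{(h)})\text{\textsc{d}}(1/\zeta _{a}^{(h-1)})=\text{\textsc{a}}(\zeta _{a}^{(h)})\text{\textsc{a}}(1/\zeta _{a}^{(h-1)})$. Writing $\hat{Q}_{\tau ,a}^{(h)}=\rho _{a}^{(h)}Q_{\tau ,a}^{(h)}$ and substituting into the two recursions, this symmetry makes the two resulting conditions on $\rho _{a}^{(h)}/\rho _{a}^{(h-1)}$ compatible, forcing $\rho _{a}^{(h)}/\rho _{a}^{(h-1)}=\text{\textsc{a}}(1/\zeta _{a}^{(h-1)})/\text{\textsc{d}}(1/\zeta _{a}^{(h-1)})$, which is exactly the claimed ratio relation. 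Telescoping it from $0$ to $h$, and defining $t_{\tau ,a}^{(h)}$ through the right formula $Q_{\tau ,a}^{(h)}/Q_{\tau ,a}^{(0)}=t_{\tau ,a}^{(h)}/\prod_{b=0}^{h-1}\text{\textsc{a}}(1/\zeta _{a}^{(b)})$, immediately yields the left formula with $\text{\textsc{a}}\rightarrow \text{\textsc{d}}$; thus the two "being" identities are equivalent once the ratio relation holds.

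It then remains to check that this $t_{\tau ,a}^{(h)}$ obeys the stated recursion. Substituting the ansatz $Q_{\tau ,a}^{(h)}=Q_{\tau ,a}^{(0)}\,t_{\tau ,a}^{(h)}/\prod_{b=0}^{h-1}\text{\textsc{a}}(1/\zeta _{a}^{(b)})$ into the bulk three-term relation linearizes it to $t_{\tau ,a}^{(h)}=\tau (\zeta _{a}^{(h-1)})t_{\tau ,a}^{(h-1)}-\text{\textsc{a}}(\zeta _{a}^{(h-1)})\text{\textsc{a}}(1/\zeta _{a}^{(h-2)})\,t_{\tau ,a}^{(h-2)}$. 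The product $\text{\textsc{a}}(\zeta _{a}^{(h-1)})\text{\textsc{a}}(1/\zeta _{a}^{(h-2)})$ equals $\text{\textsc{a}}(\lambda q^{1/2})\text{\textsc{a}}(q^{1/2}/\lambda )$ at $\lambda =\xi _{a}^{(h-3/2)}$, and I would rewrite it using the quantum-determinant condition satisfied by $\text{\textsc{a}}$ (stated after $(\ref{FrbtD-matrix})$) together with the boundary quantum-determinant identity $\mathsf{a}_{+}(\lambda q^{1/2})\mathsf{a}_{+}(q^{1/2}/\lambda )/(\lambda ^{2}/q^{2}-q^{2}/\lambda ^{2})=\text{det}_{q}K_{+}(\lambda )/((\lambda q)^{2}-1/(\lambda q)^{2})$, obtaining exactly the coefficient $\text{det}_{q}K_{+}(\xi _{a}^{(h-3/2)})\text{det}_{q}\mathcal{U}_{-}(\xi _{a}^{(h-3/2)})/((\xi _{a}^{(h-1/2)})^{2}-1/(\xi _{a}^{(h-1/2)})^{2})$ of the Lemma. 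Running the same computation with $\text{\textsc{a}}\rightarrow \text{\textsc{d}}$ produces the identical coefficient, by the $\alpha $-identities above, so $\hat{t}_{\tau ,a}^{(h)}=t_{\tau ,a}^{(h)}$; this is the same mechanism that gives $\text{det}_{p}D_{\tau }=\text{det}_{p}\hat{D}_{\tau }$.

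The delicate point, which I would handle last, concerns the two ends of the recursion. The bulk relations only determine $t_{\tau ,a}^{(h)}$ for $2\leq h\leq p-1$; the initial data $t_{\tau ,a}^{(-1)}=0$, $t_{\tau ,a}^{(0)}=1$, hence $t_{\tau ,a}^{(1)}=\tau (\zeta _{a}^{(0)})$, single out the principal solution of the three-term recursion, and one must verify that this principal solution reproduces the normalized null vector of the cyclic matrix $D_{\tau }(\zeta _{a}^{(0)})$. This is where the quantization condition $\text{det}_{p}D_{\tau }(\zeta _{a}^{(0)})=0$ from Theorem \ref{OpenCyC:T-eigenstates-}, the cyclicity $q^{p}=1$, and the $\lambda \rightarrow 1/\lambda $ symmetry of $\tau $ and $\text{\textsc{a}}$ must be used to close the corner entries of $D_{\tau }$. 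I expect the careful bookkeeping of these corner/closure contributions to be the main technical hurdle; the rest is routine substitution and use of the quantum-determinant relations.
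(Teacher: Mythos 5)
Your core computation is the same as the paper's. The paper reduces the claim to the one-site three-term SoV relations (the rows of $D_{\tau }(\zeta _{a}^{(0)})$ and $\hat{D}_{\tau }(\zeta _{a}^{(0)})$), linearizes them by the substitution $Q_{\tau ,a}^{(h)}=t_{\tau ,a}^{(h)}Q_{\tau ,a}^{(0)}/\prod_{b=0}^{h-1}\text{\textsc{a}}(1/\zeta _{a}^{(b)})$, and converts the product $\text{\textsc{a}}(\zeta _{a}^{(h-1)})\text{\textsc{a}}(1/\zeta _{a}^{(h-2)})$ into the coefficient of $(\ref{coeff-recurrence})$ via the quantum-determinant identities, with the identical coefficient for $\text{\textsc{d}}$ --- exactly your middle two paragraphs. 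Your reordering (derive the ratio relation first from $\hat{Q}=\rho Q$ and the site symmetry $\text{\textsc{d}}(\zeta _{a}^{(h)})\text{\textsc{d}}(1/\zeta _{a}^{(h-1)})=\text{\textsc{a}}(\zeta _{a}^{(h)})\text{\textsc{a}}(1/\zeta _{a}^{(h-1)})$, then telescope) versus the paper's (prove the two product formulas in $(\ref{Recurence-tau-L/R})$ by the same induction and read off the ratio as a corollary) is immaterial.

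The genuine gap is precisely the step you postpone to the end, and your diagnosis of it is wrong. Anchoring the recursion at $h=1$ requires neither $\text{det}_{p}D_{\tau }(\zeta _{a}^{(0)})=0$ nor any cyclicity bookkeeping: the corner entry $-\text{\textsc{a}}(\zeta _{a}^{(0)})$ of $D_{\tau }(\zeta _{a}^{(0)})$ vanishes \emph{identically}, since $\zeta _{a}^{(0)}=\xi _{a}^{(0)}$ is a zero of the factor $a(\lambda q^{-1/2})$ inside $\mathsf{A}_{-}(\lambda )$, so the first SoV equation is already the two-term relation $\tau (\zeta _{a}^{(0)})Q_{\tau ,a}^{(0)}=\text{\textsc{a}}(1/\zeta _{a}^{(0)})Q_{\tau ,a}^{(1)}$; this is the paper's opening observation ($\text{\textsc{a}}(\zeta _{a}^{(0)})=0$, $\text{\textsc{a}}(1/\zeta _{a}^{(0)})\neq 0$), and it is what fixes $t_{\tau ,a}^{(-1)}=0$, $t_{\tau ,a}^{(0)}=1$, $t_{\tau ,a}^{(1)}=\tau (\zeta _{a}^{(0)})$. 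The determinant condition only makes the last cyclic row hold, and that row is never used: the formulas of the lemma stop at $h=p-1$ and are generated by the first $p-1$ rows alone. The same omission undermines your compatibility argument for the ratio relation: rows $2,\dots ,p-1$ alone leave a two-dimensional solution space, so uniqueness of the null vector does not yet force $\hat{Q}_{\tau ,a}^{(h)}=\rho _{a}^{(h)}Q_{\tau ,a}^{(h)}$ --- one boundary row must be checked, and it closes only because, under the constraints in force where the lemma is applied ($d_{n}=-q^{2j_{n}-1}c_{n}$, whence $\mu _{n,-}^{2}=\mu _{n,+}^{2}$), one also has $\text{\textsc{d}}(\zeta _{a}^{(0)})=0$ and $\text{\textsc{a}}(1/\zeta _{a}^{(p-1)})=0$, so both matrices are anchored two-term at the same end. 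Once these explicit vanishings of the dressed functions are recorded, the rest of your argument goes through and coincides with the paper's short induction.
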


\begin{proof}
We just need to prove the last two identities in this lemma as the first
ones are simple consequences of them. Let us prove the second identity in $%
\left( \ref{Recurence-tau-L/R}\right) $. For $h=1$ we have:%
\begin{equation}
\frac{Q_{\tau ,a}^{(1)}}{Q_{\tau ,a}^{(0)}}=\frac{\tau (\zeta _{a}^{(0)})}{%
\text{\textsc{a}}(1/\zeta _{a}^{(0)})},
\end{equation}%
by the SoV Baxter's like equation and the condition:%
\begin{equation}
\text{\textsc{a}}(\zeta _{a}^{(0)})=0,\text{ \ \textsc{a}}(1/\zeta
_{a}^{(0)})\neq 0.
\end{equation}%
This means that the second formula in $\left( \ref{Recurence-tau-L/R}\right) 
$ and $\left( \ref{coeff-recurrence}\right) $ are both satisfied for $h=1$
once the initial conditions $t_{\tau ,a}^{(-1)}=0$, $t_{\tau ,a}^{(0)}=1$
are imposed. So let us assume that these two identities are satisfied for
any $h\in \left\{ 1,...,x\right\} $ and let us prove it for $h=x+1\leq p-1$.
We have that by the SoV equations it holds:%
\begin{equation}
\frac{Q_{\tau ,a}^{(x+1)}}{Q_{\tau ,a}^{(x-1)}}=\frac{\tau (\zeta _{a}^{(x)})%
}{\text{\textsc{a}}(1/\zeta _{a}^{(x)})}\frac{Q_{\tau ,a}^{(x)}}{Q_{\tau
,a}^{(x-1)}}-\frac{\text{\textsc{a}}(\zeta _{a}^{(x)})}{\text{\textsc{a}}%
(1/\zeta _{a}^{(x)})}
\end{equation}%
and so:%
\begin{equation}
\frac{Q_{\tau ,a}^{(x+1)}}{Q_{\tau ,a}^{(0)}}=\frac{\tau (\zeta _{a}^{(x)})}{%
\text{\textsc{a}}(1/\zeta _{a}^{(x)})}\frac{Q_{\tau ,a}^{(x)}}{Q_{\tau
,a}^{(0)}}-\frac{\text{\textsc{a}}(\zeta _{a}^{(x)})}{\text{\textsc{a}}%
(1/\zeta _{a}^{(x)})}\frac{Q_{\tau ,a}^{(x-1)}}{Q_{\tau ,a}^{(0)}}
\end{equation}%
which by using the formula $\left( \ref{Recurence-tau-L/R}\right) $ for $%
h=x-1$ and $h=x$ reads:%
\begin{equation}
\frac{Q_{\tau ,a}^{(x+1)}}{Q_{\tau ,a}^{(0)}}=\frac{\tau (\zeta
_{a}^{(x)})t_{\tau ,a}^{(x)}-\text{\textsc{a}}(\zeta _{a}^{(x)})\text{%
\textsc{a}}(1/\zeta _{a}^{(x-1)})t_{\tau ,a}^{(x-1)}}{\prod_{b=0}^{x}\text{%
\textsc{a}}(1/\zeta _{a}^{(b)})},
\end{equation}%
which just proves the formulae $\left( \ref{Recurence-tau-L/R}\right) $ and $%
\left( \ref{coeff-recurrence}\right) $ for $h=x+1$ once we recall that:%
\begin{equation}
\text{\textsc{a}}(\zeta _{a}^{(x)})\text{\textsc{a}}(1/\zeta _{a}^{(x-1)})=%
\frac{\text{det}_{q}K_{a,+}(\xi _{a}^{(x-1/2)})\text{det}_{q}\mathcal{U}%
_{a,-}(\xi _{a}^{(x-1/2)})}{(\xi _{a}^{(h+1/2)})^{2}-1/(\xi
_{a}^{(h+1/2)})^{2}}.
\end{equation}%
The proof of the first identity in $\left( \ref{Recurence-tau-L/R}\right) $
can be done in the same way, we have just to use now that:%
\begin{equation}
\text{\textsc{d}}(\zeta _{a}^{(x)})\text{\textsc{d}}(1/\zeta _{a}^{(x-1)})=%
\frac{\text{det}_{q}K_{a,+}(\xi _{a}^{(x-1/2)})\text{det}_{q}\mathcal{U}%
_{a,-}(\xi _{a}^{(x-1/2)})}{(\xi _{a}^{(h+1/2)})^{2}-1/(\xi
_{a}^{(h+1/2)})^{2}}.
\end{equation}
\end{proof}

\section{\label{functional equation}Functional equation characterizing the $%
\mathcal{T}$-spectrum}

In this section we assume that at any quantum site the following constraints
are satisfied:%
\begin{eqnarray}
b_{n}^{p}+a_{n}^{p} &=&0,\text{ \ \ }\forall n\in \{1,...,\mathsf{N}\}, \\
c_{n}^{p}+d_{n}^{p} &=&0,\text{ \ \ }\forall n\in \{1,...,\mathsf{N}\}.
\end{eqnarray}%
Under these conditions we can explicitly construct the left and right basis
which diagonalize the one-parameter family of commuting operators $\mathcal{B%
}_{-}(\lambda )$ and $\mathcal{C}_{-}(\lambda )$ associated to the most
general $K_{-}(\lambda )$ matrix. In the previous section we have done this
construction for the $\mathcal{B}_{-}(\lambda )$ eigenstates basis, clearly
we can do a similar construction also for $\mathcal{C}_{-}(\lambda )$. In
the previous sections we have explained how the spectral problem of the
transfer matrix $\mathcal{T}(\lambda )$ can be characterized by SoV in the $%
\mathcal{B}_{-}(\lambda )$ eigenstates basis when $b_{+}\left( \lambda
\right) =0$ keeping instead if desired a $c_{+}\left( \lambda \right) \neq 0$%
, by the same approach we can characterize the $\mathcal{T}(\lambda )$%
-spectrum by SoV in the $\mathcal{C}_{-}(\lambda )$ eigenstates basis when $%
c_{+}\left( \lambda \right) =0$ keeping instead if desired a $b_{+}\left(
\lambda \right) \neq 0$.

Here, we show that the SoV characterization of the $\mathcal{T}(\lambda )$%
-spectrum can be reformulated by functional equations. Before this, let us
just observe that the central value of the asymptotic of $\mathcal{T}%
(\lambda )$ is given by:%
\begin{equation}
\tau _{\infty }=\frac{(-1)^{\mathsf{N}}\kappa _{+} \kappa
_{-}e^{\epsilon (\tau _{-}-\tau _{+} )}\prod_{b=1}^{\mathsf{N}%
}\alpha _{b}\beta _{b}}{\left( \zeta _{+}-1/\zeta _{+}\right) \left( \zeta
_{-}-1/\zeta _{-}\right) }\text{,}
\end{equation}%
where $\epsilon =+1$ for $b_{+}\left( \lambda \right) =0$ (keeping instead a 
$c_{+}\left( \lambda \right) \neq 0$) and $\epsilon =-1$ for $c_{+}\left(
\lambda \right) =0$ (keeping instead a $b_{+}\left( \lambda \right) \neq 0$%
). Let us moreover introduce the following notations:%
\begin{eqnarray}
\overline{\text{\textsc{a}}}(\lambda ) &=&\lambda q^{-1/2}\text{\textsc{a}}%
(\lambda ), \\
\overline{\text{\textsc{a}}}_{\infty } &=&\lim_{\lambda \rightarrow +\infty
}\lambda ^{-2(\mathsf{N}+2)}\overline{\text{\textsc{a}}}(\lambda )=\frac{%
(-1)^{\mathsf{N}}\alpha _{-}\beta _{-}\zeta _{+}\kappa _{-}\prod_{n=1}^{%
\mathsf{N}}b_{n}c_{n}}{q^{1+\mathsf{N}}\left( \zeta _{+}-1/\zeta _{+}\right)
\left( \zeta _{-}-1/\zeta _{-}\right) },
\end{eqnarray}%
and:%
\begin{equation}
F(\lambda )=\prod_{b=1}^{2\mathsf{N}}\left( \frac{\lambda ^{p}}{\left( \zeta
_{b}^{(0)}\right) ^{p}}-\frac{\left( \zeta _{b}^{(0)}\right) ^{p}}{\lambda
^{p}}\right) ,
\end{equation}%
then the following results hold:

\begin{proposition}
If the conditions $\left( \ref{E-SOV}\right) $, $\left( \ref{condition-SoV-2}%
\right) $ and $\left( \ref{cond-simple}\right) $ are satisfied and:%
\begin{equation}
b_{n}=-q^{2j_{n}-1}a_{n},\text{ \ }d_{n}=-q^{2j_{n}-1}c_{n},
\label{Double-nilp}
\end{equation}%
then $\mathcal{T}(\lambda )$ has simple spectrum and $\tau (\lambda )$ of
the form $(\ref{set-tau})$ with $(\ref{tau_inf-Tri-K+})$ is an element of $%
\Sigma _{\mathcal{T}}$ \ if and only if det$_{p}\bar{D}_{\tau }(\lambda )$
is a Laurent polynomial of degree $\mathsf{N}+2$ in the variable:%
\begin{equation}
Z=\lambda ^{2p}+\frac{1}{\lambda ^{2p}}
\end{equation}%
which satisfies the following functional equation:%
\begin{equation}
\text{det}_{p}\bar{D}_{\tau }(\lambda )-F(\lambda )\sum_{a=0}^{1}\text{det}%
_{p}\bar{D}_{\tau }(i^{a}q^{1/2})\frac{\left( \lambda ^{p}+\left( -1\right)
^{a}/\lambda ^{p}\right) ^{2}}{4\left( -1\right) ^{a}F(i^{a}q^{1/2})}=\left(
\tau _{\infty }^{p}-\overline{\text{\textsc{a}}}_{\infty }^{p}\right)
F(\lambda )\left( \lambda ^{2p}-\frac{1}{\lambda ^{2p}}\right) ^{2}.
\label{Func-EQ-1}
\end{equation}%
Here $\bar{D}_{\tau }(\lambda )$ is obtained from $D_{\tau }(\lambda )$ by
substituting in it \textsc{a}$(\lambda )$ with $\overline{\text{\textsc{a}}}%
(\lambda )$.
\end{proposition}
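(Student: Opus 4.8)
The plan is to promote the discrete characterization of Theorem~\ref{OpenCyC:T-eigenstates-}---that $\tau(\lambda)\in\Sigma_{\mathcal T}$ if and only if $\det_p D_\tau(\zeta_a^{(0)})=0$ for all $a\in\{1,\dots,\mathsf{N}\}$---to the single functional equation \eqref{Func-EQ-1}. I would carry this out in three steps: first fix the analytic nature of $\det_p\bar{D}_\tau$; then use the double constraint \eqref{Double-nilp} to show that $\det_p\bar{D}_\tau$ and $\det_p D_\tau$ agree at the interpolation nodes $\zeta_a^{(0)}$, so that the spectral conditions are equivalent to divisibility of $\det_p\bar{D}_\tau$ by $F$; and finally recognise \eqref{Func-EQ-1} as the interpolation identity for the resulting degree-two cofactor.

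For the first step, the role of the rescaling $\overline{\text{\textsc{a}}}(\lambda)=\lambda q^{-1/2}\text{\textsc{a}}(\lambda)$ is to restore parity. Checking the factors of $\mathsf{a}_+$ and $\mathsf{A}_-$ one finds $\text{\textsc{a}}(-\lambda)=-\text{\textsc{a}}(\lambda)$, hence $\overline{\text{\textsc{a}}}(-\lambda)=\overline{\text{\textsc{a}}}(\lambda)$; since also $\tau(-\lambda)=\tau(\lambda)$, every entry of $\bar{D}_\tau(\lambda)$ is even in $\lambda$ and therefore $\det_p\bar{D}_\tau(-\lambda)=\det_p\bar{D}_\tau(\lambda)$. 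Running the row/column permutation argument of Lemma~\ref{Form-detD} verbatim (the monomial prefactors carried by $\overline{\text{\textsc{a}}}$ are invariant under $\lambda\to q\lambda$ and merely interchanged under $\lambda\to1/\lambda$) gives in addition invariance under $\lambda\to q\lambda$ and $\lambda\to1/\lambda$. Since $p$ is odd, these three symmetries force $\det_p\bar{D}_\tau$ to be a function of $Z=\lambda^{2p}+\lambda^{-2p}$. Its degree follows from expanding the cyclic tridiagonal determinant as a continuant (the terms avoiding the two corner entries) minus the two wrap-around cycle products $\prod_{k=0}^{p-1}\overline{\text{\textsc{a}}}(q^k\lambda)$ and $\prod_{k=0}^{p-1}\overline{\text{\textsc{a}}}(1/(q^k\lambda))$; with $\overline{\text{\textsc{a}}}(\lambda)\sim\overline{\text{\textsc{a}}}_\infty\lambda^{2(\mathsf{N}+2)}$, $\tau(\lambda)\sim\tau_\infty\lambda^{2(\mathsf{N}+2)}$ and $q^p=1$, the leading power is $\lambda^{2(\mathsf{N}+2)p}$ with coefficient $\tau_\infty^p-\overline{\text{\textsc{a}}}_\infty^p$, so $\det_p\bar{D}_\tau$ has degree $\mathsf{N}+2$ in $Z$.

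The heart of the argument is the second step, and this is where the extra constraint $d_n=-q^{2j_n-1}c_n$ enters. Together with $b_n=-q^{2j_n-1}a_n$ it forces $\mu_{n,-}^2=\mu_{n,+}^2$, so the zeros of $\text{\textsc{a}}$ fall into complete length-$p$ orbits under $\lambda\to q\lambda$: explicitly $\text{\textsc{a}}$ vanishes both at $\xi_n^{(0)}$ through the factor $a(\lambda q^{-1/2})$ of $\mathsf{A}_-$ and at $1/\xi_n^{(p-1)}$ through $d(1/(q^{1/2}\lambda))$. Hence at $\lambda=\zeta_a^{(0)}$ both cycle products vanish; since the continuant is common to $\bar{D}_\tau$ and $D_\tau$ while the two cycle products differ only by the monomials $\lambda^{\pm p}$, this yields $\det_p\bar{D}_\tau(\zeta_a^{(0)})=\det_p D_\tau(\zeta_a^{(0)})$. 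By Theorem~\ref{OpenCyC:T-eigenstates-} the membership $\tau\in\Sigma_{\mathcal T}$ is then equivalent to the vanishing of $\det_p\bar{D}_\tau$ at the $\mathsf{N}$ values $Z=(\zeta_b^{(0)})^{2p}+(\zeta_b^{(0)})^{-2p}$, that is, to $F(\lambda)=\prod_{b=1}^{\mathsf{N}}(Z-(\zeta_b^{(0)})^{2p}-(\zeta_b^{(0)})^{-2p})$ dividing $\det_p\bar{D}_\tau$ in the variable $Z$ (the same orbit structure also produces the clean proportionality of the cycle products to $F$).

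For the last step, divisibility means $\det_p\bar{D}_\tau=F\,G$ with $G$ of degree $(\mathsf{N}+2)-\mathsf{N}=2$ in $Z$ and leading coefficient $\tau_\infty^p-\overline{\text{\textsc{a}}}_\infty^p$. A short evaluation gives $Z=2$ at $\lambda=q^{1/2}$ and $Z=-2$ at $\lambda=iq^{1/2}$, so the Lagrange interpolation of the quadratic $G$ through the nodes $Z=\pm2$ with prescribed leading term, written in the basis $(\lambda^p\pm\lambda^{-p})^2=Z\pm2$ and $(\lambda^{2p}-\lambda^{-2p})^2=Z^2-4$, is precisely \eqref{Func-EQ-1} after multiplication by $F$; conversely, evaluating \eqref{Func-EQ-1} at any zero of $F$ annihilates every term explicitly carrying $F$ and leaves $\det_p\bar{D}_\tau(\zeta_b^{(0)})=0$, recovering the spectral conditions. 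I expect the decisive difficulty to be the third paragraph: proving carefully from \eqref{Double-nilp} that $\text{\textsc{a}}$ vanishes on full $q$-orbits, which is what makes both wrap-around products vanish at the nodes, identifies them with $F$, and thereby collapses the $\mathsf{N}$ discrete Baxter conditions into one polynomial divisibility.
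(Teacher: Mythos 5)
Your overall architecture coincides with the paper's: symmetries force $\det_p\bar{D}_{\tau}$ to be a function of $Z$, the asymptotics give degree $\mathsf{N}+2$ with leading coefficient $\tau_{\infty}^{p}-\overline{\text{\textsc{a}}}_{\infty}^{p}$, the SoV conditions of Theorem \ref{OpenCyC:T-eigenstates-} become divisibility by $F$, and \eqref{Func-EQ-1} is the Lagrange reconstruction of the quadratic cofactor from its values at $Z=\pm 2$ and the leading term. Your third paragraph, which you single out as the decisive difficulty, is in fact correct and even more explicit than the paper: under \eqref{Double-nilp} one has $\mu_{n,-}^{2}=\mu_{n,+}^{2}$, so $\text{\textsc{a}}$ vanishes at $\xi_{n}^{(0)}$ (through $a(\lambda q^{-1/2})$) and at $1/\xi_{n}^{(p-1)}$ (through $d(1/(q^{1/2}\lambda))$), both wrap-around cycle products vanish at the nodes, and since the paired sub/super-diagonal and corner prefactors of $\overline{\text{\textsc{a}}}$ cancel, $\det_p\bar{D}_{\tau}(\zeta_a^{(0)})=\det_p D_{\tau}(\zeta_a^{(0)})$ — a point the paper asserts only implicitly (``as we are assuming that \eqref{Double-nilp} is satisfied'').

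The genuine gap is in your first step: you assert, but do not prove, that $\det_p\bar{D}_{\tau}$ is a Laurent \emph{polynomial} in $Z$. The matrix entries are not polynomial: $\overline{\text{\textsc{a}}}(\lambda)=\lambda q^{-1/2}\mathsf{a}_{+}(\lambda)\mathsf{A}_{-}(\lambda)$ carries the denominator $(\lambda^{2}-1/\lambda^{2})$ of $\mathsf{a}_{+}$, so in $\bar{D}_{\tau}(\lambda q^{1/2})$ the central row (whose argument is $\lambda q^{p/2}=\pm\lambda$) diverges as $\lambda\to\pm 1,\pm i$, i.e.\ exactly at $Z=\pm 2$ — the two points at which you evaluate $\det_p\bar{D}_{\tau}(i^{a}q^{1/2})$ in \eqref{Func-EQ-1} and which must be regular for your divisibility-by-$F$ argument to apply to a polynomial. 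Your continuant expansion identifies the top degree but cannot exclude poles at $Z=\pm 2$, since each term of that expansion is singular there. This removability is precisely where the paper invests its main technical effort: it expands $\det_p\bar{D}_{\tau}(\lambda q^{1/2})$ along the central row, writes the singular entries as $\overline{\text{\textsc{x}}}(\lambda)/(\lambda^{2}-1/\lambda^{2})$ with $\overline{\text{\textsc{x}}}(\lambda)=(\lambda^{2}-1/\lambda^{2})\overline{\text{\textsc{a}}}(\lambda)$ regular, and proves the mirror identity
\begin{equation*}
\text{det}_{p-1}\bar{D}_{\tau ,(p+1)/2,(p+1)/2+1}(\lambda q^{1/2})=\text{det}_{p-1}\bar{D}_{\tau ,(p+1)/2,(p+1)/2-1}(q^{1/2}/\lambda ),
\end{equation*}
so that the two singular terms cancel to first order and $\det_p\bar{D}_{\tau}$ stays finite at $q^{1/2}$ and $iq^{1/2}$ for any $\tau$ of the symmetric form \eqref{set-tau}. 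Once you add this finiteness lemma, the rest of your proposal goes through and essentially coincides with the paper's proof (the paper computes the leading coefficient from the explicit asymptotic bidiagonal-plus-corner matrix rather than your continuant count, a cosmetic difference).
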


\begin{proof}
Let us observe that det$_{p}\bar{D}_{\tau }(\lambda )$ is an even function
of $\lambda $ as a consequence of the parity of $\tau (\lambda )$ and $%
\overline{\text{\textsc{a}}}(\lambda )$ moreover following the same steps of
Lemma \ref{Form-detD} we can prove that det$_{p}\bar{D}_{\tau }(\lambda )$
is invariant under the transformations $\lambda \rightarrow 1/\lambda $ and $%
\lambda \rightarrow q\lambda $, so that det$_{p}\bar{D}_{\tau }(\lambda )$
is indeed a function of $Z=\lambda ^{2p}+1/\lambda ^{2p}$. Let us now
observe that in the points $\lambda =q^{1/2}$ and $\lambda =iq^{1/2}$, the
central row of the matrix $\bar{D}_{\tau }(\lambda )$ has two elements which
are divergent as proportional to $\overline{\text{\textsc{a}}}(\pm q^{p/2})$
and $\overline{\text{\textsc{a}}}(\pm iq^{p/2})$, respectively. In the
following we prove that:%
\begin{equation}
\text{det}_{p}\text{$\bar{D}$}_{\tau }(q^{1/2})\text{ and det}_{p}\text{$%
\bar{D}$}_{\tau }(iq^{1/2})\text{ are finite,}  \label{finite D_tau}
\end{equation}%
if we ask that the function $\tau (\lambda )$ has the functional form $(\ref%
{set-tau})$. Let us first introduce the notations:%
\begin{equation}
\overline{\text{\textsc{x}}}(\lambda )=\left( \lambda ^{2}-\frac{1}{\lambda
^{2}}\right) \overline{\text{\textsc{a}}}(\lambda ),
\end{equation}%
and using it let us expand the determinant: 
\begin{align}
\text{det}_{p}\text{$\bar{D}$}_{\tau }(\lambda q^{1/2})& =\tau (\lambda )%
\text{det}_{p-1}\text{$\bar{D}$}_{\tau ,(p+1)/2,(p+1)/2}(\lambda q^{1/2})+%
\frac{\overline{\text{\textsc{x}}}(\lambda )\text{det}_{p-1}\text{$\bar{D}$}%
_{\tau ,(p+1)/2,(p+1)/2-1}(\lambda q^{1/2})}{\lambda ^{2}-1/\lambda ^{2}} 
\notag \\
& -\frac{\overline{\text{\textsc{x}}}(1/\lambda )\text{det}_{p-1}\text{$\bar{%
D}$}_{\tau ,(p+1)/2,(p+1)/2+1}(\lambda q^{1/2})}{\lambda ^{2}-1/\lambda ^{2}}%
,
\end{align}%
w.r.t. the central row. Here, we have denoted with $\bar{D}_{\tau
,i,j}(\lambda )$ the $(p-1)\times (p-1)$ matrix defined by removing the row $%
i$ and the column $j\ $to the matrix $\bar{D}$$_{\tau }(\lambda )$. The
following identity holds:%
\begin{equation}
\text{det}_{p-1}\text{$\bar{D}$}_{\tau ,(p+1)/2,(p+1)/2+1}(\lambda q^{1/2})=%
\text{det}_{p-1}\text{$\bar{D}$}_{\tau ,(p+1)/2,(p+1)/2-1}(q^{1/2}/\lambda ),
\end{equation}%
and it follows just exchanging the row $j$ with the row $p-j$, for any $j\in
\left\{ 1,..,\left( p+1\right) /2\right\} $, and then the column $j$ with
the column $p-j$, for any $j\in \left\{ 1,..,\left( p+1\right) /2\right\} $,
in $\bar{D}$$_{\tau ,(p+1)/2,(p+1)/2+1}(\lambda q^{1/2})$. Note that det$%
_{p-1}\bar{D}$$_{\tau ,(p+1)/2,(p+1)/2+1}(\lambda q^{1/2})$ and det$_{p-1}%
\bar{D}$$_{\tau ,(p+1)/2,(p+1)/2-1}(\lambda q^{1/2})$ are Laurent's rational
functions both finite for $\lambda \rightarrow \pm 1$ and $\lambda
\rightarrow \pm i$. This implies that in both the limits $\lambda
\rightarrow \pm 1$ and $\lambda \rightarrow \pm i$ the function det$_{p}\bar{%
D}$$_{\tau }(\lambda q^{1/2})$ is finite. As $Z=2$ and $Z=-2$ are the only
points for which det$_{p}\bar{D}_{\tau }(\lambda )$ may have divergencies,
then, from $\tau (\lambda )$ and $\overline{\text{\textsc{x}}}(\lambda )$
Laurent's polynomial in $\lambda $ of degree $2\mathsf{N}+4$, it follows
that det$_{p}\bar{D}_{\tau }(\lambda )$ is a polynomial of degree $\mathsf{N}%
+2$ in the variable $Z$. Let us now remark that by the SoV characterization
of the spectrum we have that $\tau (\lambda )\in \Sigma _{\mathcal{T}}$ if
and only if it has the form $(\ref{set-tau})$ and satisfies:%
\begin{equation}
\text{det}_{p}\text{$\bar{D}$}_{\tau }(\xi _{a}^{(0)})=0,\text{ }\forall
a\in \{1,...,\mathsf{N}\},
\end{equation}%
as we are assuming that $\left( \ref{Double-nilp}\right) $ is satisfied.
Finally, it is simple to verify that the following asymptotic hold:%
\begin{align}
\text{det}_{p}\left[ \lim_{\lambda \rightarrow +\infty }\lambda ^{-2(\mathsf{%
N}+2)}\bar{D}_{\tau }(\lambda )\right] & =\text{det}_{p}\left[ \lim_{\lambda
\rightarrow 0}\lambda ^{2(\mathsf{N}+2)}\bar{D}_{\tau }(\lambda )\right] ^{t}
\\
& =\text{det}_{p}%
\begin{pmatrix}
\tau _{\infty } & 0 & 0 & \cdots & 0 & -\overline{\text{\textsc{a}}}_{\infty
} \\ 
-x\overline{\text{\textsc{a}}}_{\infty } & x\tau _{\infty } & 0 & 0 & \cdots
& 0 \\ 
0 & -x^{2}\overline{\text{\textsc{a}}}_{\infty } & x^{2}\tau _{\infty } & 
\ddots &  & \vdots \\ 
\vdots &  & \ddots & \ddots & 0 & 0 \\ 
0 & \ldots & 0 & -x^{2l-1}\overline{\text{\textsc{a}}}_{\infty } & 
x^{2l-1}\tau _{\infty } & 0 \\ 
0 & 0 & \ldots & 0 & -x^{2l}\overline{\text{\textsc{a}}}_{\infty } & 
x^{2l}\tau _{\infty }%
\end{pmatrix}%
,
\end{align}%
where we have denoted with $^{t}$ the transpose of the matrix and $x=q^{2(%
\mathsf{N}+2)}$, so that it holds:%
\begin{equation}
\lim_{\log \lambda \rightarrow \pm \infty }\lambda ^{\mp 2p(\mathsf{N}+2)}%
\text{det}_{p}\bar{D}_{\tau }(\lambda )=\tau _{\infty }^{p}-\overline{\text{%
\textsc{a}}}_{\infty }^{p}.
\end{equation}%
These results fix completely the Laurent's polynomial det$_{p}\bar{D}_{\tau
}(\lambda )$ to satisfy $\left( \ref{Func-EQ-1}\right) $.
\end{proof}

Let us introduce now the following function:%
\begin{align}
G(\lambda |x,y)& =F(\lambda )\left[ \left( \tau _{\infty }-q^{-2(p-1)\mathsf{%
N}}\overline{\text{\textsc{a}}}_{\infty }\right) x\prod_{a=0}^{1}\left( 
\frac{\lambda }{i^{a}q^{1/2}}-\frac{i^{a}q^{1/2}}{\lambda }\right) \left(
i^{a}q^{1/2}\lambda -\frac{1}{i^{a}q^{1/2}\lambda }\right) \right.  \notag \\
& \left. +(i-1)\frac{y\text{\textsc{a}}(iq^{1/2})}{4F(iq^{1/2})}%
\prod_{a=0}^{1}\left( \lambda q^{(1-2a)/2}-\frac{1}{\lambda q^{(1-2a)/2}}%
\right) \right] ,
\end{align}%
and the states:%
\begin{eqnarray}
\left\langle \omega \right\vert &=&\sum_{h_{1},...,h_{\mathsf{N}%
}=0}^{p-1}\prod_{a=1}^{\mathsf{N}}\prod_{k_{a}=0}^{h_{a}-1}\frac{\text{%
\textsc{a}}(1/\zeta _{a}^{(k_{a})})}{\text{\textsc{d}}(1/\zeta
_{a}^{(k_{a})})}\prod_{1\leq b<a\leq \mathsf{N}%
}(X_{a}^{(h_{a})}-X_{b}^{(h_{b})})\left\langle h_{1},...,h_{\mathsf{N}%
}\right\vert , \\
|\bar{\omega}\rangle &=&\sum_{h_{1},...,h_{\mathsf{N}}=0}^{p-1}\prod_{1\leq
b<a\leq \mathsf{N}}(X_{a}^{(h_{a})}-X_{b}^{(h_{b})})|h_{1},...,h_{\mathsf{N}%
}\rangle ,
\end{eqnarray}%
and a renormalization of the operator%
\begin{equation}
\mathcal{\hat{B}}_{-}(\lambda )=\frac{(\zeta _{-}-1/\zeta _{-})}{\kappa
_{-}e^{\tau _{-}}(\lambda ^{2}/q-q/\lambda ^{2})\prod_{n=1}^{\mathsf{N}%
}\alpha _{n}\beta _{n}}\mathcal{B}_{-}(\lambda ),
\end{equation}%
which is a degree $\mathsf{N}$ polynomial in $\Lambda $. In the following we
denote with $Q(\lambda )$ a polynomial in $\Lambda =\lambda ^{2}+1/\lambda
^{2}$ of degree $\mathsf{N}_{Q}$ $\leq (p-1)\mathsf{N}$ which admits the
following interpolation formula:%
\begin{equation}
Q(\lambda )=\sum_{a=1}^{(p-1)\mathsf{N}+1}\prod_{b=1,b\neq a}^{(p-1)\mathsf{N%
}+1}\frac{\Lambda -w_{b}}{w_{a}-w_{b}}Q(\xi _{a}),  \label{Q-form}
\end{equation}%
where $w_{b}=$ $\xi _{b}^{2}+1/\xi _{b}^{2}$ for any $b\in \{1,...,(p-1)%
\mathsf{N}+1\}$ and where we have defined:%
\begin{equation}
\xi _{s(n,h_{n})}\equiv \xi _{n}^{(h_{n}-1)},\text{\ }s(n,h_{n})\equiv
\left( n-1\right) \left( p-1\right) +h_{n},\text{ }\forall n\in \{1,...,%
\mathsf{N}\},h_{n}\in \{1,...,p-1\},
\end{equation}%
while $\xi _{(p-1)\mathsf{N}+1}$ is arbitrary. Moreover, we use the notation:%
\begin{equation}
q_{\infty }\equiv \sum_{a=1}^{(p-1)\mathsf{N}+1}\prod_{b=1,b\neq a}^{(p-1)%
\mathsf{N}+1}\frac{Q(\xi _{a})}{w_{a}-w_{b}},\text{ \ }q_{0}\equiv
Q(iq^{1/2}).
\end{equation}%
Here, $q_{\infty }$ is the coefficient in $\Lambda ^{(p-1)\mathsf{N}}$ of
the power expansion of the polynomial $Q(\lambda )$. Once this notation are
introduced, the previous characterization of the spectrum can be
reformulated in terms of Baxter's type TQ-functional equations and the
eigenstates admit an algebraic Bethe ansatz like reformulation, as we show
in the next theorem.

\begin{theorem}
Let the conditions $\left( \ref{E-SOV}\right) $, $\left( \ref%
{condition-SoV-2}\right) $, $\left( \ref{cond-simple}\right) $ and $\left( %
\ref{Double-nilp}\right) $ be satisfied and let $\tau (\lambda )$ be an
entire function for which there exists a polynomial $Q(\lambda )$ of the
form $\left( \ref{Q-form}\right) $ satisfying the conditions:%
\begin{equation}
Q(\xi _{a}^{\left( 0\right) })\neq 0,\text{ }\forall a\in \{1,...,\mathsf{N}%
\},  \label{Q-condition}
\end{equation}%
and the following functional equation:%
\begin{equation}
\tau (\lambda )Q(\lambda )=\overline{\text{\textsc{a}}}(\lambda )Q(\lambda
/q)+\overline{\text{\textsc{a}}}(1/\lambda )Q(\lambda q)+G(\lambda
|q_{\infty },q_{0}),  \label{Inho-Baxter-EQ}
\end{equation}%
with $(p-1)\mathsf{N}-1\leq \mathsf{N}_{Q}$, then $\tau (\lambda )\in \Sigma
_{\mathcal{T}}$ \ and (up to normalization) the left and right transfer
matrix eigenstates associated to it admit the following Bethe ansatz like
representations:%
\begin{equation}
\left\langle \tau \right\vert =\left\langle \omega \right\vert \prod_{b=1}^{%
\mathsf{N}_{Q}}\mathcal{\hat{B}}_{-}(\lambda _{b}),\text{ \ \ }|\tau \rangle
=\prod_{b=1}^{\mathsf{N}_{Q}}\mathcal{\hat{B}}_{-}(\lambda _{b})|\bar{\omega}%
\rangle ,  \label{Bethe-like-eigenstates}
\end{equation}%
where the $\lambda _{b}$ (fixed up the symmetry $\lambda _{b}\rightarrow
-\lambda _{b},$ $\lambda _{b}\rightarrow 1/\lambda _{b}$) for $b\in \{1,...,%
\mathsf{N}_{Q}\}$ are the zeros of $Q(\lambda )$. Vice versa, if $\tau
(\lambda )\in \Sigma _{\mathcal{T}}$ then there exists a polynomial $%
Q(\lambda )$ of the form $\left( \ref{Q-form}\right) $ with $(p-1)\mathsf{N}%
-1\leq \mathsf{N}_{Q}$ satisfying with $\tau (\lambda )$ the Baxter's
equation $\left( \ref{Inho-Baxter-EQ}\right) $.
\end{theorem}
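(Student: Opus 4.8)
The plan is to reduce the statement to the SoV characterization of Theorem~\ref{OpenCyC:T-eigenstates-} and to the determinant reformulation \eqref{Func-EQ-1}, by recognising $\bar{D}_{\tau}(\lambda)$ as the cyclic (periodic Jacobi) matrix whose kernel equation is exactly the homogeneous part of the TQ relation read along a $q$-orbit. Writing out the $k$-th row of $\bar{D}_{\tau}(\lambda)\vec{v}=0$ one finds
\begin{equation}
\tau(q^{k}\lambda)\,v_{k}=\overline{\text{\textsc{a}}}(q^{k}\lambda)\,v_{k-1}+\overline{\text{\textsc{a}}}(1/(q^{k}\lambda))\,v_{k+1},
\end{equation}
so that $v_{k}=Q(q^{k}\lambda)$ solves it precisely when $Q$ obeys \eqref{Inho-Baxter-EQ} with $G\equiv0$; the cyclicity $v_{k+p}=v_{k}$ is automatic since $q^{p}=1$. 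The arithmetical fact I would use throughout is that $(\xi_{a}^{(h)})^{p}=\mu_{a,+}^{p}q^{p/2}=(\zeta_{a}^{(0)})^{p}$ for every $h$, so the orbit of $\xi_{a}^{(0)}$ under $\lambda\mapsto q\lambda$ is $\{\xi_{a}^{(0)},\dots,\xi_{a}^{(p-1)}\}$ and, crucially, $F(\xi_{a}^{(h)})=0$ for all $a\in\{1,\dots,\mathsf{N}\}$ and all $h$. Since the inhomogeneity $G(\lambda\,|\,q_{\infty},q_{0})$ carries the prefactor $F(\lambda)$, it vanishes at every SoV point.

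For the forward (``if'') direction I would evaluate \eqref{Inho-Baxter-EQ} along the orbit $\lambda=\xi_{a}^{(h)}$. As $G$ vanishes there and $\overline{\text{\textsc{a}}}(\xi_{a}^{(0)})=0$, the equation collapses to the homogeneous three-term recursion, i.e. to the kernel condition for $D_{\tau}(\zeta_{a}^{(0)})$ with the vector $(Q(\xi_{a}^{(0)}),\dots,Q(\xi_{a}^{(p-1)}))$, whose first entry is nonzero by \eqref{Q-condition}. Hence $\det D_{\tau}(\zeta_{a}^{(0)})=0$ for every $a$, which is \eqref{OpenCyI-Functional-eq}. Dividing \eqref{Inho-Baxter-EQ} by $Q$ shows $\tau$ is an even, $\lambda\mapsto1/\lambda$--invariant Laurent polynomial in $\Lambda$ of degree $\mathsf{N}+2$ with asymptotics $\tau_{\infty}$ and the known central values, hence of the form \eqref{set-tau} with \eqref{tau_inf-Tri-K+}; Theorem~\ref{OpenCyC:T-eigenstates-} then yields $\tau\in\Sigma_{\mathcal{T}}$. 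Comparing the recursion for $Q(\xi_{a}^{(h)})$ with \eqref{Recurence-tau-L/R}--\eqref{coeff-recurrence} identifies $Q(\xi_{a}^{(h)})$ with the SoV coefficients $Q_{\tau,a}^{(h)}$ up to an $a$-dependent normalisation.

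For the converse I would start from $\tau\in\Sigma_{\mathcal{T}}$, take the coefficients $Q_{\tau,a}^{(h)}$ of \eqref{OpenCyeigenT-r-D}, and define $Q(\lambda)$ by the interpolation \eqref{Q-form} through the nodes $\xi_{a}^{(h)}$ (with one free extra node). The real task is to upgrade the discrete Baxter equations to the global identity. Setting
\begin{equation}
R(\lambda)\equiv\tau(\lambda)Q(\lambda)-\overline{\text{\textsc{a}}}(\lambda)Q(\lambda/q)-\overline{\text{\textsc{a}}}(1/\lambda)Q(\lambda q),
\end{equation}
one checks that $R$ is an even, $\lambda\mapsto1/\lambda$--invariant Laurent polynomial (using $\Lambda(1/(q\lambda))=\Lambda(q\lambda)$) which, by the SoV equations, vanishes at all $\xi_{a}^{(h)}$, i.e. at the $p\mathsf{N}$ values $\Lambda=X_{a}^{(h)}$. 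A degree count in $\Lambda$, with $\deg_{\Lambda}\tau=\mathsf{N}+2$, $\deg_{\Lambda}\overline{\text{\textsc{a}}}=\mathsf{N}+2$ and $\mathsf{N}_{Q}\le(p-1)\mathsf{N}$, gives $\deg_{\Lambda}R\le p\mathsf{N}+2$, so that $R$ equals $\prod_{a,h}(\Lambda-X_{a}^{(h)})$ — which is proportional to $F(\lambda)$ — times a factor of degree at most two in $\Lambda$. This is exactly the shape of $G(\lambda\,|\,x,y)$, and the two residual parameters are then pinned down: matching the $\lambda\to\infty$ asymptotics relates the quadratic's leading coefficient to $\tau_{\infty}$, $\overline{\text{\textsc{a}}}_{\infty}$ and the leading coefficient $q_{\infty}$ of $Q$, the value at $\lambda=iq^{1/2}$ fixes the remaining constant through $q_{0}=Q(iq^{1/2})$, and the constraint $R(q^{1/2})=0$ is supplied by the central value $\tau(q^{1/2})=(-1)^{\mathsf{N}}X\,\mathrm{det}_{q}M(1)$. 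This gives $R=G(\lambda\,|\,q_{\infty},q_{0})$, the lower bound $(p-1)\mathsf{N}-1\le\mathsf{N}_{Q}$ following from the nondegeneracy conditions \eqref{E-SOV}--\eqref{condition-SoV-2} and \eqref{cond-simple}; the determinant characterisation \eqref{Func-EQ-1} gives an alternative handle on this degree bookkeeping.

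Finally, for the Bethe-ansatz form \eqref{Bethe-like-eigenstates} I would use that $\mathcal{\hat{B}}_{-}(\lambda)$ is, by construction, diagonal in the SoV basis with eigenvalue $\prod_{a=1}^{\mathsf{N}}(\Lambda-X_{a}^{(h_{a})})$, the renormalised form of \eqref{EigenValue-B_}. Since $Q(\lambda)=q_{\infty}\prod_{b=1}^{\mathsf{N}_{Q}}(\Lambda-\Lambda_{b})$ with $\Lambda_{b}=\lambda_{b}^{2}+1/\lambda_{b}^{2}$,
\begin{equation}
\prod_{b=1}^{\mathsf{N}_{Q}}\mathcal{\hat{B}}_{-}(\lambda_{b})|h_{1},\dots,h_{\mathsf{N}}\rangle=\Big(\prod_{a=1}^{\mathsf{N}}\prod_{b=1}^{\mathsf{N}_{Q}}(\Lambda_{b}-X_{a}^{(h_{a})})\Big)|h_{1},\dots,h_{\mathsf{N}}\rangle\propto\Big(\prod_{a=1}^{\mathsf{N}}Q(\xi_{a}^{(h_{a})})\Big)|h_{1},\dots,h_{\mathsf{N}}\rangle,
\end{equation}
so acting on $|\bar{\omega}\rangle$ reproduces, through \eqref{Decmp-Id}, exactly the right eigenstate \eqref{OpenCyeigenT-r-D}. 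The left statement follows identically, the extra factors $\prod_{k}\overline{\text{\textsc{a}}}(1/\zeta_{a}^{(k)})/\text{\textsc{d}}(1/\zeta_{a}^{(k)})$ carried by $\langle\omega|$ being precisely the ratio $\hat{Q}_{\tau,a}^{(h)}/Q_{\tau,a}^{(h)}$ supplied by \eqref{Recurence-tau-L/R}. I expect the genuine obstacle to be the converse direction: establishing the global functional identity requires the exact degree and zero bookkeeping for $R$ and, above all, the self-consistent determination of the inhomogeneity parameters $q_{\infty},q_{0}$, which are themselves functionals of the very polynomial $Q$ being constructed.
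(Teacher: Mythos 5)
Your forward direction and the Bethe-ansatz rewriting coincide in substance with the paper's own argument: you evaluate \eqref{Inho-Baxter-EQ} along the $q$-orbits, where $F$ (hence $G$) vanishes and $\overline{\text{\textsc{a}}}(\xi_{a}^{(0)})=0$, obtain the kernel condition for the cyclic matrices with a vector whose first entry is nonzero by \eqref{Q-condition}, use entirety of $\tau$ plus the equation to force the form \eqref{set-tau}, and conclude via the SoV characterization; the diagonal action of $\mathcal{\hat{B}}_{-}$ on the SoV basis, together with the ratio \eqref{Recurence-tau-L/R} accounting for the factors carried by $\langle\omega|$, correctly reproduces \eqref{Bethe-like-eigenstates}. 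All of that is sound and is essentially the paper's proof.

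The converse, however, has a genuine gap at the step ``take the coefficients $Q_{\tau,a}^{(h)}$ \dots{} and define $Q(\lambda)$ by the interpolation \eqref{Q-form}\dots{} by the SoV equations, [$R$] vanishes at all $\xi_{a}^{(h)}$.'' The interpolation \eqref{Q-form} uses as nodes only $\xi_{a}^{(0)},\dots,\xi_{a}^{(p-2)}$ from each orbit plus the single extra node $\xi_{(p-1)\mathsf{N}+1}$, so $Q(\xi_{a}^{(p-1)})$ is an \emph{output} of the interpolation, not a datum; moreover the SoV coefficients $Q_{\tau,a}^{(\cdot)}$ solving \eqref{OpenCyt-Q-relation} are defined for each $a$ only up to an independent scale $c_{a}$. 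The per-orbit SoV relations then guarantee $R(\xi_{a}^{(h)})=0$ only for $h\in\{0,\dots,p-3\}$ (at $h=0$ the term with $Q(\xi_{a}^{(p-1)})$ drops because $\overline{\text{\textsc{a}}}(\xi_{a}^{(0)})=0$); vanishing at $\Lambda=X_{a}^{(p-2)}$ and $X_{a}^{(p-1)}$ is equivalent to the $\mathsf{N}$ cyclic-closure conditions that the interpolating polynomial reproduce $c_{a}Q_{\tau,a}^{(p-1)}$ at $\xi_{a}^{(p-1)}$, and these do not follow from anything you establish. Without them $R$ has only $(p-2)\mathsf{N}$ guaranteed zeros and your degree count no longer yields $R=F\times(\text{quadratic in }\Lambda)$. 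This closure is exactly what the paper's proof supplies: it writes the last equation of each cyclic block through the interpolation formula (its \eqref{System2-3s}), eliminates the interior values by the recursion coefficients $\mathbb{C}_{a,h}$, and reduces everything to an $\mathsf{N}\times\mathsf{N}$ inhomogeneous linear system in the unknowns $Q(\xi_{a}^{(0)})$, with the free node value $Q(\xi_{(p-1)\mathsf{N}+1})$ providing the inhomogeneity; solvability of that system is the existence statement, and the bound $(p-1)\mathsf{N}-1\leq\mathsf{N}_{Q}$ is read off from it. Note finally that your diagnosis of where the difficulty lies is misplaced: the ``self-consistent determination of $q_{\infty},q_{0}$'' takes care of itself --- with $x=q_{\infty}$ the leading term of $G$ matches the asymptotics of both sides identically, and since $\overline{\text{\textsc{a}}}(q^{-1/2})=\overline{\text{\textsc{a}}}(-iq^{-1/2})=0$ the evaluations at $\Lambda=\pm(q+1/q)$ collapse to central-value identities for $\tau(q^{1/2})$ and $\tau(iq^{1/2})$ that are independent of $Q$ --- whereas the cyclic closure just described is the real content of the converse.
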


\begin{proof}
The proof of this type of reformulation of the spectrum is by now quite
standard and it has been proven for several models once they admit SoV
description, see for example \cite%
{OpenCyNT-10,OpenCyN-10,OpenCyGN12,OpenCyKitMN14,OpenCyNicT15-2,OpenCyLevNT15,OpenCyNicT15}%
. So we will try to point out just some main features of the proof. Let us
start proving the first part of the statement. It is simple to remark that
the r.h.s of the equation $\left( \ref{Inho-Baxter-EQ}\right) $ is a Laurent
polynomial in $\lambda $, indeed we can write:%
\begin{equation}
\overline{\text{\textsc{a}}}(\lambda )Q(\lambda /q)+\overline{\text{\textsc{a%
}}}(1/\lambda )Q(\lambda q)=\frac{\overline{\text{\textsc{x}}}(\lambda
)Q(q/\lambda )-\overline{\text{\textsc{x}}}(1/\lambda )Q(\lambda q)}{\lambda
^{2}-1/\lambda ^{2}}
\end{equation}%
so that the limits $\lambda \rightarrow \pm 1,$ $\lambda \rightarrow \pm i$
are all finite. Moreover, it is simple to observe that the r.h.s of $\left( %
\ref{Inho-Baxter-EQ}\right) $ is invariant under the transformations $%
\lambda \rightarrow -\lambda $ and $\lambda \rightarrow 1/\lambda $, so that
the r.h.s. of $\left( \ref{Inho-Baxter-EQ}\right) $ is in fact a polynomial
of degree $\mathsf{N}_{Q}+\mathsf{N}+2\leq p\mathsf{N}+2$ in $\Lambda $.
Then, the fact that $\tau (\lambda )$ is entire in $\lambda $ implies by the
equation $\left( \ref{Inho-Baxter-EQ}\right) $ that $\tau (\lambda )$ is a
polynomial in $\Lambda $ of the form $(\ref{set-tau})$ with $(\ref%
{tau_inf-Tri-K+})$. This together with the equations:%
\begin{equation}
\text{det}_{p}\text{$\bar{D}$}_{\tau }(\xi _{a}^{(0)})=0,\text{ }\forall
a\in \{1,...,\mathsf{N}\},
\end{equation}%
which are trivial consequences of $\left( \ref{Inho-Baxter-EQ}\right) $ and
of $\left( \ref{Q-condition}\right) $, imply by the SoV characterization
that $\tau (\lambda )\in \Sigma _{\mathcal{T}}$. Let us show now that the
eigenstates associated to this $\tau (\lambda )\in \Sigma _{\mathcal{T}}$
can be written in the form $\left( \ref{Bethe-like-eigenstates}\right) $. In
order to do so let us observe that the $Q(\lambda )$ which is solution of $%
\left( \ref{Inho-Baxter-EQ}\right) $ is defined up to a constant factor so
that we are free to fix it by writing:%
\begin{equation}
Q(\lambda )=\prod_{b=1}^{\mathsf{N}_{Q}}(\Lambda -\Lambda _{b})
\end{equation}%
where $\Lambda _{b}=\lambda _{b}^{2}+1/\lambda _{b}^{2}$, then we have that
it holds:%
\begin{eqnarray}
\prod_{b=1}^{\mathsf{N}_{Q}}\mathcal{\hat{B}}_{-}(\lambda _{b})|\bar{\omega}%
\rangle &=&\sum_{h_{1},...,h_{\mathsf{N}}=0}^{p-1}\prod_{b=1}^{\mathsf{N}%
_{Q}}\prod_{a=1}^{\mathsf{N}}(X_{a}^{(h_{a})}-\Lambda _{b})\prod_{1\leq
b<a\leq \mathsf{N}}(X_{a}^{(h_{a})}-X_{b}^{(h_{b})})|h_{1},...,h_{\mathsf{N}%
}\rangle  \notag \\
&=&\sum_{h_{1},...,h_{\mathsf{N}}=0}^{p-1}\prod_{a=1}^{\mathsf{N}%
}Q(X_{a}^{(h_{a})})\prod_{1\leq b<a\leq \mathsf{N}%
}(X_{a}^{(h_{a})}-X_{b}^{(h_{b})})|h_{1},...,h_{\mathsf{N}}\rangle ,
\end{eqnarray}%
\newline
where in the last line appears the SoV characterization of the right
eigenstate associated to $\tau (\lambda )\in \Sigma _{\mathcal{T}}$. The
same steps are used to prove $\left( \ref{Bethe-like-eigenstates}\right) $
for the left eigenstate. Let us comment that the polynomiality of $Q(\lambda
)$ is the central property that we have used to prove these rewriting of the
SoV characterizations of the transfer matrix eigenstates. This type of
rewriting was first observe in the case of the noncompact XXX chains \cite%
{DerKM03}. It is quite general and it has been used already for different
models in the SoV framework, see for example \cite%
{OpenCyNicT15-2,OpenCyLevNT15,OpenCyNicT15}.

Let us now assume that $\tau (\lambda )\in \Sigma _{\mathcal{T}}$\ and let
us prove that it exists $Q(\lambda )$ of the form $\left( \ref{Q-form}%
\right) $. In order to do so it is enough to show that the Baxter's equation
holds in $p\mathsf{N}+2$ different values of $\Lambda $, that we chose to be 
$\Lambda =\pm (q+1/q)$ and the $\Lambda =X_{a}^{(h_{a})}$ for any $a\in
\{1,...,\mathsf{N}\}$ and $h_{a}$ $\in \{0,...,p-1\}$. This set of equations
can be organized in the following form:%
\begin{equation}
\text{$\bar{D}$}_{\tau }(\xi _{a}^{(0)})\left( 
\begin{array}{l}
Q(\xi _{a}^{(h=0)}) \\ 
Q(\xi _{a}^{(h=1)}) \\ 
\vdots \\ 
\vdots \\ 
Q(\xi _{a}^{(h=p-1)})%
\end{array}%
\right) _{p\times 1}=\left( 
\begin{array}{l}
0 \\ 
\vdots \\ 
\vdots \\ 
\vdots \\ 
0%
\end{array}%
\right) _{p\times 1}\text{ \ \ \ }\forall a\in \{1,...,\mathsf{N}\}.
\end{equation}%
They are equivalent to the following system of equations:%
\begin{align}
Q(\xi _{a}^{(0)})\tau (\xi _{a}^{(0)})/\overline{\text{\textsc{a}}}(1/\xi
_{a}^{(0)})& =Q(\xi _{a}^{(1)})  \label{System2-1s} \\
\frac{\tau (\xi _{a}^{(h)})}{\overline{\text{\textsc{a}}}(1/\xi _{a}^{(h)})}%
Q(\xi _{a}^{(h)})-\frac{\overline{\text{\textsc{a}}}(\xi _{a}^{(h)})}{%
\overline{\text{\textsc{a}}}(1/\xi _{a}^{(h)})}Q(\xi _{a}^{(h-1)})& =Q(\xi
_{a}^{(h+1)}),\text{\ }\forall h\in \{1,...,p-3\}, \\
\frac{\tau (\xi _{a}^{(p-2)})}{\overline{\text{\textsc{a}}}(1/\xi
_{a}^{(p-2)})}Q(\xi _{a}^{(p-2)})-\frac{\overline{\text{\textsc{a}}}(\xi
_{a}^{(p-2)})}{\overline{\text{\textsc{a}}}(1/\xi _{a}^{(p-2)})}Q(\xi
_{a}^{(p-3)})& =\prod_{b=1}^{(p-1)\mathsf{N}}\frac{X_{a}^{(p-1)}-w_{b}}{%
w_{(p-1)\mathsf{N}+1}-w_{b}}Q(\xi _{(p-1)\mathsf{N}+1})  \notag \\
& +\sum_{n=1}^{\mathsf{N}_{Q}}\prod_{\substack{ b=1  \\ b\neq n}}^{(p-1)%
\mathsf{N}+1}\frac{X_{a}^{(p-1)}-w_{b}}{w_{n}-w_{b}}Q(\xi _{n}),
\label{System2-3s}
\end{align}%
as $\tau (\lambda )\in \Sigma _{\mathcal{T}}$. So we are left with $\left( %
\ref{System2-1s}\right) $-$\left( \ref{System2-3s}\right) $ a linear system
of $(p-1)\mathsf{N}$ inhomogeneous equations in the $(p-1)\mathsf{N}$
unknowns $Q(\xi _{a})$ for all $a\in \{1,...,(p-1)\mathsf{N}\}$. Let us
define by induction the following coefficients:%
\begin{equation}
\mathbb{C}_{a,h+1}=x_{a,0,h}\mathbb{C}_{a,h}+x_{a,-,h}\mathbb{C}_{a,h-1},%
\text{ \ }\forall h\in \{1,...,p-2\},\text{ }\mathbb{C}_{a,1}=\tau (\xi
_{a}^{(0)})/\overline{\text{\textsc{a}}}(1/\xi _{a}^{(0)}),\text{ \ }\mathbb{%
C}_{a,0}=1,
\end{equation}%
where we have defined:%
\begin{equation}
x_{a,0,h}=\frac{\tau (\xi _{a}^{(h)})}{\overline{\text{\textsc{a}}}(1/\xi
_{a}^{(h)})},\text{ \ \ }x_{a,-,h}=-\frac{\overline{\text{\textsc{a}}}(\xi
_{a}^{(h)})}{\overline{\text{\textsc{a}}}(1/\xi _{a}^{(h)})},\text{ \ }%
\forall h\in \{1,...,p-2\},\text{\ }\forall a\in \{1,...,\mathsf{N}\}.
\end{equation}%
One can rewrite the previous system as it follows:%
\begin{equation}
Q(\xi _{a}^{(h)})=\mathbb{C}_{a,h}Q(\xi _{a}^{(0)}),\text{ \ }\forall h\in
\{1,...,p-1\},\text{\ }\forall a\in \{1,...,\mathsf{N}\},
\end{equation}%
and%
\begin{equation}
\mathbb{C}_{a,p-1}Q(\xi _{a}^{(0)})=\prod_{b=1}^{(p-1)\mathsf{N}}\frac{%
X_{a}^{(p-1)}-w_{b}}{w_{(p-1)\mathsf{N}+1}-w_{b}}Q(\xi _{(p-1)\mathsf{N}%
+1})+\sum_{n=1}^{\mathsf{N}}\mathbb{D}_{a,n}Q(\xi _{n}^{(0)}),
\end{equation}%
where%
\begin{equation}
\mathbb{D}_{b,n}=\sum_{h=0}^{p-2}\prod_{_{c=1,c\neq a(n,h)}}^{(p-1)\mathsf{N}%
+1}\frac{X_{b}^{(p-1)}-w_{c}}{w_{n}-w_{c}}\mathbb{C}_{n,h}.
\end{equation}%
So finally we are left with a system of $\mathsf{N}$ inhomogeneous equations
in the $\mathsf{N}$ unknown $Q(\xi _{a}^{(0)})$ for all $a\in \{1,...,%
\mathsf{N}\}$, which always admits a nontrivial solution, i.e. $\{Q(\xi
_{1}^{(0)}),...,Q(\xi _{\mathsf{N}}^{(0)})\}\neq \{0,...,0\}$. Finally, let
us point out that the degree of the polynomial $Q(\lambda )$ is constrained
by the Baxter's equation $\left( \ref{Inho-Baxter-EQ}\right) $. Indeed, it
is trivial to remark that if $\mathsf{N}_{Q}\leq (p-1)\mathsf{N}-2$, then
the equation $\left( \ref{Inho-Baxter-EQ}\right) $ admits only the trivial
solution $Q(\lambda )=0$ which is not compatible with $\{Q(\xi
_{1}^{(0)}),...,Q(\xi _{\mathsf{N}}^{(0)})\}\neq \{0,...,0\}$. Instead, the
equation $\left( \ref{Inho-Baxter-EQ}\right) $\ may still be satisfied with
a nontrivial $Q(\lambda )$ if $\mathsf{N}_{Q}=(p-1)\mathsf{N}-1$ but only if
the following condition:%
\begin{equation}
\lim_{\Lambda \rightarrow \infty }\Lambda ^{(p-1)\mathsf{N}-1}Q(\lambda )=%
\frac{(i-1)\text{\textsc{a}}(iq^{1/2})q_{0}}{4\left( \tau _{\infty
}-q^{2-2(p-1)\mathsf{N}}\overline{\text{\textsc{a}}}_{\infty }\right)
F(iq^{1/2})},
\end{equation}%
is satisfied. It is clear that this represents one supplementary condition
and one can expect that up to some exceptional case, related to the values
of the parameters of the representation and to some special choice of the $%
\tau (\lambda )\in \Sigma _{\mathcal{T}}$, it is not satisfied and so that $%
\mathsf{N}_{Q}=(p-1)\mathsf{N}$.
\end{proof}

It is then simple to prove the following corollary which provides under some
further constraints a complete characterization of the transfer matrix
eigenvalues and eigenstates in terms of solution of ordinary Bethe equations.

\begin{corollary}
If the conditions $\left( \ref{E-SOV}\right) $, $\left( \ref{condition-SoV-2}%
\right) $, $\left( \ref{cond-simple}\right) $ and $\left( \ref{Double-nilp}%
\right) $ are satisfied and if we fix the boundary parameter $\zeta
_{+}=iz_{+}$ with $z_{+}\in \left\{ -1,+1\right\} $ and the following global
condition:%
\begin{equation}
\frac{\kappa _{+}e^{\epsilon (\tau _{-}-\tau _{+})}}{iz_{+}\alpha _{-}\beta
_{-}}=q^{1+\mathsf{N}}\prod_{n=1}^{\mathsf{N}}\frac{b_{n}c_{n}}{\alpha
_{n}\beta _{n}},  \label{Condition-global-boundary}
\end{equation}%
where $\epsilon =+1$ for $b_{+}\left( \lambda \right) =0$, $c_{+}\left(
\lambda \right) \neq 0$ and $\epsilon =-1$ for $c_{+}\left( \lambda \right)
=0$, $b_{+}\left( \lambda \right) \neq 0$, then $\mathcal{T}(\lambda )$ has
simple spectrum and $\tau (\lambda )\in \Sigma _{\mathcal{T}}$ if and only
if $\tau (\lambda )$ is entire and there exists a polynomial $Q(\lambda )$
of the form $\left( \ref{Q-form}\right) $, with $\mathsf{N}_{Q}\leq (p-1)%
\mathsf{N}$, which satisfies the following homogeneous Baxter equation:%
\begin{equation}
\tau (\lambda )Q(\lambda )=\overline{\text{\textsc{a}}}(\lambda )Q(\lambda
/q)+\overline{\text{\textsc{a}}}(1/\lambda )Q(\lambda q).
\label{Baxter-hom-eq}
\end{equation}%
The $(\lambda _{1},...,\lambda _{\mathsf{N}_{Q}})$ entering in the Bethe
ansatz like representations of the eigenstates $\left( \ref%
{Bethe-like-eigenstates}\right) $ are solutions of the associated ordinary
Bethe equations.
\end{corollary}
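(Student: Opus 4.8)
The plan is to recognize the corollary as the specialization of the previous theorem to the situation where the inhomogeneous term $G(\lambda\,|\,q_\infty,q_0)$ in the Baxter equation (\ref{Inho-Baxter-EQ}) vanishes identically in $\lambda$. Reading off the definition of $G(\lambda\,|\,x,y)$, it equals $F(\lambda)$ times a sum of two Laurent polynomials in $\lambda$ whose overall coefficients are $\bigl(\tau_\infty-q^{-2(p-1)\mathsf{N}}\overline{\text{\textsc{a}}}_\infty\bigr)\,x$ and $(i-1)\,y\,\text{\textsc{a}}(iq^{1/2})/\bigl(4\,F(iq^{1/2})\bigr)$, respectively. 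As $Q$ runs over the admissible polynomials the quantities $x=q_\infty$ and $y=q_0$ take generic, in particular nonzero, values, so $G$ vanishes identically if and only if both $\text{\textsc{a}}(iq^{1/2})=0$ and $\tau_\infty=q^{-2(p-1)\mathsf{N}}\overline{\text{\textsc{a}}}_\infty$; the whole proof reduces to checking that these two equalities are precisely the two extra hypotheses of the corollary.

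For the first, I would compute $\text{\textsc{a}}(iq^{1/2})=\mathsf{a}_+(iq^{1/2})\mathsf{A}_-(iq^{1/2})$ directly from the explicit expression for $\mathsf{a}_+$: setting $\lambda=iq^{1/2}$, so that $\lambda^2=-q$, a short rearrangement shows that $\mathsf{a}_+(iq^{1/2})$ is proportional to $\zeta_++1/\zeta_+$, which vanishes exactly when $\zeta_+^2=-1$, i.e. $\zeta_+=iz_+$ with $z_+\in\{-1,+1\}$. One also records that $\zeta_+-1/\zeta_+=2iz_+\neq0$, so the coefficients $\mathsf{a}_+$, $\mathsf{d}_+$ stay regular and the SoV construction is unaffected. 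For the second, I would substitute the explicit forms of $\tau_\infty$ (with $\epsilon$ as in the text) and of $\overline{\text{\textsc{a}}}_\infty$, cancel the common factors $(-1)^{\mathsf{N}}$, $\kappa_-$ and $(\zeta_+-1/\zeta_+)(\zeta_--1/\zeta_-)$, use $q^p=1$ (hence $q^{-2(p-1)\mathsf{N}}=q^{2\mathsf{N}}$) to collapse the power of $q$, and observe that, after inserting $\zeta_+=iz_+$, the surviving relation is exactly the global constraint (\ref{Condition-global-boundary}).

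With $G\equiv0$ the inhomogeneous equation (\ref{Inho-Baxter-EQ}) becomes the homogeneous one (\ref{Baxter-hom-eq}), and both directions of the corollary follow from the theorem. If $\tau$ is entire and some $Q$ of the form (\ref{Q-form}) solves (\ref{Baxter-hom-eq}), then $Q$ solves (\ref{Inho-Baxter-EQ}) with zero inhomogeneity and has $\mathsf{N}_Q\le(p-1)\mathsf{N}$ automatically from (\ref{Q-form}); repeating the theorem's argument — the homogeneous relation forces $\tau$ into the family (\ref{set-tau}) and yields $\mathrm{det}_p\bar D_\tau(\xi_a^{(0)})=0$ for all $a$ — and invoking the SoV characterization of Theorem~\ref{OpenCyC:T-eigenstates-} gives $\tau\in\Sigma_{\mathcal{T}}$. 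Conversely, for $\tau\in\Sigma_{\mathcal{T}}$ the theorem produces a $Q$ of the form (\ref{Q-form}) with $(p-1)\mathsf{N}-1\le\mathsf{N}_Q\le(p-1)\mathsf{N}$ satisfying (\ref{Inho-Baxter-EQ}), which under our hypotheses is (\ref{Baxter-hom-eq}). Simplicity of $\Sigma_{\mathcal{T}}$ is immediate from Theorem~\ref{OpenCyC:T-eigenstates-} under (\ref{E-SOV})--(\ref{condition-SoV-2}), (\ref{cond-simple}). Finally, evaluating (\ref{Baxter-hom-eq}) at a zero $\lambda_b$ of $Q$, where $Q(\lambda_b)=0$ while $Q(\lambda_b q)$ and $Q(\lambda_b/q)$ generically do not vanish, turns it into the ordinary Bethe equations $\overline{\text{\textsc{a}}}(\lambda_b)Q(\lambda_b/q)+\overline{\text{\textsc{a}}}(1/\lambda_b)Q(\lambda_b q)=0$, solved by the $(\lambda_1,\dots,\lambda_{\mathsf{N}_Q})$ entering (\ref{Bethe-like-eigenstates}).

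The step I expect to be most delicate is the bookkeeping in the two vanishing conditions — in particular tracking the power of $q$ via $q^p=1$ and all the $\pm i$, $\pm1$ prefactors so that $\tau_\infty-q^{-2(p-1)\mathsf{N}}\overline{\text{\textsc{a}}}_\infty$ matches (\ref{Condition-global-boundary}) on the nose. A secondary point deserving care is the genericity condition $Q(\xi_a^{(0)})\neq0$, which is needed to pass from (\ref{Baxter-hom-eq}) to $\mathrm{det}_p\bar D_\tau(\xi_a^{(0)})=0$ in the forward direction: one should verify, using $\overline{\text{\textsc{a}}}(\xi_n^{(h)})\neq0$ on the relevant range, that under (\ref{E-SOV})--(\ref{condition-SoV-2}), (\ref{cond-simple}) this holds automatically for any nonzero solution $Q$, so that condition (\ref{Q-condition}) need not be imposed separately.
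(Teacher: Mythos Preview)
Your approach is correct and coincides with the paper's: both show that the two extra hypotheses force $G(\lambda\,|\,q_\infty,q_0)\equiv 0$ --- with $\zeta_+=iz_+$ killing $\text{\textsc{a}}(iq^{1/2})$ (the paper also records $\tau(iq^{1/2})=0$) and the global constraint (\ref{Condition-global-boundary}) enforcing $\tau_\infty=q^{-2(p-1)\mathsf{N}}\overline{\text{\textsc{a}}}_\infty$ --- after which the previous theorem applies directly. The paper's proof is terser but adds one remark you do not make explicit: once the Baxter equation is homogeneous, the lower bound $(p-1)\mathsf{N}-1\le\mathsf{N}_Q$ from the inhomogeneous theorem is no longer imposed, so one is left only with $\mathsf{N}_Q\le(p-1)\mathsf{N}$.
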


\begin{proof}
The condition $\zeta _{+}=iz_{+}$ with $z_{+}\in \left\{ -1,+1\right\} $
implies:%
\begin{equation}
\tau (iq^{1/2})=\text{\textsc{a}}(iq^{1/2})=0,
\end{equation}%
so that the condition $\left( \ref{Condition-global-boundary}\right) $
implies that $G(\lambda |q_{\infty },q_{1})=0$, for any choice of $q_{\infty
},$ $q_{1}$. Then, the previous theorem directly implies our corollary.
Finally, let us remark that in this case we do not have any restriction on
the degree of the polynomial $Q(\lambda )$ imposed by the Baxter's equation $%
\left( \ref{Inho-Baxter-EQ}\right) $, so that we are only left with $\mathsf{%
N}_{Q}\leq (p-1)\mathsf{N}$.
\end{proof}

\section{Conclusions}

In this paper we have studied the transfer matrix spectrum of the class of
cyclic 6-vertex representations of the reflection algebra in the case of one
completely general and one triangular boundary matrices and for bulk
parameters satisfying some specific constraints. Our result is the complete
characterization of the spectrum (eigenvalues and eigenstates) of this class
of models both by a discrete system of Baxter's like second order difference
equations and by a single inhomogeneous TQ-functional equation within a
class of polynomial Q-functions.

The present paper represents a natural starting point to solve the spectral
problem in the most general setting. In order to do so we need to generalize
to the 6-vertex cyclic representations the Baxter's gauge transformations
used in the 6-vertex spin 1/2 highest weight representations and prove then
the pseudo-diagonalizability of the associated family of gauge transformed $%
\mathcal{B_-}$-operators. These points are currently under analysis.

An interesting point to which we would like to come back in future
investigations is the explicit construction of the Q-operator for the cyclic
representations of the 6-vertex reflection algebra. This can lead to new
connections with transfer matrices of exactly solvable class of models of
non 6-vertex type. Indeed, let us recall that for the special class of
6-vertex cyclic representations of the Yang-Baxter algebra, parametrized by
points on the algebraic curve (\ref{chPFaxVcurve-eq}), the associated
integrable models have some remarkable connections with the inhomogeneous
p-state chiral Potts models. Indeed, the chiral Potts transfer matrices play
the role of the Q-operators for the transfer matrix associated to the cyclic
representations of the 6-vertex Yang-Baxter algebra \cite{OpenCyBS90}.

\section*{Acknowledgements}
J. M. M. and G. N. are supported by CNRS and ENS Lyon;  B. P. is supported by ENS Lyon and ENS Cachan.


\appendix

\section{Appendix}

\subsection{General proof of diagonalizability of $\mathcal{B}_{-}(\protect%
\lambda )$}

The generator $\mathcal{B}_{-}(\lambda )$ admits the following boundary bulk
decomposition:%
\begin{equation}
\mathcal{B}_{-}(\lambda )=-a_{-}(\lambda )A(\lambda )B(1/\lambda
)+b_{-}(\lambda )A(\lambda )A(1/\lambda )-c_{-}(\lambda )B(\lambda
)B(1/\lambda )+d_{-}(\lambda )B(\lambda )A(1/\lambda ),
\end{equation}%
if the inner boundary matrix is non-diagonal and the bulk parameters are
generals, i.e. if it holds:%
\begin{equation}
\text{\textsc{b}}_{-}=\left( -1\right) ^{\mathsf{N}}\kappa _{-}e^{\tau
_{-}}\prod_{a=1}^{\mathsf{N}}\alpha _{n}\beta _{n}\neq 0,
\label{B-non-nilpotent}
\end{equation}%
then it trivially follows that $\mathcal{B}_{-}(\lambda )$ admits the
following functional form:%
\begin{equation}
\mathcal{B}_{-}(\lambda )=\text{\textsc{b}}_{-}(\frac{\lambda ^{2}}{q}-\frac{%
q}{\lambda ^{2}})\prod_{a=1}^{\mathsf{N}}(\frac{\lambda }{\mathcal{B}_{-,a}}-%
\frac{\mathcal{B}_{-,a}}{\lambda })(\lambda \mathcal{B}_{-,a}-\frac{1}{%
\lambda \mathcal{B}_{-,a}}),
\end{equation}%
where the $\mathcal{B}_{-,a}$ are invertible commuting operators. The above
functional form implies that under the condition $\left( \ref%
{B-non-nilpotent}\right) $ the operator family is not nilpotent, in
particular does not exist any state annihilated by $\mathcal{B}_{-}(\lambda
) $ for any $\lambda \in \mathbb{C}$. That is does not exist a reference
state and we cannot use ABA to analyze the spectral problem of the transfer
matrix. Here, we show that under the condition $\left( \ref{B-non-nilpotent}%
\right) $ for general values of the bulk parameters the $\mathcal{B}%
_{-}(\lambda )$ is indeed diagonalizable and with simple spectrum. Let us
first prove the following lemma:

\begin{lemma}
There exists at least one simultaneous eigenstate:%
\begin{equation}
|\Omega _{R}\rangle ,\text{ \ }\langle \Omega_{L}|
\end{equation}%
of the one parameter family of commuting operators $\mathcal{B}_{-}(\lambda
) $:%
\begin{align}
& \mathcal{B}_{-}(\lambda )|\Omega _{R}\rangle =|\Omega _{R}\rangle \text{%
\textsc{b}}_{-}(\frac{\lambda ^{2}}{q}-\frac{q}{\lambda ^{2}})\prod_{a=1}^{%
\mathsf{N}}(\frac{\lambda }{\bar{b}_{-,a}}-\frac{\bar{b}_{-,a}}{\lambda }%
)(\lambda \bar{b}_{-,a}-\frac{1}{\lambda \bar{b}_{-,a}}) \\
& \langle \Omega _{L}|\mathcal{B}_{-}(\lambda )=\text{\textsc{b}}_{-}(\frac{%
\lambda ^{2}}{q}-\frac{q}{\lambda ^{2}})\prod_{a=1}^{\mathsf{N}}(\frac{%
\lambda }{\hat{b}_{-,a}}-\frac{\hat{b}_{-,a}}{\lambda })(\lambda \hat{b}%
_{-,a}-\frac{1}{\lambda \hat{b}_{-,a}})\langle \Omega _{L}|.
\end{align}
\end{lemma}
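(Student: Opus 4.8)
The plan is to use that for every $\lambda$ the operator $\mathcal{B}_{-}(\lambda)$ is an endomorphism of the finite--dimensional complex space $\mathcal{H}=\otimes_{n=1}^{\mathsf{N}}\mathcal{R}_{n}$, and that these endomorphisms commute with one another, being members of the one--parameter commuting family $\mathcal{B}_{-}(\lambda)$. Expanding $\mathcal{B}_{-}(\lambda)$ in powers of $\lambda$ gives finitely many coefficient operators, which therefore generate a commutative subalgebra $\mathcal{A}\subset\mathrm{End}(\mathcal{H})$. First I would invoke the standard fact that a commuting family of operators on a nonzero finite--dimensional complex vector space admits a common eigenvector: starting from $\mathcal{H}$, pick an operator of $\mathcal{A}$ that does not act as a scalar on the current subspace, pass to one of its proper eigenspaces --- nonzero and $\mathcal{A}$--invariant by commutativity --- and iterate; the dimension drops strictly at each step, so the process stops at a joint eigenvector $|\Omega_{R}\rangle$. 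Running the same argument for $\mathcal{A}$ acting on the dual space $\mathcal{H}^{\ast}$ produces a left joint eigenvector $\langle\Omega_{L}|$.

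Next I would identify the eigenvalues. On $|\Omega_{R}\rangle$ one has $\mathcal{B}_{-}(\lambda)|\Omega_{R}\rangle=b_{R}(\lambda)\,|\Omega_{R}\rangle$ with $b_{R}(\lambda)\in\mathbb{C}$, and since the dependence of $\mathcal{B}_{-}(\lambda)$ on $\lambda$ is the explicit Laurent--polynomial one recalled just above the lemma, $b_{R}(\lambda)$ is a scalar Laurent polynomial of that same type: it carries the prefactor $(\lambda^{2}/q-q/\lambda^{2})$, it is invariant under $\lambda\to 1/\lambda$ as in $(\ref{Sym-B-C-})$, and its leading asymptotics is fixed by the central c--number $\text{\textsc{b}}_{-}\neq 0$ of $(\ref{B-non-nilpotent})$. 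Hence $\frac{b_{R}(\lambda)}{\text{\textsc{b}}_{-}(\lambda^{2}/q-q/\lambda^{2})}$ is a degree--$\mathsf{N}$ polynomial in $\Lambda=\lambda^{2}+1/\lambda^{2}$, so it factorizes as $\prod_{a=1}^{\mathsf{N}}(\Lambda-y_{a})$ with $y_{a}\in\mathbb{C}$; writing $y_{a}=\bar b_{-,a}^{\,2}+1/\bar b_{-,a}^{\,2}$ (always solvable over $\mathbb{C}$, with $\bar b_{-,a}$ fixed up to inversion) and using $(\lambda/c-c/\lambda)(\lambda c-1/(\lambda c))=\Lambda-(c^{2}+1/c^{2})$ reproduces precisely the product displayed in the lemma. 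The left eigenvalue is handled identically, yielding in general different parameters $\hat b_{-,a}$.

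This is largely routine once the ``common eigenvector of a commuting family'' lemma is granted; the one step I would take care to spell out is the passage from ``$b_{R}(\lambda)$ is a scalar Laurent polynomial of the form of $\mathcal{B}_{-}(\lambda)$'' to the product representation --- namely that the zeros at $\lambda^{2}=\pm q$ are built into $\mathcal{B}_{-}$ (hence into $b_{R}$), that the remaining factor is genuinely of degree $\mathsf{N}$ in $\Lambda$ (degree count together with the inversion symmetry), and that its top coefficient is the fixed nonzero constant $\text{\textsc{b}}_{-}$ rather than an operator. No simplicity of the $\mathcal{B}_{-}$--spectrum is asserted here: the lemma only claims \emph{existence} of one such pair of joint eigenvectors, which is exactly what the remainder of the appendix uses.
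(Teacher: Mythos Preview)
Your proof is correct and follows essentially the same approach as the paper: both arguments reduce to the standard fact that a finite commuting family of endomorphisms of a nonzero finite--dimensional complex space admits a common eigenvector, obtained by iterated restriction to eigenspaces. The only cosmetic difference is that the paper iterates directly on the operator--valued zeros $\mathcal{B}_{-,a}$ of the factorized form of $\mathcal{B}_{-}(\lambda)$, so that the $\bar b_{-,a}$ arise immediately as their eigenvalues, whereas you iterate on the Laurent--coefficient operators and then re-factor the scalar eigenvalue to recover the product form --- an extra step, but harmless.
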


\begin{proof}
We can always put $\mathcal{B}_{-,1}$ in a Jordan normal form, let us denote
with B$_{1}$ the right eigenspace associated to a given eigenvalue $\bar{b}%
_{-,1}\neq 0$ of $\mathcal{B}_{-,1}$, as $\mathcal{B}_{-,1}$ is invertible.
If this eigenspace is one dimensional we have found our simultaneous
eigenstate of $\mathcal{B}_{-}(\lambda )$. If this is not the case then by
the commutativity we have that B$_{1}$ is an invariant space w.r.t. $%
\mathcal{B}_{-,n}$\ for any\ $n\in \left\{ 1,...,N\right\} $. So we can
always put $\mathcal{B}_{-,2}$ in a Jordan normal form in B$_{1}$, let us
denote with B$_{1,2}$ the eigenspace associated to a given eigenvalue $\bar{b%
}_{-,2}\neq 0$ of $\mathcal{B}_{-,2}$. If this eigenspace is one dimensional
we have found our simultaneous eigenstate of $\mathcal{B}_{-}(\lambda )$
otherwise we can reiterate this procedure. We can have two possibilities
both at the step $n\leq N$ we find that the eigenspace B$_{1,...,n}$ is
one-dimensional or we arrive up to the eigenspace B$_{1,...,N}$ in both the
cases we have (at least) one simultaneous eigenstate of the $\mathcal{B}%
_{-,n}$\ for any\ $n\in \left\{ 1,...,N\right\} $ and so one eigenstate of $%
\mathcal{B}_{-}(\lambda )$. Similarly, we can prove the existence of $%
\langle \Omega _{L}|$.
\end{proof}

From the previous Lemma and the reflection algebra equation we can prove the
following proposition.

\begin{proposition}
Under the condition $\left( \ref{B-non-nilpotent}\right) $ for almost all
the values of the bulk parameters, the operator family $\mathcal{B}%
_{-}(\lambda )$ is diagonalizable and it has simple spectrum and its average
value is central and it holds:%
\begin{equation}
\mathbb{B}_{-}(\lambda )=\prod_{a=1}^{p}\mathcal{B}_{-}(\lambda q^{a})=\text{%
\textsc{b}}_{-}^{p}(\frac{\lambda ^{2p}}{1}-\frac{1}{\lambda ^{2p}}%
)\prod_{a=1}^{\mathsf{N}}(\frac{\lambda ^{p}}{\bar{b}_{-,a}^{p}}-\frac{\bar{b%
}_{-,a}^{p}}{\lambda ^{p}})(\lambda ^{p}\bar{b}_{-,a}^{p}-\frac{1}{\lambda
^{p}\bar{b}_{-,a}^{p}}),
\end{equation}%
with:%
\begin{equation}
\bar{b}_{-,m}^{p}\neq \bar{b}_{-,n}^{p},\text{ }\forall n\neq m\in \left\{
1,...,N\right\} .  \label{Simplicity}
\end{equation}
\end{proposition}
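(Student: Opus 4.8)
The plan is to extend to the present unconstrained situation the explicit construction of the SoV basis of $\mathcal{B}_{-}(\lambda)$-eigenstates carried out in the main text for the quasi-nilpotent case, using the abstract simultaneous eigenvector of the previous lemma in place of the pseudo-vacuum. Starting from $|\Omega _{R}\rangle$, whose $\mathcal{B}_{-}(\lambda)$-eigenvalue is parametrized by the numbers $\bar{b}_{-,a}$, I set $\xi _{n}^{(h)}\equiv \bar{b}_{-,n}q^{h}$ (so $\xi _{n}^{(h+p)}=\xi _{n}^{(h)}$ since $q^{p}=1$) and define, for $\mathbf{h}\in \{0,\dots ,p-1\}^{\mathsf{N}}$, the $p^{\mathsf{N}}$ vectors $|\mathbf{h}\rangle$ by acting on $|\Omega _{R}\rangle$ with suitably normalized products of $\mathcal{D}_{-}$ evaluated at the points $\xi _{n}^{(k)}$, exactly as in \eqref{OpenCyD-right-eigenstates}; dually one builds the covectors $\langle \mathbf{h}|$ from $\langle \Omega _{L}|$ by the recipe \eqref{Left-B-eigenstates}. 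Using only the reflection-algebra commutation relation \eqref{OpenCybYB-AB} (and its $\mathcal{B}_{-}$–$\mathcal{D}_{-}$ analogue) together with the boundary quantum determinant, and repeating verbatim the computation of the main text, one checks that each $|\mathbf{h}\rangle$ is a $\mathcal{B}_{-}(\lambda)$-eigenstate whose eigenvalue is obtained from that of $|\Omega _{R}\rangle$ by the shift $\bar{b}_{-,n}\mapsto \bar{b}_{-,n}q^{h_{n}}$ in the $n$-th pair of factors.

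Next I would restrict the bulk (and inner boundary) parameters to the complement of a proper algebraic subset, so that the genericity hypotheses analogous to \eqref{E-SOV}–\eqref{condition-SoV-2} hold; in particular $\bar{b}_{-,m}^{p}\neq \bar{b}_{-,n}^{p}$ for $m\neq n$, which is \eqref{Simplicity}, together with the non-vanishing of the values of $\mathsf{A}_{-}$ (equivalently of $\text{det}_{q}\mathcal{U}_{-}$) that enter the construction. Each such requirement is a single non-identically-zero polynomial condition on the parameters — it already holds on the quasi-nilpotent slice treated in the main text — so their conjunction cuts out a dense open locus; this is the meaning of ``for almost all values of the bulk parameters''. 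On this locus the $p^{\mathsf{N}}$ eigenvalue functions are pairwise distinct, and the inductive non-vanishing argument of the main text — which relies only on identities of the type $\langle \mathbf{h}_{j}|\mathcal{A}_{-}(\xi _{j}^{(h_{j}+1)})=\mathsf{A}_{-}(\xi _{j}^{(h_{j}+1)})\langle \mathbf{h}|$, a consequence of the quantum determinant — shows that every $|\mathbf{h}\rangle$ is nonzero. Being nonzero eigenvectors attached to distinct eigenvalues, the $p^{\mathsf{N}}$ states $|\mathbf{h}\rangle$ are linearly independent and hence a basis of $\mathcal{H}$: therefore $\mathcal{B}_{-}(\lambda)$ is diagonalizable with simple spectrum.

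It then remains to identify the average value. Since $\mathcal{B}_{-}(\lambda q^{a})$ is diagonal in the basis $\{|\mathbf{h}\rangle \}$ for every $a\in \{1,\dots ,p\}$, the operator $\mathbb{B}_{-}(\lambda)=\prod_{a=1}^{p}\mathcal{B}_{-}(\lambda q^{a})$ is diagonal in the same basis, its eigenvalue on $|\mathbf{h}\rangle$ being the product over $a$ of the corresponding shifted $\mathcal{B}_{-}(\lambda q^{a})$-eigenvalue. Using $q^{p}=1$ together with the elementary identity $\prod_{a=1}^{p}\bigl(zq^{a}-1/(zq^{a})\bigr)=z^{p}-1/z^{p}$ (valid because $\prod_{a=1}^{p}q^{a}=1$ and $q^{2}$ is again a primitive $p$-th root of unity since $p$ is odd), each of the three types of factor collapses to its $p$-th-power form and every $q^{ph_{n}}$ becomes $1$; hence the eigenvalue of $\mathbb{B}_{-}(\lambda)$ on $|\mathbf{h}\rangle$ equals $\text{\textsc{b}}_{-}^{p}(\lambda ^{2p}-1/\lambda ^{2p})\prod_{a=1}^{\mathsf{N}}(\lambda ^{p}/\bar{b}_{-,a}^{p}-\bar{b}_{-,a}^{p}/\lambda ^{p})(\lambda ^{p}\bar{b}_{-,a}^{p}-1/(\lambda ^{p}\bar{b}_{-,a}^{p}))$, independently of $\mathbf{h}$. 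A diagonalizable operator whose spectrum is a single point is that scalar times the identity, so $\mathbb{B}_{-}(\lambda)$ is central and takes the stated value.

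The step I expect to be the main obstacle is the non-vanishing of the $p^{\mathsf{N}}$ constructed states: in the quasi-nilpotent case of the main text this rested on the explicit triangular action of $A,B,C,D$ on the reference states, whereas here only the abstract simultaneous $\mathcal{B}_{-}$-eigenvector of the lemma is at one's disposal, so the induction must be powered purely by the boundary quantum-determinant relations, and one must check that the finitely many inequalities it calls for genuinely define a nonempty — hence dense — Zariski-open subset of the bulk parameters.
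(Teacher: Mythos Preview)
Your proposal is correct and follows essentially the same route as the paper: start from the abstract $\mathcal{B}_{-}$-eigenvector supplied by the preceding lemma, generate the $p^{\mathsf{N}}$ candidate eigenstates by the reflection-algebra commutation relations exactly as in Section~\ref{SoV-Rep-Refle-Alge}, invoke the boundary quantum determinant for the inductive non-vanishing, and verify that the simplicity condition \eqref{Simplicity} and the non-coincidence of $\mathcal{B}_{-}$-zeros with quantum-determinant zeros are not identically violated by specializing to the quasi-nilpotent slice of the main text. Your explicit computation of the average value $\mathbb{B}_{-}(\lambda)$ via the identity $\prod_{a=1}^{p}(zq^{a}-1/(zq^{a}))=z^{p}-1/z^{p}$ is in fact more detailed than what the paper's proof provides; the paper simply takes that formula as stated in the proposition and does not spell out the collapse of the $q^{h_{n}}$-dependence.
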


\begin{proof}
Let us observe that by definition the operator family $\mathcal{B}%
_{-}(\lambda )$ is a polynomial in the bulk parameters so the same must be
true for its spectrum. This implies in particular that defined%
\begin{equation}
\hat{b}_{n,m}=\bar{b}_{-,n}^{p}-\bar{b}_{-,m}^{p},\text{ }\forall n\neq m\in
\left\{ 1,...,N\right\} ,
\end{equation}%
or they are identically zero or they can be zero only over subspaces of
nonzero codimension in the space of the bulk parameters. So here we have
just to prove that the $\hat{b}_{n,m}$ are not identically zero to derive
that $\left( \ref{Simplicity}\right) $ holds for almost all the values of
the parameters. To do so we can just recall that from the results derived in
the previous section it holds:%
\begin{equation}
\bar{b}_{-,m}^{p}=q^{p/2}\mu _{m,+}^{p},\text{ }\forall m\in \left\{
1,...,N\right\} ,
\end{equation}%
under the condition $B(\lambda )$ nilpotent, i.e. $a_{n}^{p}=-b_{n}^{p}$\
for any $n\in \left\{ 1,...,N\right\} $. From which the condition $\left( %
\ref{Simplicity}\right) $ holds as soon as we impose:%
\begin{equation}
\beta _{n}^{p}/\alpha _{n}^{p}\neq \beta _{m}^{p}/\alpha _{m}^{p},\text{ }%
\forall n\neq m\in \left\{ 1,...,N\right\} .
\end{equation}%
Once we have proven this statement, then we have just to use the reflection
algebra to construct an eigenbasis of $\mathcal{B}_{-}(\lambda )$ and this
is done just by a repeated action of the generators $\mathcal{A}_{-}(\lambda
)$ computed in the zeros of $\mathcal{B}_{-}(\lambda )$ on the eigenstates $%
|\Omega _{R}\rangle $ and $\langle\Omega _{L}|$. That is we repeat the
construction of the eigenbasis presented in Section \ref{SoV-Rep-Refle-Alge}
by substituting to the reference states defined in $\left( \ref%
{Ref-state-Con-L/R}\right) $ with the $\mathcal{B}_{-}(\lambda )$ on the
eigenstates $|\Omega _{R}\rangle $ the state $\langle \Omega _{L}|$. Note
that such an action can generate a null vector only if some of the zeros of $%
\mathcal{B}_{-}(\lambda )$ coincides with the zeros of the quantum
determinant anyhow as discussed in Section \ref{SoV-Rep-Refle-Alge} under
the condition $\left( \ref{condition-SoV-2}\right) $ we are always able to
chose an appropriate set of $p^{N}$ zeros of $\mathcal{B}_{-}(\lambda )$
which do not coincides with those of the quantum determinant so that we
generate exactly $p^{N}$ independent states.
\end{proof}

\subsection{The lattice sine-Gordon model with integrable boundaries}

In this appendix we point out that the results on the transfer matrix
spectrum for general cyclic representations developed in this paper apply
also to characterize the spectrum of the transfer matrix of the lattice
sine-Gordon model with integrable open boundary conditions. Let us recall
that the Lax operator defining the lattice sine-Gordon model has the
following form: 
\begin{equation}
L_{a,n}^{sG}(\lambda |\text{\textsc{u}}_{n},\text{\textsc{v}}_{n},\kappa
_{n},r_{n},s_{n})\equiv \left( 
\begin{array}{cc}
\text{\textsc{u}}_{n}\left( q^{-1/2}\kappa _{n}^{2}r_{n}s_{n}\text{\textsc{v}%
}_{n}+\frac{q^{1/2}s_{n}}{\text{\textsc{v}}_{n}r_{n}}\right) & \frac{\kappa
_{n}}{i}\left( \lambda r_{n}\text{\textsc{v}}_{n}-\frac{1}{\lambda r_{n}%
\text{\textsc{v}}_{n}}\right) \\ 
\frac{\kappa _{n}}{i}\left( \frac{\lambda }{r_{n}\text{\textsc{v}}_{n}}-%
\frac{r_{n}\text{\textsc{v}}_{n}}{\lambda }\right) & \text{\textsc{u}}%
_{n}^{-1}\left( \frac{q^{1/2}r_{n}\text{\textsc{v}}_{n}}{s_{n}}+\frac{%
q^{-1/2}\kappa _{n}^{2}}{s_{n}r_{n}\text{\textsc{v}}_{n}}\right)%
\end{array}%
\right) _{a},
\end{equation}%
each local representation being defined as the representation of a local
Weyl algebra%
\begin{equation}
\text{\textsc{u}}_{n}\text{\textsc{v}}_{m}=q^{\delta _{n,m}}\text{\textsc{v}}%
_{m}\text{\textsc{u}}_{n}\text{ \ \ }\forall n,m\in \{1,...,\mathsf{N}\},
\end{equation}%
associated to a root of unit $q$, where \textsc{u}$_{n}$ and \textsc{v}$_{n}$
are the Weyl algebra generators on the Hilbert space $\mathcal{R}_{n}$. Let
us introduce the monodromy matrices for the Yang-Baxter and reflection
equations of the lattice sine-Gordon model, they read:%
\begin{eqnarray}
M_{a}^{sG}(\lambda ) &=&L_{a,\mathsf{N}}(\lambda q^{-1/2}/\xi _{\mathsf{N}%
})\cdots L_{a,1}(\lambda q^{-1/2}/\xi _{1})\in \text{End}(\mathbb{C}%
^{2}\otimes \mathcal{H}), \\
\mathcal{U}_{a,-}^{sG}(\lambda ) &=&M_{a}^{sG}(\lambda )K_{a,-}(\lambda )%
\hat{M}_{a}^{sG}(\lambda )\in \text{End}(\mathbb{C}^{2}\otimes \mathcal{H}),
\end{eqnarray}%
and also the boundary transfer matrix of the sine-Gordon model, which reads:%
\begin{equation}
\mathcal{T}^{sG}(\lambda )\equiv tr_{a}\{K_{a,+}(\lambda )\mathcal{U}%
_{a,-}^{sG}(\lambda )\}.
\end{equation}%
Let us remark that we have used the upper index sG in the above two
monodromy matrices and transfer matrix to point out that they are related to
the sine-Gordon model while we will continue to denote with $M_{a}(\lambda )$%
, $\mathcal{U}_{a,-}(\lambda )$ and $\mathcal{T}(\lambda )$ those associated
to the original $\tau _{2}$-model. The following lemma establishes the
connection between the monodromy and transfer matrices of the sine-Gordon
model and the original $\tau _{2}$-model.

\begin{lemma}
Let us denote $\mathsf{N}=2\mathsf{M}+\mathsf{x}$ with $\mathsf{x}\in
\left\{ 0,1\right\} $ and let us impose the following identification of the
generators of the local Weyl algebras:%
\begin{equation}
\text{\textsc{u}}_{2n+\mathsf{y}}=u_{2n+\mathsf{y}}^{(1-2\mathsf{x})(1-2%
\mathsf{y})},\text{ \ \textsc{v}}_{2n+\mathsf{y}}=v_{2n+\mathsf{y}}^{(1-2%
\mathsf{x})(1-2\mathsf{y})},  \label{local-op-id-sG-t2}
\end{equation}%
with $\mathsf{y}\in \left\{ 0,1\right\} $ and\ $2n+\mathsf{y}\in \left\{
1,...,\mathsf{N}\right\} $. Then the following identity holds:%
\begin{equation}
M_{a}^{sG}(\lambda )=M_{a}(\lambda )\left( \sigma _{a}^{x}\right) ^{\mathsf{x%
}},  \label{YB-monodromy-sG-t2}
\end{equation}%
once we define the parameters of the $\tau _{2}$-model in terms of those of
the lattice sine-Gordon model as it follows:%
\begin{eqnarray}
a_{2n+\mathsf{y}} &=&\kappa _{2n+\mathsf{y}}^{2}\left( r_{2n+\mathsf{y}%
}s_{2n+\mathsf{y}}\right) ^{(1-2\mathsf{x})(1-2\mathsf{y})},\text{ \ \ }%
b_{2n+\mathsf{y}}=\left( \frac{s_{2n+\mathsf{y}}}{r_{2n+\mathsf{y}}}\right)
^{(1-2\mathsf{x})(1-2\mathsf{y})},\text{ \ \ }  \label{Par-1} \\
c_{2n+\mathsf{y}} &=&\left( \frac{r_{2n+\mathsf{y}}}{s_{2n+\mathsf{y}}}%
\right) ^{(1-2\mathsf{x})(1-2\mathsf{y})},\text{ \ \ \ \ \ }d_{2n+\mathsf{y}%
}=\frac{\kappa _{2n+\mathsf{y}}^{2}}{\left( r_{2n+\mathsf{y}}s_{2n+\mathsf{y}%
}\right) ^{(1-2\mathsf{x})(1-2\mathsf{y})}},\text{ \ } \\
\alpha _{2n+\mathsf{y}} &=&\frac{\kappa _{2n+\mathsf{y}}r_{2n+\mathsf{y}%
}^{(1-2\mathsf{x})(1-2\mathsf{y})}}{i\xi _{2n+\mathsf{y}}},\text{ \ \ \ \ \ }%
\beta _{2n+\mathsf{y}}=\frac{\kappa _{2n+\mathsf{y}}\xi _{2n+\mathsf{y}}}{%
ir_{2n+\mathsf{y}}^{(1-2\mathsf{x})(1-2\mathsf{y})}}.  \label{Par-3}
\end{eqnarray}%
Moreover, under the above conditions and imposing the following
identifications:%
\begin{equation}
\tau _{\epsilon }=\epsilon ^{\mathsf{x}}\tau _{\epsilon }^{sG},\text{ }%
\kappa _{\epsilon }=\epsilon ^{\mathsf{x}}\kappa _{\epsilon }^{sG},\text{ }%
\zeta _{\epsilon }=\left( \zeta _{\epsilon }^{sG}\right) ^{\epsilon ^{%
\mathsf{x}}}\text{ for }\epsilon =\pm ,  \label{boundary-par-sG-t2}
\end{equation}%
\newline
of the boundary parameters of the $\tau _{2}$-model and the lattice
sine-Gordon model, we get:%
\begin{equation}
\mathcal{U}_{a,-}(\lambda )=\mathcal{U}_{a,-}^{sG}(\lambda ),\text{ \ }%
\mathcal{T}(\lambda )=\mathcal{T}^{sG}(\lambda ).
\label{Baundary-identities-sG-t2}
\end{equation}
\end{lemma}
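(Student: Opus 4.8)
The plan is to reduce everything to an identity between the two local Lax operators, site by site, and then to propagate that identity through the ordered products defining $M$, $\hat{M}$, $\mathcal{U}_{-}$ and finally $\mathcal{T}$. First I would establish the key per-site identity: inserting the dictionary $(\ref{local-op-id-sG-t2})$ for the Weyl generators and $(\ref{Par-1})$--$(\ref{Par-3})$ for the bulk parameters into $L^{sG}_{a,2n+\mathsf{y}}(\lambda q^{-1/2}/\xi_{2n+\mathsf{y}})$, a direct entry-by-entry comparison with $L_{a,2n+\mathsf{y}}(\lambda q^{-1/2})$ shows the two coincide up to a single $\sigma^{x}_{a}$ whose placement is dictated by the staggering exponent $(1-2\mathsf{x})(1-2\mathsf{y})$: when it is $+1$ one gets $L^{sG}_{a,n}(\lambda q^{-1/2}/\xi_{n})=L_{a,n}(\lambda q^{-1/2})\,\sigma^{x}_{a}$, and when it is $-1$ (where the Weyl generators enter inverted) one gets $L^{sG}_{a,n}(\lambda q^{-1/2}/\xi_{n})=\sigma^{x}_{a}\,L_{a,n}(\lambda q^{-1/2})$. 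The $\xi_{n}$ of $M^{sG}_{a}$ is absorbed precisely into $\alpha_{n},\beta_{n}$ (hence $\gamma_{n},\delta_{n}$) of the dictionary, and the constraints $\gamma_{n}=a_{n}c_{n}/\alpha_{n}$, $\delta_{n}=b_{n}d_{n}/\beta_{n}$ come out automatically.

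Next I would take the ordered product $M^{sG}_{a}(\lambda)=L^{sG}_{a,\mathsf{N}}\cdots L^{sG}_{a,1}$. Since successive sites alternate between the two cases, consecutive factors supply adjacent pairs of the form $(\cdots\sigma^{x}_{a})(\sigma^{x}_{a}\cdots)$ that collapse via $(\sigma^{x}_{a})^{2}=\mathbbm{1}$; every internal $\sigma^{x}_{a}$ disappears, and exactly one survives at the right-hand end precisely when $\mathsf{N}$ is odd, i.e.\ when $\mathsf{x}=1$. This yields $(\ref{YB-monodromy-sG-t2})$, $M^{sG}_{a}(\lambda)=M_{a}(\lambda)(\sigma^{x}_{a})^{\mathsf{x}}$. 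Feeding this into $(\ref{Inverse-M})$, and using $(\sigma^{x}_{a})^{t_{a}}=\sigma^{x}_{a}$, $(\sigma^{y}_{a})^{2}=\mathbbm{1}$ and $\sigma^{y}_{a}\sigma^{x}_{a}\sigma^{y}_{a}=-\sigma^{x}_{a}$, one gets $\hat{M}^{sG}_{a}(\lambda)=(-1)^{\mathsf{x}}(\sigma^{x}_{a})^{\mathsf{x}}\hat{M}_{a}(\lambda)$.

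Then I would assemble $\mathcal{U}^{sG}_{a,-}(\lambda)=M^{sG}_{a}(\lambda)K_{a,-}(\lambda)\hat{M}^{sG}_{a}(\lambda)$ and push the flanking $(\sigma^{x}_{a})^{\mathsf{x}}$'s onto the scalar boundary matrix. Conjugating $K_{a}(\lambda;\zeta,\kappa,\tau)$ by $\sigma^{x}_{a}$ interchanges its two diagonal entries — implemented by $\zeta\mapsto1/\zeta$ — and interchanges its off-diagonal entries — implemented by $\tau\mapsto-\tau$ together with a sign change of $\kappa$; that is exactly the content of $(\ref{boundary-par-sG-t2})$, $\zeta_{\epsilon}=(\zeta^{sG}_{\epsilon})^{\epsilon^{\mathsf{x}}}$, $\tau_{\epsilon}=\epsilon^{\mathsf{x}}\tau^{sG}_{\epsilon}$, $\kappa_{\epsilon}=\epsilon^{\mathsf{x}}\kappa^{sG}_{\epsilon}$, and the accompanying sign is what cancels the $(-1)^{\mathsf{x}}$ coming from $\hat{M}^{sG}_{a}$. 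One then reads off $\mathcal{U}^{sG}_{a,-}(\lambda)=\mathcal{U}_{a,-}(\lambda)$, and finally $\mathcal{T}^{sG}(\lambda)=\mathrm{tr}_{a}\{K_{a,+}(\lambda)\mathcal{U}^{sG}_{a,-}(\lambda)\}=\mathrm{tr}_{a}\{K_{a,+}(\lambda)\mathcal{U}_{a,-}(\lambda)\}=\mathcal{T}(\lambda)$, cyclicity of the auxiliary trace absorbing any residual conjugation; this is $(\ref{Baundary-identities-sG-t2})$.

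The conceptual content is light — the whole statement is a careful matching exercise — so the main obstacle is purely the bookkeeping, concentrated in the per-site identity and in the $K$-matrix conjugation: keeping straight which half of the sites carry inverted Weyl generators, on which side $\sigma^{x}_{a}$ sits in each case, and tracking the powers of $q^{1/2}$, the factors of $i$, and the signs so that the $(-1)^{\mathsf{x}}$ generated by the $\sigma^{y}$-conjugation in $\hat{M}^{sG}_{a}$ is exactly neutralised by the sign produced when conjugating $K_{a,-}$, leaving no spurious overall factor in $\mathcal{U}^{sG}_{a,-}$ or $\mathcal{T}^{sG}$.
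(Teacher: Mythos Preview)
Your proposal is correct and follows essentially the same route as the paper's proof. The only cosmetic difference is in packaging the per-site step: the paper first proves $L^{sG}_{a,n}(\lambda/\xi_n|\text{\textsc{u}}_n,\text{\textsc{v}}_n,\kappa_n,r_n,s_n)=L_{a,n}(\lambda)\sigma^x_a$ under the basic dictionary and then invokes the symmetry $\sigma^x_a L^{sG}_{a,n}(\lambda|\text{\textsc{u}},\text{\textsc{v}},\kappa,r,s)\sigma^x_a=L^{sG}_{a,n}(\lambda|\text{\textsc{u}}^{-1},\text{\textsc{v}}^{-1},\kappa,r^{-1},s^{-1})$ to handle the inverted sites, inserting $\sigma^x_a$ pairs into the ordered product; you reach the same conclusion by stating directly on which side of $L_{a,n}$ the $\sigma^x_a$ sits depending on the sign of $(1-2\mathsf{x})(1-2\mathsf{y})$, and the subsequent treatment of $\hat M^{sG}_a$, $K_{a,-}$ and the trace is identical.
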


\begin{proof}
It is simple to observe that the following identities hold:%
\begin{equation}
L_{a,n}^{sG}(\lambda /\xi _{n}|\text{\textsc{u}}_{n},\text{\textsc{v}}%
_{n},\kappa _{n},r_{n},s_{n})=L_{a,n}(\lambda )\sigma _{a}^{x}
\label{YB-Lax-sG-t2}
\end{equation}%
if:%
\begin{equation}
\text{\textsc{u}}_{n}=u_{n},\text{ \ \textsc{v}}_{n}=v_{n},
\end{equation}%
and%
\begin{eqnarray}
a_{n} &=&\kappa _{n}^{2}r_{n}s_{n},\text{ \ }b_{n}=s_{n}/r_{n},\text{ \ }%
c_{n}=r_{n}/s_{n},  \label{sG-t2-par1} \\
d_{n} &=&\kappa _{n}^{2}/(r_{n}s_{n}),\text{ \ }\alpha _{n}=-i\kappa
_{n}r_{n}/\xi _{n},\text{ \ }\beta _{n}=-i\kappa _{n}\xi _{n}/r_{n}.
\label{sG-t2-par2}
\end{eqnarray}%
Similarly by direct computations it is easy to show that defined:%
\begin{equation}
\tilde{L}_{a,n}^{sG}(\lambda )\equiv \sigma _{a}^{x}L_{a,n}^{sG}(\lambda |%
\text{\textsc{u}}_{n},\text{\textsc{v}}_{n},\kappa _{n},r_{n},s_{n})\sigma
_{a}^{x}
\end{equation}%
it holds:%
\begin{equation}
\tilde{L}_{a,n}^{sG}(\lambda )=L_{a,n}^{sG}(\lambda |\text{\textsc{u}}%
_{n}^{-1},\text{\textsc{v}}_{n}^{-1},\kappa _{n},r_{n}^{-1},s_{n}^{-1}).
\label{YB-Lax-sG-to-sG}
\end{equation}%
Let us now observe that for $\mathsf{x}=0$, we can write:%
\begin{equation}
M_{a}^{sG}(\lambda )=[L_{a,2\mathsf{M}}(\frac{\lambda q^{-1/2}}{\xi _{2%
\mathsf{M}}})\sigma _{a}^{x}][\tilde{L}_{a,2\mathsf{M}-1}(\frac{\lambda
q^{-1/2}}{\xi _{2\mathsf{M}-1}})\sigma _{a}^{x}]\cdots \lbrack L_{a,2}(\frac{%
\lambda q^{-1/2}}{\xi _{2}})\sigma _{a}^{x}][\tilde{L}_{a,1}(\frac{\lambda
q^{-1/2}}{\xi _{1}})\sigma _{a}^{x}]
\end{equation}%
while for $\mathsf{x}=1$, we can write:%
\begin{equation}
M_{a}^{sG}(\lambda )\sigma _{a}^{x}=[L_{a,2\mathsf{M}+1}(\frac{\lambda
q^{-1/2}}{\xi _{2\mathsf{M}+1}})\sigma _{a}^{x}][\tilde{L}_{a,2\mathsf{M}}(%
\frac{\lambda q^{-1/2}}{\xi _{2\mathsf{M}}})\sigma _{a}^{x}]\cdots \lbrack 
\tilde{L}_{a,2}(\frac{\lambda q^{-1/2}}{\xi _{2}})\sigma _{a}^{x}][L_{a,1}(%
\frac{\lambda q^{-1/2}}{\xi _{1}})\sigma _{a}^{x}],
\end{equation}%
then these two identities together with the identity $\left( \ref%
{YB-Lax-sG-t2}\right) $, $\left( \ref{YB-Lax-sG-to-sG}\right) $ and the
parametrization $\left( \ref{sG-t2-par1}\right) $-$\left( \ref{sG-t2-par2}%
\right) $ imply the identity $\left( \ref{YB-monodromy-sG-t2}\right) $ with
the parametrization $\left( \ref{Par-1}\right) $-$\left( \ref{Par-3}\right) $%
and the local operator identification $\left( \ref{local-op-id-sG-t2}\right) 
$. Finally, let us observe that from the identity $\left( \ref%
{YB-monodromy-sG-t2}\right) $ it follows:%
\begin{equation}
\hat{M}_{a}^{sG}(\lambda )\equiv (-1)^{\mathsf{N}}\,\sigma _{a}^{y}\,\left(
M_{a}^{sG}(1/\lambda )\right) ^{t_{0}}\,\sigma _{a}^{y}=\left( -\sigma
_{a}^{x}\right) ^{\mathsf{x}}\hat{M}_{a}(\lambda ),
\end{equation}%
which together with:%
\begin{equation}
K_{a,+}(\lambda |\tau _{\epsilon },\text{ }\kappa _{\epsilon },\text{ }\zeta
_{\epsilon })=\left( \sigma _{a}^{x}\right) ^{\mathsf{x}}K_{a,+}^{sG}(%
\lambda |\tau _{-}^{sG},\kappa _{-}^{sG},\zeta _{-}^{sG})\left( -\sigma
_{a}^{x}\right) ^{\mathsf{x}},
\end{equation}%
holding under the parametrization $\left( \ref{boundary-par-sG-t2}\right) $,
implies the identities $\left( \ref{Baundary-identities-sG-t2}\right) $.
\end{proof}

\subsection{Reduction to inhomogeneous chiral Potts representations}

In this appendix we want to point out that a nontrivial class of
representations of the inhomogeneous chiral Potts model can be described on
the closed chain in the space of the parameters of the $\tau _{2}$-model
considered in this paper. In order to do so let us recall that the transfer
matrix $\mathsf{T}_{\lambda }^{{\small \text{chP}}}$ of the inhomogeneous
chiral Potts model \cite{OpenCyBS90} is characterized by the following
kernel:%
\begin{equation}
\mathsf{T}_{\lambda }^{{\small \text{chP}}}(\text{z},\text{z}^{\prime
})\equiv \langle \text{z}|\mathsf{T}_{\lambda }^{{\small \text{chP}}}|\text{z%
}^{\prime }\rangle =\prod_{n=1}^{\mathsf{N}}W_{\text{q}_{n}\text{p}%
}(z_{n}/z_{n}^{\prime })\bar{W}_{\text{r}_{n}\text{p}}(z_{n}/z_{n+1}^{\prime
}),  \label{chPFaxVkernel}
\end{equation}%
in the basis $\langle $z$|\equiv \langle z_{1},...,z_{N}|$ and $|$z$\rangle
\equiv |z_{1},...,z_{N}\rangle $ formed by the left and right $u_{n}$%
-eigenstates:%
\begin{equation}
\langle \text{z}|u_{n}=z_{n}\langle \text{z}|,\text{ \ \ \ }u_{n}|\text{z}%
\rangle =z_{n}|\text{z}\rangle \text{ with }z_{n}\in \mathbb{S}_{p}\equiv
\{q^{2r},\text{ }r=1,..,p\},
\end{equation}%
and where:%
\begin{equation}
\lambda =t_{\text{p}}^{-1/2}\mathsf{c}_{0},\text{ \ \ p, r}_{n}\text{, q}%
_{n}\in \mathcal{C}_{k},\mathsf{c}_{0}\in \mathbb{C}\text{.}
\end{equation}%
The algebraic curve $\mathcal{C}_{k}$ of modulus $k$ is by definition the
locus of the points p $\equiv (a_{\text{p}},b_{\text{p}},c_{\text{p}},d_{%
\text{p}})\in \mathbb{C}^{4}$ which satisfy the equations: 
\begin{equation}
x_{\text{p}}^{p}+y_{\text{p}}^{p}=k(1+x_{\text{p}}^{p}y_{\text{p}}^{p}),%
\text{ \ \ }kx_{\text{p}}^{p}=1-k^{^{\prime }}s_{\text{p}}^{-p},\text{ \ \ }%
ky_{\text{p}}^{p}=1-k^{^{\prime }}s_{\text{p}}^{p},  \label{chPFaxVcurve-eq}
\end{equation}%
where:%
\begin{equation}
x_{\text{p}}\equiv a_{\text{p}}/d_{\text{p}},\text{ \ }y_{\text{p}}\equiv b_{%
\text{p}}/c_{\text{p}},\text{ \ }s_{\text{p}}\equiv d_{\text{p}}/c_{\text{p}%
},\,t_{\text{p}}\equiv x_{\text{p}}y_{\text{p}},\text{ \ }k^{2}+(k^{^{\prime
}})^{2}=1,
\end{equation}%
and $W_{\text{qp}}(z(n))$ and $\bar{W}_{\text{qp}}(z(n))$ are the Boltzmann
weights of the chiral Potts model. Then, \textsf{T}$_{\lambda }^{{\small 
\text{chP}}}$ is a Baxter $\mathsf{Q}$-operator w.r.t. the bulk $\tau _{2}$%
-transfer matrix in $\mathcal{H}_{\mathsf{N}}$.

Let us here directly characterize the class of the inhomogeneous chiral
Potts representations once we restrict the space of the parameters to that
used in the section \ref{functional equation}; in particular, we assume that
it holds:%
\begin{equation}
b_{n}=-q^{-1}a_{n},\text{ \ }d_{n}=-q^{-1}c_{n}.  \label{restriction1-chP}
\end{equation}%
The parameters of the $\tau _{2}$-Lax operators are written in terms of the
coordinate of the points p, r$_{n}$, q$_{n}$ by using the equations (5.3) of
the paper \cite{OpenCyGN12}. Then we have that the points r$_{n}$, q$_{n}$
are elements of $\mathcal{C}_{k}$ if and only if beyond $\left( \ref%
{restriction1-chP}\right) $ the parameters of the $\tau _{2}$-Lax operators
satisfy the following conditions:%
\begin{equation}
\alpha _{n}\beta _{n}=a_{n}c_{n}
\end{equation}%
and%
\begin{equation}
\left( \frac{\mathsf{c}_{0}\alpha _{n}}{q^{1/2}a_{n}}\right) ^{p}+\left( 
\frac{q^{1/2}\mathsf{c}_{0}\alpha _{n}}{c_{n}}\right) ^{p}=k\left( 1+\left( 
\frac{\mathsf{c}_{0}^{2}\alpha _{n}^{2}}{c_{n}a_{n}}\right) ^{p}\right) .
\end{equation}%
Under these constraints the class of the inhomogeneous chiral Potts
representations is characterized by the following identity:%
\begin{equation}
\text{r}_{n}=\Delta (\text{q}_{n}),\text{ \ \ }\forall n\in \{1,...,\mathsf{N%
}\},
\end{equation}%
where the q$_{n}$ are free elements of $\mathcal{C}_{k}$ and $\Delta $ is
the following discrete automorphism of the curve:%
\begin{equation}
\Delta :\text{x}=(a_{\text{x}},b_{\text{x}},c_{\text{x}},d_{\text{x}})\in 
\mathcal{C}_{k}\rightarrow \Delta (\text{x})=(b_{\text{x}},a_{\text{x}},d_{%
\text{x}},c_{\text{x}})\in \mathcal{C}_{k},
\end{equation}%
which implies:%
\begin{equation}
x_{\text{p}_{n}}=y_{\text{q}_{n}},\text{ }y_{\text{p}_{n}}=x_{\text{q}_{n}},%
\text{ }s_{\text{p}_{n}}=s_{\text{q}_{n}}^{-1}.
\end{equation}%
Finally, this class of representations reduce to the superintegrable chiral
Potts model under the following special homogeneous limits:%
\begin{equation}
x_{\text{q}_{n}}^{p}\rightarrow \left( 1+k^{\prime }\right) /k,\text{ \ \ }%
\forall n\in \{1,...,\mathsf{N}\}.
\end{equation}

\subsection{Reduction to the XXZ spin $s$ open chains at the $p=2s+1$ roots
of unit}

Here we show that imposing a set of conditions on the parameters of the $%
\tau _{2}$-Lax operator we can reduce it to the one of the spin $s=\left(
p-1\right) /2$ XXZ case at the $p$ roots of unit. This has the interesting
consequence that the analysis done of the open $\tau _{2}$-chain reduces for
these special representations to that of an open spin chain under the same
boundary conditions. In particular, we have that our functional equation
characterization of the spectrum under a special homogeneous limit defines
the spectrum of the following local Hamiltonian given by fusion and the
Sklyanin formula:%
\begin{equation}
\mathcal{H}_{s}=c_{0}\frac{d}{d\lambda }\mathcal{T}_{p}(\lambda )_{\,\vrule %
height13ptdepth1pt\>{\lambda =q}^{s}}+\text{constant,}
\end{equation}%
with%
\begin{equation}
\mathcal{H}_{s}=\sum_{n=1}^{\mathsf{N}-1}H_{n,n+1}^{\left( s\right) }+\frac{d%
}{d\lambda }K_{1,-}^{(p)}(q^{s})+\frac{\text{tr}_{\mathsf{0}}\{K_{\mathsf{0}%
,+}^{(p)}(q^{s})H_{\mathsf{0},\mathsf{N}}^{\left( s\right) }\}}{\text{tr}_{%
\mathsf{0}}\{K_{\mathsf{0},+}^{(p)}(q^{s})\}},
\end{equation}%
where $\mathcal{T}_{p}(\lambda )$ is the $p$-fused open transfer matrix, $%
H_{n,n+1}^{\left( s\right) }$ is the two sites local Hamiltonian of the spin 
$s$ XXZ chain, $K_{\mathsf{0},\pm }^{(p)}(\lambda )$ are the $p\times p$
matrices $p$-fused scalar solutions of the reflection algebra obtained by
doing the fusion $p-1$ times starting from the original $2\times 2$ scalar
solutions $K_{0,\pm }(\lambda )$, respectively.

\begin{lemma}
Let us fix the parameters of the $\tau _{2}$-representations as follows:%
\begin{eqnarray}
\alpha _{n} &=&\beta _{n}=1/2,\text{ }a_{n}=q^{-1/2}/2i, \\
b_{n} &=&iq^{1/2}/2,\text{ }c_{n}=q^{-1/2}/2i,\text{ }d_{n}=iq^{1/2}/2,
\end{eqnarray}%
and let%
\begin{equation}
L_{a,n}^{XXZ}(\lambda )=\left( 
\begin{array}{cc}
\left[ \lambda q^{s+S_{n}^{z}/2}-1/(\lambda q^{s+S_{n}^{z}/2})\right] /2 & 
S_{n}^{-} \\ 
S_{n}^{+} & \left[ \lambda q^{s-S_{n}^{z}/2}-1/(\lambda q^{s-S_{n}^{z}/2})%
\right] /2%
\end{array}%
\right)
\end{equation}%
be the Lax operator of the spin $s$ XXZ chain with anisotropy cosh$\,\eta
=\left( q+1/q\right) /2$, then it holds:%
\begin{equation}
L_{a,n}^{XXZ}(\lambda )=L_{a,n}(\lambda /q)
\end{equation}%
for $s=\left( p-1\right) /2$ and $q=e^{i\pi p^{\prime }/p}$ with $p^{\prime
} $ even and $p$ odd and coprime, which is equivalent to the following
identities among the generators of the local algebras:%
\begin{equation}
S_{n}^{+}=u_{n}^{-1}\left( v_{n}-1/v_{n}\right) /2i,\text{ \ }%
S_{n}^{-}=u_{n}\left( v_{n}/q-q/v_{n}\right) /2i,
\label{S_+-u-v-identification}
\end{equation}%
and%
\begin{equation}
S_{n}^{z}=\frac{2p}{i\pi p^{\prime }}\log v_{n}-(p+1)\in
\{-2s,-2(s-1),...,2s\}\text{ \ mod\thinspace }2p
\label{S_z-v-identification}
\end{equation}
\end{lemma}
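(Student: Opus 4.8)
The plan is a direct computation. First I would specialize the Bazhanov--Stroganov Lax operator to the announced parameters: with $\alpha_n=\beta_n=1/2$, $a_n=c_n=q^{-1/2}/(2i)$ and $b_n=d_n=iq^{1/2}/2$, the constraints $\gamma_n=a_nc_n/\alpha_n$ and $\delta_n=b_nd_n/\beta_n$ give $\gamma_n=-q^{-1}/2$ and $\delta_n=-q/2$. Substituting these into $L_{a,n}(\mu)$ at $\mu=\lambda/q$ and collecting the powers of $q^{\pm 1/2}$ and of $i$, I expect the four entries to collapse to
$$
L_{a,n}(\lambda/q)_{11}=\frac12\bigl(\lambda q^{-1}v_n-\lambda^{-1}q\,v_n^{-1}\bigr),\qquad
L_{a,n}(\lambda/q)_{22}=\frac12\bigl(\lambda v_n^{-1}-\lambda^{-1}v_n\bigr),
$$
$$
L_{a,n}(\lambda/q)_{12}=\frac{1}{2i}\,u_n\bigl(q^{-1}v_n-q\,v_n^{-1}\bigr),\qquad
L_{a,n}(\lambda/q)_{21}=\frac{1}{2i}\,u_n^{-1}\bigl(v_n-v_n^{-1}\bigr).
$$
The off-diagonal entries are already, verbatim, the right-hand sides of $(\ref{S_+-u-v-identification})$ for $S_n^{-}$ and $S_n^{+}$.

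Next I would treat the diagonal part. I introduce $S_n^z$ via $(\ref{S_z-v-identification})$, equivalently $v_n=q^{(p+1+S_n^z)/2}$; this is legitimate because $S_n^z$ is even and $p$ is odd, and it reproduces the spectrum $q^k$, $k\in\{-l,\dots,l\}$, of $v_n$ on the $p$-dimensional Weyl module. Then $q^{-1}v_n=q^{(p-1+S_n^z)/2}=q^{s+S_n^z/2}$, so the $(1,1)$ entry becomes $\tfrac12(\lambda q^{s+S_n^z/2}-\lambda^{-1}q^{-s-S_n^z/2})$. For the $(2,2)$ entry I would use that $p'$ even forces $q^p=e^{i\pi p'}=1$, equivalently $2s+1=p$; hence $v_n^{-1}=q^{-(p+1+S_n^z)/2}=q^{-p}q^{(p-1-S_n^z)/2}=q^{s-S_n^z/2}$, and the $(2,2)$ entry becomes $\tfrac12(\lambda q^{s-S_n^z/2}-\lambda^{-1}q^{-s+S_n^z/2})$. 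These coincide with the diagonal entries of $L_{a,n}^{XXZ}(\lambda)$, so $L_{a,n}^{XXZ}(\lambda)=L_{a,n}(\lambda/q)$, and this matrix identity is visibly equivalent to the operator identities $(\ref{S_+-u-v-identification})$--$(\ref{S_z-v-identification})$.

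The computation has essentially no obstacle; the one substantive point is the identity $q^p=1\Leftrightarrow 2s+1=p$, which is exactly what makes the $(1,1)$ and $(2,2)$ specializations compatible with a \emph{single} operator $S_n^z$ --- without it these two entries would force two incompatible relations between $v_n$ and $S_n^z$. It then remains only to observe that $S_n^z$ and $S_n^{\pm}$ defined this way realize the cyclic spin-$s=(p-1)/2$ representation of $U_q(sl_2)$ at the $p$-th root of unity: this is the standard presentation of that $p$-dimensional module on functions of one Weyl pair, and from $S_n^z\equiv 2k-(p+1)\pmod{2p}$ with $k\in\{-s,\dots,s\}$ one reads off the announced spectrum $S_n^z\in\{-2s,-2(s-1),\dots,2s\}\bmod 2p$. (If one prefers the opposite sign $q=e^{-i\pi p'/p}$ used earlier in the paper, one simply replaces $p'$ by $2p-p'$, which is again even and coprime to $p$.)
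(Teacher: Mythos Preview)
Your proof is correct and follows essentially the same direct-computation route as the paper. The only organizational difference is that the paper verifies the operator identities $(\ref{S_+-u-v-identification})$--$(\ref{S_z-v-identification})$ by first writing the explicit $p\times p$ matrices for $S_n^{\pm},S_n^{z}$ in the canonical spin-$s$ basis, fixing the identification $|p-a,n\rangle=\overline{|a+1,n\rangle}$ of $v_n$-eigenvectors with that basis, and then checking the identities vector by vector; you instead work purely at the operator level in the Weyl algebra, reading the identities off from the Lax-matrix entries and invoking $q^{p}=1$ to reconcile the two diagonal entries. Your approach is a bit more streamlined, while the paper's makes the match with the standard spin-$s$ module (including the coefficients $f(j)=i(q^{j}-q^{-j})/2$) fully explicit --- which is the content you defer to ``the standard presentation of that $p$-dimensional module''.
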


\begin{proof}
Let us denote 
\begin{equation}
\overline{\left\vert a,n\right\rangle }=\left( 
\begin{array}{ccccc}
0 & \cdots & 1 & \cdots & 0%
\end{array}%
\right) ^{t_{0}},\text{ }a\in \{1,...,2s+1\}
\end{equation}%
the element $a$ of the canonical basis given by the column vector with all
elements zero except that in row $a$ which is 1. In this basis we have the
following representations for the generators%
\begin{equation}
S_{n}^{+}=\left( 
\begin{array}{cccc}
0 & f(1) &  &  \\ 
& \ddots &  &  \\ 
&  & \ddots & f(2s) \\ 
&  &  & 0%
\end{array}%
\right) ,\text{ \ \ }S_{n}^{-}=\left( 
\begin{array}{cccc}
0 &  &  &  \\ 
f(1) & \ddots &  &  \\ 
& \ddots & \ddots &  \\ 
&  & f(2s) & 0%
\end{array}%
\right) ,
\end{equation}%
where:%
\begin{equation}
f(j)=\sqrt{\sinh j\eta \sinh (p-j)\eta }=i(q^{j}-q^{-j})/2,
\end{equation}%
and the second identity holds for $q^{p}=1$, and:%
\begin{equation}
S_{n}^{z}=\left( 
\begin{array}{cccc}
2s & 0 &  &  \\ 
0 & \ddots & \ddots &  \\ 
& \ddots & \ddots & 0 \\ 
&  & 0 & -2s%
\end{array}%
\right) .
\end{equation}

Let us now impose that in our representation the $v_{n}$-eigenstates
coincide with the elements of the canonical basis:%
\begin{equation}
\left\vert p-a,n\right\rangle =\overline{\left\vert a+1,n\right\rangle }%
\text{ \ }\forall a\in \{0,...,p-1\},
\end{equation}%
we can verify now the formulae $\left( \ref{S_+-u-v-identification}\right) $-%
$\left( \ref{S_z-v-identification}\right) $. The formula $\left( \ref%
{S_z-v-identification}\right) $ is equivalent to:%
\begin{equation}
v_{n}=q^{(S_{n}^{z}+p+1)/2}
\end{equation}%
which holds for the following identities:%
\begin{equation}
q^{(S_{n}^{z}+p+1)/2}\overline{\left\vert a+1,n\right\rangle }=\overline{%
\left\vert a+1,n\right\rangle }q^{(2(s-a)+p+1)/2}=\left\vert
p-a,n\right\rangle q^{p-a}=v_{n}\left\vert p-a,n\right\rangle .
\end{equation}%
Similarly we have:%
\begin{eqnarray}
S_{n}^{+}\overline{\left\vert a,n\right\rangle } &=&\overline{\left\vert
a+1,n\right\rangle }f(a)=\overline{\left\vert a+1,n\right\rangle }%
(q^{-a}-q^{a})/2i \\
&=&\left[ u_{n}^{-1}\left( v_{n}-1/v_{n}\right) /2i\right] \left\vert
p-a,n\right\rangle \text{ \ }\forall a\in \{1,...,p\}
\end{eqnarray}%
and%
\begin{eqnarray}
S_{n}^{-}\overline{\left\vert a,n\right\rangle } &=&\overline{\left\vert
a-1,n\right\rangle }f(a-1)=\overline{\left\vert a-1,n\right\rangle }%
(q^{1-a}-q^{a-1})/2i \\
&=&\left[ u_{n}\left( v_{n}/q-q/v_{n}\right) /2i\right] \left\vert
p+2-a,n\right\rangle \text{ \ }\forall a\in \{1,...,p\}.
\end{eqnarray}
\end{proof}

\end{document}